\numberwithin{equation}{section} 
\newtheorem{assumption}{Assumption}
\newtheorem{theorem}{Theorem}
\newtheorem{example}{Example}
\newtheorem{lemma}{Lemma}
\newtheorem{remark}{Remark}
\newtheorem{algorithm}{Algorithm}
\newcommand{\plim}{\operatorname*{plim}}
\newcommand{\diag}{{\rm diag}}
\newcommand{\argmin}{\operatorname*{argmin}}
\newcommand{\argmax}{\operatorname*{argmax}}
\newcommand{\E}{\mathbb{E}}
\newcommand{\EE}{\overline{\mathbb{E}}}
\newcommand{\Ep}{\mathbb{E}}
\newcommand{\mD}{\mathcal{D}}
\newcommand{\mB}{\mathcal{B}}
\def\ft#1#2{{\textstyle {\frac{#1}{#2}} }}
\begin{document}

\title{Nonlinear Factor Models for Network and Panel Data\thanks{
       Preliminary versions of this paper were presented at several conferences and seminars.
       %the WISE International Symposium on Analysis of Panel Data in June 2013, the SNU Workshop on Advances in Microeconomics in May 2015,
       %the Econometric Society World Congress in Montreal in August 2015,
       %the INET and CeMMAP conference on Big Data Methods in Cambridge in September 2015,
       %and CFE conference in London in December 2015.       
       We thank the participants to these presentations, the editor, an associate editor, two anonymous referees, Shuowen Chen, Riccardo D'Adamo, Siyi Luo and Carlo Perroni for helpful comments.   
  %     \Add Greece conference?)} 
  Fern\'andez-Val  gratefully acknowledges support from the National Science Foundation, and Spanish State Research Agency MDM-2016-0684 under the Mar\'ia de Maeztu Unit of Excellence Program. 
       Weidner gratefully acknowledges support from the Economic and Social Research Council through the ESRC Centre for Microdata Methods and Practice grant RES-589-28-0001 and from the European Research Council grant ERC-2014-CoG-646917-ROMIA.}
     }

\author{\setcounter{footnote}{2} 
       Mingli Chen\footnote{Department of Economics, University of Warwick, Gibbet Hill Road, Coventry CV4 7AL, UK. 
       Email: {\tt m.chen.3@warwick.ac.uk}
      }
    \and 
        Iv\'{a}n Fern\'{a}ndez-Val\footnote{
     Department of Economics, Boston University,
     270 Bay State Road,
     Boston, MA 02215-1403, USA.
     Email: {\tt ivanf@bu.edu}
      }
   \and
      Martin Weidner\footnote{
                   Department of Economics,
                   University College London,
                   Gower Street,
                   London WC1E~6BT,
                   UK,
                   and CeMMAP.
                   Email: {\tt m.weidner@ucl.ac.uk}
                   }
                   }

\date{\bigskip \today}

\maketitle

\abstract{\noindent
Factor structures or interactive effects are convenient devices to incorporate latent variables in panel data models. We consider fixed effect estimation of nonlinear panel  single-index models with factor structures in the unobservables, which include logit, probit, ordered probit and Poisson specifications. We establish that  fixed effect estimators of model parameters and average partial effects have normal distributions when the two dimensions of the panel grow large, but might suffer from incidental parameter bias. We show how models with  factor structures can also be applied to capture important features of network data such as reciprocity, degree heterogeneity, homophily in latent variables, and clustering. We illustrate this applicability with an empirical example to the estimation of a gravity equation of international trade between countries using a Poisson model with multiple factors.

\bigskip

%{\color{blue}
%{\footnotesize
%This paper  considers estimation and inference on semiparametric nonlinear panel single index models with predetermined explanatory variables and interactive individual and time effects. These include static and dynamic probit, logit, and Poisson models.  Fixed effects conditional maximum likelihood estimation  is challenging because the log likelihood function is not concave in the individual and time effects. We propose an iterative two-step procedure to maximize the likelihood that is concave in each step. Under asymptotic sequences where both the cross section and time series dimensions of the panel pass to infinity at the same rate, we show that the fixed effects conditional maximum likelihood estimator is consistent, but it has bias in the asymptotic distribution due to the incidental parameter problem. We characterize the bias and develop analytical and jackknife bias corrections that remove the bias from the asymptotic distribution without increasing variance.  In numerical examples, we find that the corrections substantially reduce the bias and rmse of the estimator in small samples, and produce confidence intervals with coverages that are close to their nominal levels. 
%}
%}

{\bfseries Keywords:} Panel data, network data, interactive fixed effects, factor models, bias correction, incidental parameter problem, gravity equation 

{\bfseries JEL:} C13, C23.

\newpage

\section{Introduction}

Factor structures or interactive effects are convenient devices to incorporate latent variables in panel data models. They are commonly used to capture  aggregate shocks that might have heterogeneous impacts on the agents  in macroeconomic models, and multidimensional individual heterogeneity that might have time varying effects in microeconomic models.  More generally, the inclusion of these structures serves to account for dependences along the cross-section and time series dimensions in a parsimonious fashion.  While methods for linear factor models  are well-established, there are very few studies that develop methods for nonlinear factor models. (We provide a literature review at the end of this section.) Nonlinear models are commonly used when the outcome variable is discrete or has a limited support. In this paper we introduce factor structures in single-index nonlinear specifications such as the logit, probit, ordered probit and Poisson models. 

\medskip

The model that we consider is semiparametric. It includes an outcome, strictly exogenous covariates, and a fixed number of factors and factor loadings. The parametric part is the distribution of the outcome conditional on the covariates, factors and loadings, which is specified up to a finite dimensional parameter. The nonparametric part is the distribution of the factors and loadings conditional on the covariates. In other words, our model is of the ``fixed effects'' type because we do not impose any restriction on the relationship between the observed covariates and the unobserved factors and loadings. This flexibility allows us to capture features of economic behavior  more realistically, but poses important challenges to estimation and inference. The objects of interest are the model parameter and average partial effects (APEs), which  are averages of functions of the data,  parameter, factors and loadings. The APEs measure the effect of covariates on moments of the outcome conditional on the covariates, factors and loadings. We consider a fixed effects estimation approach that treats the factors and loadings as parameters to be estimated.  As it is well-known in the panel data literature, the resulting estimators generally suffer from the incidental parameter problem  coming from the high-dimensionality of the estimated parameter \citep{NeymanScott1948}.   

\medskip

We derive asymptotic theory for our estimators of the model parameter and APEs under sequences where the two dimensions of the panel pass to infinity with the sample size. Even establishing consistency is complicated in our setting because the dimension of the estimated parameters increases with the sample size. We develop a new proof of consistency that relies on concavity of the log-likelihood function  on a single-index that captures the dependence on covariates, parameter, factors and loadings. However, unlike \cite{FW16}, we need to deal with the complication that our log-likelihood function is not concave in all the estimated parameters because the factors and loadings enter multiplicatively in the index.  We also establish that our estimators are normally distributed in large samples, but might have biases of the same order as their standard deviations. For example, we find that the estimator of the model parameter is asymptotically unbiased in the Poisson model, but is biased in logit and probit models. Following the recent panel data literature,  we develop analytical and split-sample corrections for the case where the estimator has asymptotic bias.  One specific feature of our estimator is that the bias depends on the number of factors. In particular, we show  that the bias grows proportionally with the number of factors in examples. 

\medskip

We discuss implementation details of our methods including the computation of the estimator and selection of the number of factors. Thus, we propose an EM-type algorithm  based on \cite{C14} and a concrete proposal to estimate the number of factors based on the eigenvalue ratio test of  \cite{AhnHorenstein2013}. 
The estimator of the number of factors requires to specify an upper bound for the number of factors, but does not rely on any arbitrary choice of penalty function or other tuning parameter.  We do not provide asymptotic theory for this estimator, but show that it  performs well in numerical simulations.
Formally deriving the theory is rather challenging, because it requires to study the
asymptotic properties of the initial fixed effects estimators of the parameters and factor structure obtained from a specification with too many factors, which is a difficult problem even in linear panel factor model
(\citealt{MoonWeidner2015}). We  leave this analysis to future research.

\medskip

We also introduce factor structures as practical tools to model network data. We show how the inclusion of latent factors is useful to incorporate important features of the network such as reciprocity, degree heterogeneity, homophily on latent variables, and clustering \citep{S11, G15}. 
 We focus on directed networks with  unweighted and weighted outcomes. These cover binary response models for network formation where the outcome is an indicator for the existence of a  link between  sender and receiver, and count data models for  network flows where the outcome is a measure of the volume of flow between  sender and receiver. As we shall discuss, our factor model provides a parsimonious reduced-form specification that captures the important network features  mentioned above.  The statistical treatment of the network factor model is identical to the panel factor model after noticing that a network is isomorphic to a panel after labeling the senders as individuals and the receivers as time periods. 

\medskip

We illustrate the use of the factor structure in network data with an application to gravity equations of trade between countries. We estimate a Poisson model where the outcome is the volume of trade and the covariates include typical gravity variables such as the distance between the countries or whether the country pair  belongs to a currency union or a free trade area. The unobserved factors and loadings serve to account for scale and multilateral resistance effects, unobserved partnerships, presence of multinational firms, and differences in natural resources or industrial composition. We find that accounting for these multiple unobserved factors changes the effects of the gravity variables, making all of them to have the expected signs while keeping most of them to be statistically significant.

\paragraph{Literature review: }  This paper contributes to the econometric panel data and network data literatures.  Regarding the panel data literature, our statistical analysis relies on the recent developments in fixed effects methods. We refer to \cite{ARE} for a recent review on fixed effects estimation of nonlinear panel models with additive individual and time effects, and to \cite{BP16} for a recent review on fixed effects estimation of linear factor or interactive effects panel models. Since the first draft of this paper appeared in \cite{CFW14}, \cite{BL17} and \cite{AB16} have considered special cases of nonlinear factor models. \cite{BL17} analyzed a probit model using the common correlated random effects approach of \cite{Pesaran2006}, and \cite{AB16} a logit model using a Bayesian approach with data augmentation. Our analysis is different in the modeling assumptions and estimation method.\footnote{We refer to \cite{BL17} and \cite{AB16} for more detailed comparisons with our analysis.}  The most closely related work is   \cite{FaWang2018}. This paper derives the asymptotic distribution of the estimators of the factors and loadings in non-linear single index models without covariates. By contrast, we focus on covariate coefficients and average partial effects and treat the factors and loadings as nuisance parameters. Accordingly, we view our results as complementary to the results in   \cite{FaWang2018}.
%The work most closely related to our estimation approach for non-linear model with interactive effects is
%a recent working paper by \cite{FaWang2018}, which works out asymptotic results for the factor loadings and factor estimates
%in those models, but does not consider regressors. By contrast, we focus on inference on regression parameters and average partial
%effects in this paper, thus complementing the results in \cite{FaWang2018}.

\medskip

In terms of the network literature, our paper is related to the recent work on the application of panel fixed effects methods to network data including \cite{FW16}, \cite{Yan2016statistical}, \cite{Stata2017}, \cite{Dzemski2017}, \cite{Graham17},  and \cite{yan2018}. These papers account for degree heterogeneity by including additive unobserved sender and receiver effects. Additive effects, however, do not capture other network features such as homophily in latent factors and clustering. \cite{Graham2016} considered a binary response model of network formation with all these features plus state dependence, for the case where the network is observed at multiple time periods. Compared to \cite{Graham2016}, our method can capture all these features, except for state dependence, applies to  ordered and count outcomes in addition to binary outcomes, and only requires observing the network at one time period.   A stream of the statistic literature has considered nonlinear factor network models using a random effects approach including \cite{HRH02}, \cite{hoff05}, \cite{KHRH09}, and \cite{HRT07}. Unlike the fixed effects approach that we adopt, the random effects approach assumes independence between covariates and factors and between  covariates and loadings. This assumption is regarded as implausible for most economic applications where the loadings reflect unobserved individual heterogeneity and  some of the covariates are individual choice variables.  There is also a recent econometric literature on structural models  of strategic network formation where the main focus is on  identification.  We refer to \cite{depaula2017} for an excellent up-to-date review on this topic. The focus of our paper is on estimation and inference.

\medskip

Finally, there is an extensive literature in international economics on the estimation of the gravity equation  including \cite{Harrigan1994}, \cite{EatonKortum2001}, \cite{AndersonWincoop2003},  \cite{SantosSilvaTenreyro2006}, \cite{Helpman01052008}, \cite{Charbonneau2011} and \cite{jochmans2017two}. We refer to \cite{HEAD2014} for a recent review on this literature.  These papers estimate models with additive unobserved sender and receiver country effects to account for scale or multilateral resistence effects. Our innovation to this literature is the inclusion of multiple unobserved factors to account for not only scale effects, but also unobserved partnerships, and homophily induced by differences in natural resources, industrial composition or other country characteristics. 

\medskip

To sum-up, our paper makes the following contributions. First, we derive asymptotic theory for fixed effects estimators of model parameters and APEs in a class of nonlinear single-index factor models that include logit, probit, ordered probit and Poisson models. Second, we provide bias corrections for fixed effects estimators of model parameters  and APEs. Third, we propose an estimator of the number of factors in nonlinear single-index models with factor structure. Fourth, we bring in the factor structure to model important features of network data such as reciprocity, degree heterogeneity, homophily in latent factors and clustering in a reduced form fashion. Fifth, we apply our methods to the estimation of a gravity equation of trade between countries and confirm the importance of the gravity variables even after conditioning on multiple unobserved latent factors.

\medskip

\paragraph{Outline: } In Section \ref{sec:model}, we introduce the model and estimators. Section  \ref{sec:numeric}  discusses the statistical issues in the estimation and inference of factor models with a simple example. Section  \ref{sec:app_panel} 
derives asymptotic theory for our estimators. Section \ref{sec:implementation} provides implementation details for the estimators of the parameters and number of factors. Section \ref{sec:empirics} describes the results of the empirical application to the gravity equation and a calibrated simulation.
%, together with an algorithm to compute the estimators. 
The proofs of the main results and other technical details are given in the Appendix.

%%%%%%%%%%%%%%%%%%%%%%%%%%%%%%%%%%%%%%%%%%%%%%%%%%%%%%%%%%%%%%%%%%
\section{Model and Estimators}
\label{sec:model}
%%%%%%%%%%%%%%%%%%%%%%%%%%%%%%%%%%%%%%%%%%%%%%%%%%%%%%%%%%%%%%%%%%

\subsection{Model} 
We observe the data $\{(Y_{ij}, X_{ij}) : (i,j) \in \mD \}$, where $Y_{ij}$ is a scalar outcome variable  and $X_{ij}$ is a $d_x$-dimensional vector of covariates.  The subscripts $i$ and $j$ index individuals and  time periods in traditional panels, but they might index different dimensions in other data structures such as network data. In our empirical application, for example, we use country trade network data where $Y_{ij}$ is the volume of trade between country $i$ and country $j$, and $X_{ij}$ includes gravity variables such as the distance between country $i$ and country $j$.  Both $i$ and $j$  index countries as exporters and importers respectively. The set  $\mD$ contains the indexes of the units that are observed.  It is a subset of the set of all possible pairs $\mD_0  := \{(i,j) :  i = 1,\dots,I; j = 1, \dots, J \}$, where  $I$ and $J$ are the dimensions of the data set. We introduce $\mD$ to  allow for missing data that are common in panel and network applications. For example, in the trade application  $I=J$ and $\mD = \mD_0 \setminus  \{(i,i) :  i = 1,\dots,I \}$ because we do not observe trade of a country with itself. We denote the total number of observations by $n$, i.e. $n = |\mD|$. 

\medskip

We assume that the outcome is generated by
\begin{align}\label{eq:model}
Y_{ij}  \mid X_{ij}, \, \beta, \, \alpha, \, \gamma   &\sim f(\cdot \mid z_{ij}), 
&
z_{ij} &:= X_{ij}' \beta + \pi_{ij},
&
\pi_{ij} :=  \alpha_i' \, \gamma_j ,
\end{align}
where  $f$ is a known density function with respect to some dominating measure,
 $\beta$ is $d_{x}$-dimensional parameter vector,
and $\alpha_i$ and $\gamma_j$ are $R$-vectors of unobserved effects. 
We collect these effects in the $I \times R$ matrix
 $\alpha = (\alpha_1,\ldots, \alpha_I)'$, and
 the $J \times R$ matrix
 $\gamma = (\gamma_1,\ldots, \gamma_J)'$, which are further 
stacked in the $R(I+J)$-vector $\phi_n = ({\rm vec}(\alpha)', {\rm vec}(\gamma)')'$. We make explicit in $\phi_n$  that the number of unobserved effects changes with the sample size because it will have important effects on the asymptotic theory.  We assume that the dimension of the unobserved effects $R$ is known, and provide a practical method to estimate $R$ in Section \ref{sec:implementation}. 
 The effects $\alpha_i$ and $\gamma_j$ are unobserved factors  and factor loadings. In panel data they represent individual and  time effects that in economic applications capture individual heterogeneity and aggregate shocks, respectively.  In network data 
$\alpha_i$ and $\gamma_j$ represent unobserved characteristics of senders and receivers that affect the network flow. 
The model is semiparametric because  we do not specify the distribution of the unobserved effects nor their relationship with the covariates. This flexibility is important for economic applications where some of the covariates are choice variables with values determined in part by the unobserved effects. The conditional distribution $f$ represents the parametric part of the model. 

\medskip

 The model has a single-index specification because the covariates and unobserved effects  enter $f$ through the index $ z_{ij} = X_{ij}'\beta + \alpha_i' \gamma_j$. The parameter $\beta$ is a quantity of interest because it  measures the effect of the covariates on the distribution of the outcome controlling for the unobserved effects. For example, in network data $\beta$ can measure homophily in an observable characteristic $W$ if $X_{ij}$ includes $(W_i - W_j)^2$ as one of its  components. The unobserved effects have a factor or interactive structure because they enter the index $z_{ij}$ multiplicatively through $\pi_{ij} =  \alpha_i' \, \gamma_j$. The standard additive structure $\alpha_{1i} +  \gamma_{1j}$ can be seen as a special case of the factor structure with $R=2$, $\alpha_i = (\alpha_{1i},1)'$, and $\gamma_j = (1, \gamma_{1j})'$.  More generally, in panel data applications the factor structure allows one to incorporate multiple aggregate shocks $\gamma_t$ with heterogeneous effects across agents $\alpha_i$, or multidimensional individual heterogeneity $\alpha_i$ with time-varying returns $\gamma_t$. For example, we can have productivity and monetary shocks with heterogeneous effects across industries, or multiple dimensions of individual ability and skills with time-varying returns in the labor market. 
 
 \medskip
 
 One of the contributions of the paper is to introduce factor structures to network data. In this case the factor structure serves to capture important network features in an unspecified or reduced-form fashion. For example, degree heterogeneity can be captured with the additive structure $\alpha_{1i} +  \gamma_{1j}$ mentioned above, and reciprocity  by allowing $Y_{ij}$ to be arbitrarily related to $Y_{ji}$ even after conditioning on the covariates and unobserved effects. Another important feature is homophily in latent factors, in addition to the homophily on observed factors captured by $X_{ij}$. Assume that there is a latent factor $\xi_i$ such that the flow between $i$ and $j$ increases or decreases with the distance between $\xi_i$ and $\xi_j$ as measured by $(\xi_i - \xi_j)^2$. This type of homophily can also be captured by a factor structure with $R=3$, $\alpha_i=(\xi_i^2,1,-2\xi_i)'$ and $\gamma_j = (1,\xi_j^2,\xi_j)$. The factor structure can also account for clustering or transitivity of links due to latent factors. Assume that there is a cluster of individuals  such that there are more flows within the cluster. This would be captured by a factor structure with $R=1$, $\alpha_i = \xi_i I_i$ and $\gamma_j = \chi_j I_j$, where $\xi_i$ and $\chi_j$ are positive cluster effects on the sender and receiver, and $I_i$ is an indicator for cluster membership. The factor structure can also account for combinations of these network features. Indeed, one of its advantages is that the researcher has the flexibility of specifying some features and leaving other features unspecified. For example, in the trade application we use a specification that includes additive effects to account explicitly for degree heterogeneity and multiple interactive effects to account for the possibility of having homophily in latent factors and clustering without explicitly modelling any of them. 
 
\medskip 
 
%{\color{red}
%   (We could allow for $ f(\cdot \mid z_{ij}, \sigma)$, 
%   where $\sigma$ are additional common parameters that do not enter into the index $z_{ij}$,
%   as long as the log likelihood is strictly convex in $(z_{ij}, \sigma)$.
%    This extension would cover the Tobit model. It should be easy to extend the proofs accordingly, but we do not do so in this version.)
%}

We consider three running examples throughout the analysis:
\begin{example}[Linear model]\label{example: linear} Let $Y_{ij}$ be a continuous outcome. We can model the conditional distribution of $Y_{ij}$ using the Gaussian linear model
\begin{equation*}
f(y \mid z_{ij}) = \varphi(z_{ij}/\sigma)/\sigma, \ \ y \in \mathbb{R},
\end{equation*}
where $\varphi$ is the density function of the standard normal and $\sigma$ is a positive scale parameter.
\end{example}

\begin{example}[Binary response model]\label{example: probij} Let $Y_{ij}$ be a binary outcome and $F$ be a cumulative
distribution function of  the standard normal or logistic distribution. We can model the conditional distribution of $Y_{ij}$ using the probit or logit model
\begin{equation*}
f(y \mid z_{ij}) = F(z_{ij})^y[1 - F(z_{ij})]^{1-y}, \ \ y \in \{0,1\}.
\end{equation*}
\end{example}

%\begin{example}[Censored model with interactive individual and time effects]\label{example: tobij} Let $Y_{ij}$ be an outcome censored below at a level determined by the observed variable $C_{ij}$, and $F$ be an absolutely continuous CDF with PDF $f$. We can model the conditional distribution of $Y_{ij}$ using the single-index specification
%\begin{equation*}
%f_{Y \mid X}(y \mid X_{ij}, \alpha_i, \gamma_j, \beta) = f(Y_{ij} - X_{ij}'\beta -  \alpha_i  \gamma_j)^{D_{ij}} F(C_{ij} - X_{ij}'\beta - \alpha_i \gamma_j)^{1-D_{ij}},
%\end{equation*}
%where $D_{ij} = 1[Y_{ij} > C_{ij}]$ is an indicator for uncensored observations, and $X_{ij}$ includes $C_{ij}$.
%\end{example}

\begin{example}[Count response model]\label{example: poisson} Let $Y_{ij}$ be a count or non-negative integer-valued outcome, and $\psi(\cdot; \lambda)$ be the
probability mass function of a Poisson random variable with parameter $\lambda > 0$. We can model the conditional distribution of $Y_{ij}$ using the Poisson model
\begin{equation*}
f(y \mid z_{ij}) = \psi(y;  \exp[z_{ij} ]), \ \ y \in \{0, 1,  2, .... \}.
\end{equation*}
\end{example}
%The leading specifications for the function $g$ are the additive effects with $g(\alpha_i, \gamma_j) = \alpha_i +  \gamma_j$, and the interactive effects %with  $g(\alpha_i, \gamma_j) = \alpha_i   \gamma_j$.

\subsection{Average Partial Effects} 

In addition to the model parameter $\beta$, we might be interested in average partial effects (APEs).  These effects
are averages of the data, parameters and unobserved effects. They measure the effect of the covariates on moments of the distribution of the outcome conditional on the covariates and unobserved effects. The leading case is the conditional expectation,
$$
\mathbb{E} [Y_{ij} \mid X_{ij}, \alpha_i, \gamma_j, \beta] = \int y f(y \mid X_{ij}'\beta + \pi_{ij} ) dy,
$$
where the partial effects are  differences or derivatives of this expression with respect to the components of $X_{ij}$. 
We denote generically the partial effects by $\Delta(Y_{ij},X_{ij}, \beta, \alpha'_i \gamma_j) = \Delta_{ij}(\beta,  \alpha_i' \gamma_j)$, where the restriction that they depend on $\alpha_i$ and $\gamma_j$ through $\pi_{ij}$ is natural  given the model for the conditional density of $Y_{ij}$. We allow the partial effect to depend on $Y_{ij}$ to cover scale and other parameters not included in the single-index.
 The APE is
 \begin{equation}\label{eq:ape}
 \delta = \E\left[ \frac 1 {n} \sum_{(i,j) \in {\cal D}}  \Delta_{ij}(\beta,  \alpha_i^{\prime} \gamma_j) \right].
\end{equation} 

\noindent  \textbf{Example \ref{example: linear}} (Linear model).  \textit{The variance $\sigma^2$ in the linear model can be expressed as an APE with
\begin{equation}\label{example: probit: meff1}
\Delta_{ij}(\beta, \alpha'_i \gamma_j) =  (Y_{ij} - X_{ij}'\beta - \alpha'_i \gamma_j)^2.
\end{equation}
}

\noindent  \textbf{Example \ref{example: probij}} (Binary response model).  \textit{If $X_{ij,k}$, the $k$th element of $X_{ij}$, is
binary, its partial effect on the conditional probability of $Y_{ij}$ is
\begin{equation}\label{example: probit: meff1}
\Delta_{ij}(\beta, \alpha'_i \gamma_j) =  F(\beta_k +
X_{ij,-k}'\beta_{-k}  + \alpha'_i  \gamma_j) -
F(X_{ij,-k}'\beta_{-k} + \alpha'_i   \gamma_j),
\end{equation}
where $\beta_k$ is the $k$th element of $\beta$, and $X_{ij,-k}$ and $\beta_{-k}$ include all elements of $X_{ij}$ and $\beta$ except for the $k$th element. If $X_{ij,k}$ is continuous and $F$ is differentiable, the partial effect of $X_{ij,k}$
on the conditional probability of $Y_{ij}$ is
\begin{equation}\label{example: probit: meff2}
\Delta_{ij}(\beta, \alpha'_i \gamma_j) = \beta_k  \partial F(X_{ij}'\beta +
\alpha'_i \gamma_j), \ \ \partial F(u) := \partial F(u)/\partial u.
\end{equation}
}

\medskip

\noindent  \textbf{Example \ref{example: poisson}} (Count response model). \textit{If $X_{ij,k}$, the $k$th element of $X_{ij}$, is
binary, its partial effect on the conditional probability of $Y_{ij}$ in the Poisson model is
\begin{equation}\label{example: poisson: meff1}
\Delta_{ij}( \beta, \alpha'_i \gamma_j) =  \exp(\beta_k +
X_{ij,-k}'\beta_{-k}  + \alpha'_i  \gamma_j) -
\exp(X_{ij,-k}'\beta_{-k} + \alpha'_i   \gamma_j),
\end{equation}
where $\beta_k$ is the $k$th element of $\beta$, and $X_{ij,-k}$ and $\beta_{-k}$ include all elements of $X_{ij}$ and $\beta$ except for the $k$th element. If $X_{ij,k}$ is continuous, the partial effect of $X_{ij,k}$
on the conditional expectation of $Y_{ij}$ is
\begin{equation}\label{example: poisson: meff}
\Delta_{ij}(\beta, \alpha'_i \gamma_j) = \beta_k
 \exp(X_{ij}'\beta + \alpha'_i   \gamma_j).
\end{equation}
}

\subsection{Fixed effects estimator} 
We adopt a fixed effects approach and treat the unobserved effects $\phi_n$ as a vector of nuisance parameters to be estimated. 
Let
\begin{align*} 
     L(\beta,\phi_n) :=
              \sum_{(i,j) \in {\mathcal{D}}}  \log f(Y_{ij}  \mid X_{ij}' \beta + \pi_{ij}  )
\end{align*}
be the conditional log-likelihood function of the data  constructed from the parametric part of the model. The fixed effects estimator is  
\begin{equation}
  (\widehat \beta, \widehat \phi_n)
  \, \in \,
\argmax_{(\beta, \phi_n) \in  \mathbbm{R}^{d_x + R(I+J)}} \;  L(\beta,\phi_n).
   \label{LobjMAX}
\end{equation}
This problem has a unique solution with probability one for $\beta$ under the assumption that  $z \mapsto \log f(\cdot  \mid z)$ is concave. This assumption holds for all the cases that we consider including  logit, probit, ordered probit and Poisson models. The solution for $\phi_n$ is only unique up to normalization -- see Remark \ref{remark:norm} below. Obtaining the solution to \eqref{LobjMAX} can be computationally challenging because the objective function is not concave in  the parameter $\phi_n$ and the high-dimensionality of the parameter space. In Section \ref{sec:implementation} we provide an iterative method based on \cite{C14} to obtain the estimates. This method performs well in simulations. 

%instead of solving the program \eqref{LobjMAX} constrained by the normalizations, we solve a penalized  program where the constraints are imposed by adding a penalty to the log-likelihood function. We use a Newton-Raphson algorithm with multiple initial values to deal with the nonconcavity of the penalized objective function with respect to the parameters. We find that this algorithm performs well in simulations.

%Section \ref{sec:empirics} that works well in our numerical examples.

\medskip

%To analyze the statistical properties of the estimator of $\beta$ it is conceptually 
%convenient to solve the program~\eqref{LobjMAX} in two steps. First, we concentrate out the nuisance parameter $\phi_n$ for
%given $\beta$  as
%\begin{equation}
%   \widehat \phi_n(\beta) \in \argmax_{\phi_n \in \mathbbm{R}^{R(I+J)}}
%       \,  {\cal L}(\beta,\,\phi_n) \;.
%   \label{DefEstPhi}
%\end{equation}
%Second, the fixed effects estimators of $\beta$ and $\phi_n$ are 
%\begin{equation}
%   \widehat \beta \, = \,
%      \argmax_{\beta \in  \mathbbm{R}^{d_x}} \; {\cal L}(\beta,\widehat \phi_n(\beta)) \;, 
%       \qquad
%   \widehat \phi_n \, = \, \widehat \phi_n(\widehat \beta).
%   \label{DefEst}
%\end{equation}
%In Section \ref{sec:app_panel}, we show that $\widehat \beta$ is normally distributed in large samples, but might have incidental parameter bias because the dimension of the nuisance parameter $\phi_n$ grows with the sample size [REF to NS.]
%
%\bigskip

%{\color{red}
%  (We used to introduce the notation $\mathcal{L}_{IJ}$ and $\phi_{IJ}$ to indicate the dependence on the sample size, but then later drop
%   the subscripts. I think it is better to drop those indices throughout)
%}
%
%\bigskip

Let $\widehat \phi_n = ({\rm vec}(\widehat \alpha)', {\rm vec}(\widehat \gamma)')' $, where $\widehat \alpha$ and $\widehat \gamma$ correspond to the components $\alpha$ and $\gamma$ such that $\widehat \alpha = (\widehat \alpha_1, \ldots, \widehat \alpha_I)'$ and  $\widehat \gamma = (\widehat \gamma_1, \ldots, \widehat \gamma_J)'$.
%$\widehat \alpha$ and $\widehat \gamma$ be the corresponding components of $\widehat \phi_n$.}
Plugging the estimator of $(\beta,\phi_n)$ in \eqref{eq:ape} yields the estimator of the APE,
\begin{equation}\label{eq:apehat}
\widehat \delta    =  \frac 1 {n} \sum_{(i,j) \in {\cal D}} \Delta_{ij}(\widehat \beta,  \widehat \alpha_i'  \widehat \gamma_j).
\end{equation}
%This estimator is invariant to the normalization used to pin down the value of $\widehat \phi_n$ in \eqref{LobjMAX}. 
 In Section \ref{sec:app_panel}, we show that $\widehat \beta$ and $\widehat \delta$ are consistent and normally distributed in large samples, but might have incidental parameter bias because the dimension of the nuisance parameter $\phi_n$ grows with the sample size \citep{NeymanScott1948}. 

\begin{remark}[Normalization]\label{remark:norm} As in linear factor models, the solution to the problem \eqref{LobjMAX} for 
 $\phi_n = ({\rm vec}(\alpha)', {\rm vec}(\gamma)')'$ is only unique up to normalization  because the log-likelihood function is invariant under the transformation
$\alpha \mapsto \alpha A'$ and $\gamma \mapsto \gamma A^{-1}$ for any non-singular $R \times R$ matrix $A$. The estimators $\widehat \beta$ and $\widehat \delta$ are  invariant to the normalization used to eliminate this indeterminancy.  Moreover, we can always reparametrize the model  in \eqref{eq:model} with respect to $\phi_n$ in a way that the true value of $\phi_n$  satisfies the adopted normalization. This invariance  allows us to choose different normalizations for different purposes. For example, we use a standard normalization for linear factor models in the computation of the estimators, whereas  we employ another normalization to derive the asymptotic distributions of the estimators in the Appendix.   We refer to  \cite{rs15} for a discussion on the effect of the normalization in the context of  linear factor models.  
\end{remark}

\section{A Simple Motivating Example} \label{sec:numeric} 
%\subsection{Illustrative Example}\label{subsec:ns}
We  illustrate the statistical issues that arise in the estimation of factor models with a simple example. This example is analytically tractable  and might be of practical interest as it provides an estimator of the variance of a random variable in network and panel data allowing for flexible patterns of dependence. The analysis in this section is mainly heuristic leaving technical details such as the derivation of the  orders of some remainder terms in the asymptotic expansions for Section \ref{sec:app_panel}.  

\medskip

Consider a version of  Example \ref{example: linear} without covariates where $Y_{ij} \mid  \phi_n \sim \mathcal{N}(\alpha'_i  \gamma_j, \sigma^2)$. Assume that the observations $Y_{ij}$ are independent over $i$ and $j$, and that there is no missing data, i.e. ${\cal D}= \mD_0$. The quantity of interest is the scale parameter $\sigma^2$, which can be treated as an APE. This is a linear factor model where $\widehat \phi_n$ can be obtained  using  the principal component algorithm of \cite{Bai:2009p3321}. Then, the plug-in estimator of $\sigma^2$ is
\begin{equation}\label{eq:ns}
\widehat \sigma^2 = \frac{1}{IJ} \sum_{i=1}^I \sum_{j=1}^J \left( Y_{ij} - \widehat \alpha_i' \widehat \gamma_j \right)^2.
\end{equation}

\medskip

To analyze the properties of $\widehat \sigma^2$, it is useful to consider an asymptotic expansion of  $\widehat \alpha_i' \widehat \gamma_j$ around $\alpha_i' \gamma_j$ as $I,J \to \infty$. This yields
\begin{multline*}
\widehat \alpha_i' \widehat \gamma_j =  \alpha_i' \gamma_j + (\widehat \alpha_i - \alpha_i)'  \gamma_j  +   \alpha_i' (\widehat \gamma_j -\gamma_j) + (\widehat \alpha_i - \alpha_i)'  (\widehat \gamma_j - \gamma_j) \\
\approx \alpha_i' \gamma_j + (\widehat \alpha_i - \alpha_i)'  \gamma_j  +   \alpha_i' (\widehat \gamma_j -\gamma_j),
\end{multline*}
where $\approx$ means equal up to terms of lower order. Plugging this expansion in  \eqref{eq:ns} shows that $\widehat \sigma^2$ behaves asymptotically as a sample variance with $R(I+J)$ estimated fixed effects corresponding to the $\widehat \alpha_i$'s and $\widehat \gamma_t$'s. Then, standard degrees of freedom calculations give
\begin{equation}\label{eq:nsbias}
\E[\widehat \sigma^2] \approx \frac{(I-R)(J-R)}{IJ} \sigma^2 \approx \sigma^2 -  \frac{R(I+J)}{IJ} \sigma^2,
\end{equation}
which shows that $\widehat \sigma^2$ has an incidental parameter bias that grows proportionally to the number of factors $R$. The order of the bias corresponds to the number of estimated parameters, $R(I+J)$, divided by the number of observations, $IJ$, as predicted by the general formula in \cite{ARE} for fixed effects estimators. We show in numerical examples that this expression produces a very accurate approximation  to the bias even for small sample sizes. 

\medskip

We carry out 50,000 simulations with $\sigma^2 = 1$, and $\alpha_i$ and $\gamma_j$ drawn independently from multivariate normal distributions with mean zero and covariance function $\mathbb{I}_R$, the identity matrix of order $R$. Table \ref{table:ns1} compares the bias of $\widehat \sigma^2$ with the asymptotic approximation \eqref{eq:nsbias} in datasets with $I,J \in \{10, 25, 50\},$ and $R \in \{1,2,3\}$. We only report the results for $J \leq I$ since all the expressions are symmetric in $I$ and $J$.   Comparing the two rows in each panel of the table, we find that the asymptotic bias provides a very accurate approximation to the finite-sample bias of the estimator for all the sample sizes and numbers of factors.  
   
\begin{table}
\begin{center}\caption{\label{table:ns1} Asymptotic and Exact Bias of $\widehat \sigma^2$ }
\begin{tabular}{ccccccc} \hline\hline
&   $I = 10$ &  \multicolumn{2}{c}{$I=25$} & \multicolumn{3}{c}{$I=50$}  \\
  Bias  &   $J = 10$ & $J=10$  & $J=25$  & $J=10$  & $J=25$  & $J=50$  \\\hline
  & \multicolumn{6}{c}{$R=1$} \\
   \multicolumn{1}{l}{Asymptotic}   &  -.19 & -.14  & -.08  & -.12  & -.06  & -.04  \\
  \multicolumn{1}{l}{Exact}            &  -.20 & -.14  & -.08  & -.12  & -.06  & -.04  \\
 & \multicolumn{6}{c}{$R=2$} \\
  \multicolumn{1}{l}{Asymptotic}   &  -.36 & -.26  & -.15  & -.23  & -.12  & -.08  \\
  \multicolumn{1}{l}{Exact}            &  -.39 & -.27  & -.16  & -.24  & -.12  & -.08  \\
  & \multicolumn{6}{c}{$R=3$} \\
  \multicolumn{1}{l}{Asymptotic}   &  -.51 & -.38  & -.23  & -.34  & -.17  & -.12  \\
  \multicolumn{1}{l}{Exact}            &  -.55 & -.40  & -.23  & -.35  & -.18  & -.12  \\
\hline\hline
  \multicolumn{7}{l}{ \footnotesize{Notes: Results obtained by 50,000 simulations}}\\
 \multicolumn{7}{l}{\footnotesize{Design: $Y_{ij} \mid \phi_n \sim \mathcal{N}(\alpha_i' \gamma_t, \sigma^2)$, $\alpha_i \sim N(0,\mathbb{I}_{R})$, $\gamma_j \sim N(0,\mathbb{I}_{R})$, $\sigma^2=1$}}
  \end{tabular}
\end{center}
\end{table}

\medskip

The bias of $\widehat \sigma^2$ can be removed using analytical and split-sample methods. Thus, an analytical bias corrected estimator can be formed as
$$
\widetilde \sigma_{\rm ABC}^2 = \frac{IJ}{(I-R)(J-R)} \widehat \sigma^2.
$$
A split-sample bias corrected estimator can be formed as
$$
\widetilde \sigma_{\rm SBC}^2 =   3 \widehat{\sigma}^2 - \bar{\sigma}^2_{I,J/2} - \bar{\sigma}^2_{I/2,J},
$$
where $\bar{\sigma}^2_{I,J/2}$ is the average of the estimators in the half-panels $ \{(i,j) :  i = 1,\dots,I; j = 1, \dots, \lceil J/2 \rceil \}$ and $ \{(i,j) :  i = 1,\dots,I; j = \lfloor J/2 + 1\rfloor, \dots, J \}$, and $\bar{\sigma}^2_{I/2,J}$ is the average of the estimators in the half-panels $ \{(i,j) :  i = 1,\dots,\lceil I/2 \rceil; j = 1, \dots, J \}$ and $ \{(i,j) :  i = \lfloor I/2 + 1\rfloor,\dots,I; j = 1, \dots, J \}$, where $\lceil \cdot \rceil$ and $\lfloor \cdot \rfloor$ are the ceil and floor functions. As in nonlinear panel data, we expect these corrections to remove most of the bias of the estimator without increasing dispersion. Moreover, constructing confidence intervals around the corrected estimators should help bring coverage probabilities close to their nominal levels. We confirm these predictions in a numerical simulation.

\begin{table}
\begin{center}\caption{\label{table:ns2} Bias, SD, RMSE and Coverage Probabilities}
\begin{tabular}{lcccccccc} \hline\hline
  &   Bias & SD & RMSE & Cover &   Bias & SD & RMSE & Cover  \\\hline
  & \multicolumn{4}{c}{$I= 10, J=10$} &  \multicolumn{4}{c}{$I = 25, J=10$} \\
  $\widehat{\sigma}^2$                &-0.55 & 0.09 & 0.56 & 0.00 & -0.40 & 0.07 & 0.41 & 0.00 \\
  $\widetilde{\sigma}_{\rm ABC}^2$   &-0.08 & 0.19 & 0.20 & 0.75 &-0.02 & 0.11 & 0.11 & 0.85 \\
  $\widetilde{\sigma}_{\rm SBC}^2$   &-0.09 & 0.20 & 0.22 & 0.71 &-0.03 & 0.12 & 0.13 & 0.81 \\
  & \multicolumn{4}{c}{$I= 25, J=25$} &  \multicolumn{4}{c}{$I = 50, J=10$} \\
  $\widehat{\sigma}^2$                &-0.23 & 0.05 & 0.24 & 0.01 & -0.35 & 0.05 & 0.35 & 0.00 \\
  $\widetilde{\sigma}_{\rm ABC}^2$   &-0.01 & 0.06 & 0.06 & 0.91 &-0.01 & 0.08 & 0.08 & 0.88 \\
  $\widetilde{\sigma}_{\rm SBC}^2$   &-0.02 & 0.07 & 0.07 & 0.85 &-0.01 & 0.08 & 0.08 & 0.85 \\
  & \multicolumn{4}{c}{$I= 50, J=25$} &  \multicolumn{4}{c}{$I = 50, J=50$} \\
  $\widehat{\sigma}^2$                &-0.18 & 0.04 & 0.18 & 0.00 &-0.12 & 0.03 & 0.12 & 0.01 \\
  $\widetilde{\sigma}_{\rm ABC}^2$   &-0.00 & 0.04 & 0.04 & 0.92 &-0.00 & 0.03 & 0.03 & 0.93 \\
  $\widetilde{\sigma}_{\rm SBC}^2$   &-0.01 & 0.05 & 0.05 & 0.88 &-0.00 & 0.03 & 0.03 & 0.92 \\
\hline\hline
  \multicolumn{9}{l}{ \footnotesize{Notes: 50,000 simulations. Nominal level is $0.95$}}\\
 \multicolumn{9}{l}{\footnotesize{Design: $Y_{ij} \mid \phi_n \sim \mathcal{N}(\alpha_i' \gamma_t, \sigma^2)$, $\sigma^2=1$, $\alpha_i \sim N(0,\mathbb{I}_{R})$, $\gamma_j \sim N(0,\mathbb{I}_{R})$, $R=3$ }}
  \end{tabular}
\end{center}
\end{table}

\medskip

Table \ref{table:ns2} reports the bias, standard deviation and RMSE of the uncorrected and bias corrected estimators, together with coverage probabilities of 95\% confidence interval constructed around them. The results are based on 50,000 simulations of datasets generated as in Table \ref{table:ns1} with $I, J \in \{10, 25, 50\}$, and $R = 3$. The confidence intervals around the estimator $\widetilde \sigma^2 \in \{ \widehat \sigma^2, \widetilde \sigma_{\rm ABC}^2, \widetilde \sigma_{\rm SBC}^2\} $ are constructed as $\widetilde \sigma^2(1 \pm 1.96 \sqrt{2/(IJ)})$, where we use that the asymptotic variance of all the estimators is $2\sigma^4/(IJ)$. We find that the corrections offer huge improvements in terms of bias reduction and coverage of the confidence intervals. The corrections  increase the dispersion  for small sample sizes, but always reduce the RMSE. In this case the analytical correction slightly outperforms the split-sample correction.

%%%%%%%%%%%%%%%%%%%%%%%%%%%%%%%%%%%%%%%%%%%%%%%%%%%%%%%%%%%%%%%%%%
\section{Asymptotic Theory}
\label{sec:app_panel}

We derive the asymptotic distribution of the estimators of the model parameter and APEs under sequences where $I$ and $J$ grow with the sample size at the same rate. We focus on these sequences because they are the only ones that deliver a non-degenerate limit distribution. Moreover, they are very natural choices for network data where $I = J$. Throughout this section, all the stochastic statements are conditional on the realization of the unobserved effects $\phi_n$ and should therefore  be qualified with almost surely. We shall omit this qualifier to lighten the notation.

\subsection{Model parameter}
We consider single-index models with strictly exogenous covariates and  unobserved effects that enter the density of the outcome through  $z_{ij} = X_{ij}'\beta + \pi_{ij}$, where $\pi_{ij} = \alpha_i'\gamma_j$.  These models cover  the
linear, probit and Poisson specifications of Examples~\ref{example: linear}--\ref{example: poisson}.   We focus on strictly exogenous covariates because for some data structures of interest such as network data  there is no natural ordering of the observations. The results can be extended to predetermined covariates when one of the dimensions is time, see the earlier version of the paper \citep{CFW14}. 
Let 
\begin{equation}\label{eq: index_model}
   \ell_{ij}(z_{ij} ) := \log f(Y_{ij} \mid X_{ij}, \beta, \alpha_i, \gamma_j)
\end{equation}
be the conditional log-likelihood coming from the parametric part of the model. 
We denote  the derivatives  of $z \mapsto \ell_{ij}(z)$ by $\partial_{z^q} \ell_{ij}(z) := \partial^q \ell_{ij}(z)/\partial z^q$, $q = 1,2, \ldots$. Let $\beta^0$, $\alpha_i^0$, $\gamma_j^0$,   and $\pi_{ij}^0 = \alpha_i^{0 \prime} \gamma_j^0$ denote the values of $\beta$, $\alpha_i$, $\gamma_j$, and $\pi_{ij}$ that generated the data. 
We drop the argument $z_{ij}$ when the derivatives are evaluated at the true value of the index  $z^0_{ij} := X_{ij}'\beta^0 + \pi_{ij}^0$, i.e.,
$\partial_{z^q} \ell_{ij} := \partial_{z^q}\ell_{ij}(z^0_{ij})$. 
Let $\boldsymbol{X}=\{X_{ij} : (i,j) \in {\cal D}\}$, $\alpha^0 = (\alpha_1^0,\ldots, \alpha_I^0)'$, and
 $\gamma^0 = (\gamma_1^0,\ldots, \gamma_J^0)'$ .

\medskip

We make the following assumptions:
\begin{assumption}[Nonlinear Factor Model]
   \label{ass:PanelA1}
 %  Let $\nu > 0$ and $\mu>4 (8+\nu)/\nu$.
   Let $\varepsilon>0$ and let 
     ${\cal B}^0_{\varepsilon}$ be a bounded subset of $\mathbbm{R}$
     that contains an $\varepsilon$-neighborhood of $z^0_{ij}$
     for all $i,j,I,J$.
          
     \begin{itemize}
         \item[(i)] Model:   $Y_{ij}$ is distributed as 
\begin{equation*}
Y_{ij}   \mid  \boldsymbol{X}, \beta^0, \alpha^0, \gamma^0  \sim \exp[\ell_{ij}( X_{ij}'\beta^0 + \pi^0_{ij})],
\end{equation*}
and conditional on $(\boldsymbol{X}, \beta^0, \alpha^0, \gamma^0)$, either (a) $Y_{ij}$ is independent across  $(i,j) \in {\mathcal{D}}$  or (b)  $(Y_{ij},Y_{ji})$ is independent across observations $(i,j) \in {\mathcal{D}}$ with $i \leq j$.
The number of factors $R$ is known. 
%   {\color{red} (Alternative version: $I=J$, and 
%         $(Y_{ij},Y_{ji})$ is independent across observations $(i,j) \in {\mathcal{D}}$ with $i \leq j$, conditional on $X$, $\beta^0$, $\phi^0$,
%           and distributed as 
%$
%(Y_{ij},Y_{ji})  \mid X, \beta^0, \phi^0      \sim 
%\Big( \exp[\ell_{ij}( X_{ij}'\beta^0 + \pi^0_{ij})] \, , \; \exp[\ell_{ji}( X_{ji}'\beta^0 + \pi^0_{ji})]
%\Big).$
%   --- the only change required under that alternative assumption is in the asymptotic variance of the fixed effect estimators below, 
%   the bias terms do not change at all --- I would suggest to include a comment on this alternative version of the assumption, but not
%    formally derive the results.)
%}

      \item[(ii)] Asymptotics: we consider limits of sequences where  $I_n/J_n \rightarrow \kappa^2$, $0<\kappa<\infty$, as $n = | \mD | \rightarrow \infty$. We shall drop the indexing by $n$ from $I_n$ and $J_n$ in the following.

      \item[(iii)]  Smoothness and moments:
            $z \mapsto \ell_{ij}(z)$ is
      four times continuously differentiable over ${\cal B}^0_{\varepsilon}$ a.s. 
      and  $\max_{i,j} \E[|\partial_{z^q} \ell_{ij}(z_{ij}^0)|^{8+\nu}]$, $q \leq 4$,  are  uniformly bounded over  $I,J$ for some $\nu > 0$.
     In addition, $X_{ij}$ is bounded uniformly  over $i,j,I,J$.

        \item[(iv)] Concavity: 
        for all $I,J,$ the function
    $z \mapsto  \ell_{ij} (z)$ is strictly concave over $z \in \mathbbm{R}$ a.s.
    Furthermore, there exist positive constants $b_{\min}$
    and $b_{\max}$ such that for all $z \in {\cal B}^0_{\varepsilon}$,
    $b_{\min} \leq -  \partial_{z^2} \ell_{ij}(z)    \leq b_{\max}$ a.s.  uniformly over $i,j,I,J$.  
    
       \item[(v)] Strong factors: 
       $I^{-1} \sum_{i=1}^I  \alpha^0_i    \alpha^{0 \, \prime}_i  \to_P \Sigma_1 >0,$
   and
   $J^{-1} \sum_j  \gamma^0_j   \gamma^{0 \, \prime}_j \to_P \Sigma_2 >0$.

     \item[(vi)] Generalized non-collinearity:
      for any matrix $A$, define the coprojection matrix as ${\cal M}_A := \mathbb{I} - A(A'A)^{\dag}A'$, where $ \mathbb{I}$ denotes the identity matrix of
       appropriate size and the superscript $^{\dag}$ denotes the Moore-Penrose generalized inverse. Let $\alpha^0 := (\alpha_1^0, \ldots, \alpha_I^0)'$ and $\mathbb{X}_k$ be a $I \times J$ matrix with elements $X_{ij,k}$, $i = 1,\ldots,I,$ $j = 1,\ldots,J$. The $d_x \times d_x$ matrix $D(\gamma)$ with elements 
        $$ D_{k_1k_2}(\gamma) =
        (IJ)^{-1}   
        {\rm Tr}( {\cal M}_{\alpha^0} \mathbb{X}_{k_1} {\cal M}_{\gamma}  \mathbb{X}_{k_2}'), \ \ k_1, k_2 \in \{1,...,d_x \},$$
        satisfies
        $D(\gamma) > c >0 $ for all $\gamma \in \mathbbm{R}^{J \times R}$, wpa1.

     \item[(vii)]  Missing data:   there is a finite number of missing observations for every $i$ and $j$, that is,
      $\max_{i} ( J-|\{(i',j') \in {\cal D} : i' = i\}| ) \leq C$
     and $\max_{j}( I - |\{(i',j') \in {\cal D} : j' = j\}| ) \leq C$
     for some constant $C<\infty$ that is independent of the sample size.

    \end{itemize}

\end{assumption}

The two cases considered in Assumption \ref{ass:PanelA1}$(i)$ are designed for different data structures.  Case (b) is more suitable for network data because it allows for reciprocity between the observations $(i,j)$ and $(j,i)$, whereas  case (a) is more suitable for panel data where there is no special relationship between these observations. Assumption \ref{ass:PanelA1}$(i)$ also imposes that the number of factors is known. We provide a practical method to choose the number of factors in Section \ref{sec:implementation}. We also recommend  checking the sensitivity to this number by reporting the maximum value of the average log-likelihood and the parameter estimates for multiple values of $R$. We provide an example in the empirical application of Section \ref{sec:empirics}.
%\footnote{It would be of interest to develop a method to select $R$ similar to the criteria of \cite{BN02} or  \cite{AB16}. We leave this extension to future research.}  
Assumption \ref{ass:PanelA1}$(i)-(iii)$ are similar to \cite{FW16}, so we do not discuss them further here.   The concavity condition in Assumption \ref{ass:PanelA1}$(iv)$ holds for the logit, probit, ordered probit and Poisson models.  The strong factor and generalized noncollinearity conditions in  Assumption \ref{ass:PanelA1}$(v)-(vi)$ were previously imposed in \cite{Bai:2009p3321} and \cite{MoonWeidner2015,MoonWeidner2017}  for linear models with interactive effects. Generalized noncollinearity rules out covariates that do not display variation in the two dimensions of the dataset. \cite{BL17} and \cite{AB16} impose very similar conditions to Assumption \ref{ass:PanelA1}, so we refer to these papers for further discussion.

\medskip

We introduce more notation that is convenient to simplify the expressions in the asymptotic distribution. Let  $\Xi_{ij}$ be a $d_x$-dimensional vector defined by the following population weighted least squares projection
for each component of $\E( \partial_{z^2} \ell_{ij} X_{ij})$,
\begin{align*}
   \Xi_{ij,k} &=  \alpha^{\ast \, \prime}_{i,k}  \gamma^0_j + \alpha^{0 \, \prime}_i \gamma^{\ast}_{j,k} ,
   &
  \left(\alpha^{\ast}_{i,k} , \, \gamma^{\ast}_{j,k} \right)
   &\in \argmin_{\alpha_{i,k},\gamma_{j,k}} \sum_{i,j}
  \E( - \partial_{z^2} \ell_{ij} )
        \left(  \frac{\E( \partial_{z^2} \ell_{ij} X_{ij,k})}
                        {\E( \partial_{z^2} \ell_{ij} )}   
         - \alpha'_{i,k}  \gamma^0_j - \alpha^{0 \, \prime}_i \gamma_{j,k} \right)^2.
\end{align*}  
Also define the residual of the projection
\begin{align*}
     \tilde X_{ij}:= X_{ij} - \Xi_{ij} .
\end{align*}    
Finally, let $\EE  := \plim_{I,J \to \infty}$,
${\cal D}_i := \{ j \, : \, (i,j) \in {\cal D} \}$ and ${\cal D}_j := \{ i \, : \, (i,j) \in {\cal D} \}$.

\medskip

The following theorem establishes the asymptotic distribution of  $\widehat \beta$ defined in \eqref{LobjMAX}.

%\bigskip
%{\color{red}
%   (Notice that we are completely avoiding to introduce any parameter normalization, penalty term, or incidental
%   parameter Hessian in the main text. Those only need to be discussed in the appendix.)
%}
%\bigskip

\begin{theorem}[Asymptotic distribution of $\widehat \beta$]
  \label{th:BothEffects}
Suppose that Assumption \ref{ass:PanelA1} holds, that the following limits exist
\begin{align*}
   \overline B_{\infty} &=
      -  \EE \left\{  \frac {1} {I}  \sum_{(i,j) \in {\cal D}}
            \gamma^{0 \, \prime}_j  
            \left[  \sum_{h \in {\cal D}_i}   \gamma^{0}_h \gamma^{0 \prime}_h \, \E\left(  \partial_{z^2} \ell_{ih} \right)
            \right]^{-1}
             \gamma^{0}_j \;
        \E\left(
                \partial_{z} \ell_{ij}  \partial_{z^2} \ell_{ij} \tilde X_{ij}
                + \frac 1 2  \partial_{z^3} \ell_{ij} \tilde X_{ij}
                 \right)  
    \right\}  ,
        \nonumber \\
      \overline D_{\infty} &=   -  \EE \left\{  \frac {1} {J}  \sum_{(i,j) \in {\cal D}}
            \alpha^{0\, \prime}_i  
            \left[ \sum_{h \in {\cal D}_j}   \alpha^{0}_h \alpha^{0 \prime}_h \, \E\left(  \partial_{z^2} \ell_{hj} \right)
            \right]^{-1}
             \alpha^{0}_i \;
        \E\left(
                \partial_{z} \ell_{ij}  \partial_{z^2} \ell_{ij} \tilde X_{ij}
                + \frac 1 2  \partial_{z^3} \ell_{ij} \tilde X_{ij}
                 \right)  
    \right\} , \\
          \overline W_{\infty}  &=  - \EE \left[  \frac {1} {n}  \sum_{(i,j) \in {\cal D}}  \E \left(
              \partial_{z^2}  \ell_{ij} \tilde X_{ij} \tilde X'_{ij} \right) \right], 
               \\
          \overline \Sigma_{\infty}  &=   \EE \left[  \frac {1} {n}  \sum_{(i,j) \in {\cal D}}  \E \left\{ \left(
              \partial_{z}  \ell_{ij} \tilde X_{ij}  +  \partial_{z}  \ell_{ji} \tilde X_{ji} \right)    \partial_{z}  \ell_{ij}  \tilde X'_{ij} \right\} \right],
   \end{align*}
and that $\overline W_{\infty}>0$.
Then,
   \begin{align*}
      \sqrt{n} \left( \widehat \beta - \beta^0 
      - \frac{I} n \, \overline{W}_{\infty}^{-1} \overline B_{\infty} 
       - \frac{J} n \, \overline{W}_{\infty}^{-1} \overline D_{\infty}  \right)
          \; \to_d \;
      {\cal N}( 0 ,
           \;\overline W_{\infty}^{-1}  \overline \Sigma_{\infty} \overline W_{\infty}^{-1}).
    \end{align*}
%    so that  $\overline B_{\infty}^{\beta} =   \overline W_{\infty}^{-1} \overline B_{\infty}$ and $\overline D_{\infty}^{\beta} =   \overline W_{\infty}^{-1} \overline D_{\infty}$ in equation \eqref{eq: expansion}.
\end{theorem}

\medskip

\begin{remark}[Panel Data] In case (a) of Assumption \ref{ass:PanelA1}$(i)$, the asymptotic variance of $\widehat \beta$ simplifies to
$$
\overline W_{\infty}^{-1}  \overline \Sigma_{\infty} \overline W_{\infty}^{-1} = - \overline W_{\infty}^{-1},
$$
by the fact that the scores $ \partial_{z}  \ell_{ij} \tilde X_{ij}$ and $ \partial_{z}  \ell_{ji} \tilde X_{ji}$ are uncorrelated and the information equality.
\end{remark}

Theorem  \ref{th:BothEffects} shows that $\widehat \beta$ is consistent and normally distributed, but can have bias of the same order as its standard deviation. The scaling factors in the expressions for $\overline B_{\infty}$ and $\overline D_{\infty}$ are such that those expressions
are of order one, for example, we can express $\overline B_{\infty}$ equivalently as 
$$
    -  \EE \left\{  \frac {1} {I}  \sum_{i=1}^I \frac 1 {|{\cal D}_i|} \sum_{j \in {\cal D}_i} 
            \gamma^{0 \, \prime}_j  
            \left[  \frac 1 {|{\cal D}_i|} \sum_{h \in {\cal D}_i}   \gamma^{0}_h \gamma^{0 \prime}_h \, \E\left(  \partial_{z^2} \ell_{ih} \right)
            \right]^{-1}
             \gamma^{0}_j \;
        \E\left(
                \partial_{z} \ell_{ij}  \partial_{z^2} \ell_{ij} \tilde X_{ij}
                + \frac 1 2  \partial_{z^3} \ell_{ij} \tilde X_{ij}
                 \right)  
    \right\} ,
$$
where all sums explicitly appear as part of a  sample average.
 We verify the presence of  bias  in our running examples.

\noindent  \textbf{Example \ref{example: linear}} (Linear model).\textit{ In this case
$$
\ell_{ij}(z) = - \frac{1}{2} \log (2 \pi \sigma^2) - \frac{(Y_{ij} -z_{ij})^2}{2\sigma^2},
$$
so that $\partial_{z} \ell_{ij} = (Y_{ij} -z_{ij}^0)/\sigma^2$,  $ \partial_{z^2} \ell_{ij} =  -1/\sigma^2$, and $ \partial_{z^3} \ell_{ij} = 0$. Substituting these values in the expressions of the bias of Theorem \ref{th:BothEffects} yields
$
 \overline B_{\infty} =  \overline D_{\infty} = 0,
$
which agrees  with the result in \cite{Bai:2009p3321} of no asymptotic bias for $\beta$ in homoskedastic linear models with interactive effects and strictly exogenous covariates.
}

\medskip
 
\noindent  \textbf{Example \ref{example: probij}} (Binary response model). \textit{In this case 
$$\ell_{ij}(z) = Y_{ij} \log F(z) + (1 - Y_{ij}) \log [1 -  F(z)],$$ so that $\partial_{z} \ell_{ij} = H_{ij} (Y_{ij} - F_{ij}),$   $ \partial_{z^2} \ell_{ij} =   -  H_{ij} \partial F_{ij} + \partial H_{ij} (Y_{ij} - F_{ij})$,  and $\partial_{z^3} \ell_{ij} =  -  H_{ij} \partial^2 F_{ij} - 2 \partial H_{ij} \partial F_{ij} + \partial^2 H_{ij} (Y_{ij} - F_{ij})$, where $H_{ij} = \partial F_{ij} / [F_{ij}(1-F_{ij})], and $   $\partial^{j}  G_{ij} := \partial^{j} G(Z)|_{Z =z_{ij}^0}$ for any function $G$ and $j = 0,1,2$. Substituting these values in the expressions of the bias of Theorem \ref{th:BothEffects} for the probit model yields  
\begin{align*}
   \overline B_{\infty} &=
        \EE \left\{  \frac {1} {2\, I}  \sum_{(i,j) \in {\cal D}}
            \gamma^{0 \, \prime}_j  
            \left[  \sum_{h \in {\cal D}_i}   \gamma^{0}_h \gamma^{0 \prime}_h \, \E\left(  \partial_{z^2} \ell_{ih} \right)
            \right]^{-1}
             \gamma^{0}_j \;
        \E\left(\partial_{z^2} \ell_{ij} \tilde{X}_{ij} \tilde{X}_{ij}'
                 \right)  
    \right\}  \beta^0  ,
        \nonumber \\
      \overline D_{\infty} &=     \EE \left\{  \frac {1} {2\, J}  \sum_{(i,j) \in {\cal D}}
            \alpha^{0\, \prime}_i  
            \left[  \sum_{h \in {\cal D}_j}   \alpha^{0}_h \alpha^{0 \prime}_h \, \E\left(  \partial_{z^2} \ell_{hj} \right)
            \right]^{-1}
             \alpha^{0}_i \;
        \E\left(\partial_{z^2} \ell_{ij} \tilde{X}_{ij} \tilde{X}_{ij}'
                 \right)  
    \right\}  \beta^0.
   \end{align*}   
 The asymptotic bias  is therefore a positive definite matrix weighted average of the true parameter value as in the case of the probit model with additive  individual and time effects in \cite{FW16}. The bias grows linearly with the number of factors because
\begin{equation}\label{eq:ex3simplification}
  \sum_{j \in {\cal D}_i} 
            \gamma^{0 \, \prime}_j  
            \left[  \sum_{h \in {\cal D}_i}   \gamma^{0}_h \gamma^{0 \prime}_h 
            \right]^{-1}
             \gamma^{0}_j =  \sum_{i \in {\cal D}_j}  \alpha^{0\, \prime}_i  
            \left[  \sum_{h \in {\cal D}_j}   \alpha^{0}_h \alpha^{0 \prime}_h  \right]^{-1}
             \alpha^{0}_i = R,
\end{equation}
 and $\E\left(  \partial_{z^2} \ell_{i j} \right)$ and $\E\left(\partial_{z^2} \ell_{ij} \tilde{X}_{ij} \tilde{X}_{ij}'
                 \right)  $ are bounded uniformly in $i,j$. 
 }

\medskip

\noindent  \textbf{Example \ref{example: poisson} } (Count response model). \textit{In this case 
$$\ell_{ij}(z) = z Y_{ij} - \exp(z) - \log Y_{ij}!,$$
where the symbol $!$ denotes the factorial function, so that $\partial_{z} \ell_{ij} = Y_{ij} - \lambda_{ij}$ and  $\partial_{z^2} \ell_{ij} = \partial_{z^3} \ell_{ij}= - \lambda_{ij} $, where $\lambda_{ij} = \exp(z_{ij}^0)$. Substituting these values in the expressions of the bias of Theorem \ref{th:BothEffects} yields  
$$\overline B_{\infty} = \overline D_{\infty} = 0,$$ which generalizes the result in   \cite{FW16} of no asymptotic bias in the Poisson model with strictly exogenous covariates and  additive individual and time effects to the Poisson model with strictly exogenous covariates and factor structure. }

\subsection{Average Partial Effects}\label{subsec:apes}

We use additional assumptions to derive the asymptotic distribution of the estimator of the APEs. They involve smoothness conditions on the partial effect function $(\beta,\pi) \mapsto \Delta_{ij}(\beta,\pi)$ needed to obtain the limit distribution of $\widehat \delta$ from the limit distribution of $(\widehat \beta, \widehat \phi_n)$ via  delta method. For a vector of nonnegative integer numbers $v = (v_1, \ldots, v_{d_x})$, let $\partial_{\beta^v}:= \partial^{|v|} / \partial \beta_1^{v_1} \cdots \partial \beta_{d_x}^{v_{d_x}}$ and $|v| = v_1 + \ldots + v_{d_x}$. 

\begin{assumption} [Partial effects]   \label{ass:PanelA2} ~
  Let $\epsilon>0$, and let
     ${\cal B}^0_{\varepsilon}$ be a subset of $\mathbbm{R}^{d_x+1}$
     that contains an $\varepsilon$-neighborhood of $(\beta^0,\pi^0_{ij})$
     for all $i,j,I,J$.
  
  \begin{itemize}
  \item[(i)] Model: for all $i,j,I,J,$
  the partial effects depend on $\alpha_i$ and $\gamma_j$ through  $\pi_{ij} = \alpha_i'  \gamma_j$:  
\begin{equation*}
\Delta(Y_{ij},X_{ij}, \beta, \alpha_i, \gamma_j) = \Delta_{ij}(\beta, \pi_{ij}),
\end{equation*}
%where $\pi_{ij} = \alpha_i + \gamma_j$.
%is a known distribution function of the form \eqref{eq: index_model}.
where $(\beta,\pi) \mapsto \Delta_{ij}(\beta,\pi)$ is a known real-valued function. 
The realizations of the partial effects are denoted by $ \Delta_{ij} := \Delta_{ij}(\beta^0, \pi_{ij}^0).$

  \item[(ii)] Smoothness and moments: The function
       $(\beta,\pi) \mapsto \Delta_{ij}(\beta,\pi)$
        is
      four times continuously differentiable over ${\cal B}^0_{\varepsilon}$ a.s., 
      and $\max_{i,j} \E[|\partial_{\beta^v \pi^q} \ell_{ij}(\beta^0,z_{ij}^0)|^{8+\nu}]$, $|v| + q \leq 4$, are uniformly bounded over  $I,J$ for some $\nu >0$. 
     
 \end{itemize}
\end{assumption}

It is convenient again to introduce notation to simplify the expressions in the asymptotic distribution. Let $\Psi_{ij}$ be the weighted least squares population projection
\begin{align*}
   \Psi_{ij} &=  \alpha^{\ast \, \prime}_{i}  \gamma^0_j + \alpha^{0 \, \prime}_i \gamma^{\ast}_{j} ,
   &
  \left(\alpha^{\ast}_i , \, \gamma^{\ast}_j \right)
   & \in \argmin_{\alpha_{i},\gamma_{j}} \sum_{i,j}
  \E( - \partial_{z^2} \ell_{ij} )
        \left(  \frac{\E(  \partial_{\pi} \Delta_{ij} )}
                        {\E( \partial_{z^2} \ell_{ij} )}   
         - \alpha'_{i}  \gamma^0_j - \alpha^{0 \, \prime}_i \gamma_{j} \right)^2.
\end{align*}
We denote  the partial derivatives  of  $(\beta,\pi) \mapsto \Delta_{ij}(\beta,\pi)$ by $\partial_\beta \Delta_{ij}(\beta, \pi) := \partial  \Delta_{ij}(\beta,\pi)/\partial \beta$,
$\partial_{\beta \beta'} \Delta_{ij}(\beta, \pi) := \partial^2  \Delta_{ij}(\beta,\pi)/(\partial \beta \partial \beta')$,
$\partial_{\pi^q} \Delta_{ij}(\beta, \pi) := \partial^q \Delta_{ij}(\beta,\pi)/\partial \pi^q$, $q = 1,2,3,\ldots$.
We drop the arguments $\beta$ and $\pi$ when the derivatives are evaluated at the true
values $\beta^0$ and $\pi^0_{ij}$, e.g.
$\partial_{\pi^q} \Delta_{ij} := \partial_{\pi^q}\Delta_{ij}(\beta^0,\pi^0_{ij})$. 
We also define
$D_{\pi}  \Delta_{ij}  := \partial_{\pi}  \Delta_{ij} - \partial_{z^2} \ell_{ij} \Psi_{ij}$
and 
$D_{\pi^2}  \Delta_{ij}  := \partial_{\pi^2}  \Delta_{ij} -\partial_{z^3} \ell_{ij} \Psi_{ij}$.

\medskip

We are now ready to present the asymptotic distribution of  $\widehat \delta$ defined in \eqref{eq:apehat}.

\begin{theorem}[Asymptotic distribution of $\widehat \delta$]
  \label{th:DeltaLimit}
   Suppose that the assumptions of Theorem~\ref{th:BothEffects}
   and Assumption~\ref{ass:PanelA2} hold, and that the following limits exist:
\begin{align*}
   \overline {(D_{\beta} \Delta)}_{\infty} &= \EE \left[
        \frac {1} {n}  \sum_{(i,j) \in {\cal D}}
         \E(  \partial_{\beta} \Delta_{ij} -  \Xi_{ij} \partial_{\pi} \Delta_{ij}  ) \right]',
       \nonumber \\
   \overline B^\delta_{\infty} &=
      -  \EE \left\{  \frac {1} {I}  \sum_{(i,j) \in {\cal D}}
            \gamma^{0 \, \prime}_j  
            \left[ \sum_{h \in {\cal D}_i}   \gamma^{0}_h \gamma^{0 \prime}_h \, \E\left(  \partial_{z^2} \ell_{ih} \right)
            \right]^{-1}
             \gamma^{0}_j \;
        \E\left[
                \partial_{z} \ell_{ij}  
                \, D_{\pi}  \Delta_{ij}  
                + \frac 1 2  D_{\pi^2}  \Delta_{ij} 
                 \right]  
    \right\}  ,
        \nonumber \\
      \overline D^\delta_{\infty} &=   -  \EE \left\{  \frac {1} {J}  \sum_{(i,j) \in {\cal D}}
            \alpha^{0\, \prime}_i  
            \left[ \sum_{h \in {\cal D}_j}   \alpha^{0}_h \alpha^{0 \prime}_h \, \E\left(  \partial_{z^2} \ell_{hj} \right)
            \right]^{-1}
             \alpha^{0}_i \;
      \E\left[
                \partial_{z} \ell_{ij}  
                \, D_{\pi}  \Delta_{ij} 
                + \frac 1 2  D_{\pi^2}  \Delta_{ij} 
                 \right]  
    \right\} , \\
        \overline{V}_{\infty}^{\delta}  &=  - \EE  \left\{ \frac {1} {n}  \sum_{(i,j) \in {\cal D}}  
        \E \left( \Gamma_{ij} \Gamma_{ij}'  +  \Gamma_{ji}  \Gamma_{ij}' \right) \right\},
   \end{align*}
where $\Gamma_{ij} =   \overline {(D_{\beta} \Delta)}_{\infty} \overline W_\infty^{-1} \partial_{z} \ell_{ij} \tilde X_{ij}
      -    \Psi_{ij}  
          \partial_{z} \ell_{ij}$.
    Then,
    \begin{equation*}
    \sqrt{n} \left[\widehat \delta - \delta^0 
      - \frac{I} n \,  \overline {(D_{\beta} \Delta)}_{\infty}  \overline{W}_{\infty}^{-1} \overline B_{\infty} 
       - \frac{J} n \, \overline {(D_{\beta} \Delta)}_{\infty}   \overline{W}_{\infty}^{-1} \overline D_{\infty} 
      - \frac{I} n \,   \overline B^\delta_{\infty} 
       - \frac{J} n \,  \overline D^\delta_{\infty} 
       \right] \to_d \mathcal{N}(0 ,
           \;\overline V_{\infty}^{\delta}) .
    \end{equation*}
\end{theorem}

\medskip

\begin{remark}[Panel Data] In case (a) of Assumption \ref{ass:PanelA1}$(i)$, the term involving the cross products  $\Gamma_{ji}  \Gamma_{ij}'$ drops out from the asymptotic variance $\overline V_{\infty}^{\delta}$.
\end{remark}

\medskip

Theorem  \ref{th:DeltaLimit} shows that $\widehat \delta$ is consistent and normally distributed, but can have bias of the same order as its standard deviation.  The first two terms of the bias come from the bias of $\widehat \beta$. They drop out when  either $\widehat \beta$ does not have bias or the APE is estimated from a bias corrected estimator of $\beta$. We verify the presence of  bias in two of the running examples. 

\medskip

\noindent  \textbf{Example \ref{example: linear}} (Linear model). \textit{In this case $\overline B_{\infty}  = \overline D_{\infty}  = 0$ and
$$
\Delta_{ij}(\beta,\pi) = (Y_{ij} - X_{ij}'\beta -  \pi)^2, 
$$
so that $\partial_z \Delta_{ij} = -2(Y_{ij} - X_{ij}'\beta^0  - \pi_{ij}^0)$ and $\partial_{z^2} \Delta_{ij} = 2$. Substituting these values in the expressions of the bias of Theorem \ref{th:DeltaLimit}  yields  
$$
 \overline B^\delta_{\infty} =  \overline D^\delta_{\infty} =  - R \sigma^2,
$$
where we use \eqref{eq:ex3simplification}. This result formalizes the analysis in Section \ref{sec:numeric}
}

\medskip

\noindent  \textbf{Example  \ref{example: probij} } (Binary response model). \textit{Let $\Delta_{ij}(\beta,\pi)$ be as defined in either (\ref{example: probit: meff1}) or (\ref{example: probit: meff2}).
Using the notation previously introduced for this example, the expressions of $\overline B_{\infty}^{\delta}$ and $\overline D_{\infty}^{\delta}$ in Theorem \ref{th:DeltaLimit} yield
\begin{align*}
   \overline B_{\infty}^{\delta} &=
        \EE \left\{  \frac {1} {2\, I}  \sum_{(i,j) \in {\cal D}}
            \gamma^{0 \, \prime}_j  
            \left[  \sum_{h \in {\cal D}_i}   \gamma^{0}_h \gamma^{0 \prime}_h \, \E\left(  \partial_{z^2} \ell_{ih} \right)
            \right]^{-1}
             \gamma^{0}_j \;
        \E\left(\partial_{\pi^2} \Delta_{ij} - \Psi_{ij} H_{ij} \partial^2 F_{ij}                 \right)  
    \right\},
        \nonumber \\
      \overline D_{\infty}^{\delta} &=     \EE \left\{  \frac {1} {2\, J}  \sum_{(i,j) \in {\cal D}}
            \alpha^{0\, \prime}_i  
            \left[  \sum_{h \in {\cal D}_j}   \alpha^{0}_h \alpha^{0 \prime}_h \, \E\left(  \partial_{z^2} \ell_{hj} \right)
            \right]^{-1}
             \alpha^{0}_i \;
        \E\left(\partial_{\pi^2} \Delta_{ij} - \Psi_{ij} H_{ij} \partial^2 F_{ij} 
                 \right)  
    \right\}.
   \end{align*}   
As for the model parameter, these bias terms grow linearly with the number of factors $R$.}

\medskip

\noindent  \textbf{Example \ref{example: poisson}} (Count response model). \textit{Let $\Delta_{ij}(\beta,\pi)$ be as defined in either \eqref{example: poisson: meff1} or \eqref{example: poisson: meff}. In this case $\overline B_{\infty}  = \overline D_{\infty}  = 0$, and $\partial_z \Delta_{ij} = \partial_{z^2} \Delta_{ij} = \Delta_{ij}$. Substituting these values in the expressions of the bias of Theorem \ref{th:DeltaLimit}  yields  
$$
 \overline B^\delta_{\infty} =  \overline D^\delta_{\infty} =  0,
$$
which generalizes the result in   \cite{FW16} of no asymptotic bias for the estimators of the APEs in the Poisson model with strictly exogenous covariates and  additive individual and time effects to the Poisson model with strictly exogenous covariates and  factor structure.
}

%In Theorem~\ref{th:DeltaLimit} we consider the realizations of $\alpha_i^0$ and $\gamma_j^0$ as non-random, that is,
%the stochastic result in that theorem is conditional on those realization. By contrast, the result of Theorem~\ref{th:BothEffects}
%above holds both conditionally and unconditionally on the realization of nuisance parameters.

%\bigskip
%
%
%{\color{blue}
%\noindent  \textbf{Example \ref{example: linear}} (Linear model). \textit{For $\delta = \sigma^2$, the convergence rate is $r_{IJ} = \sqrt{IJ}$ regardless of the sampling properties of the unobserved individual and time effects because $\Delta_{ij} = (Y_{ij} - X_{ij}'\beta^0 - \pi^0_{ij})^2$ is  independent over $i$ and $\alpha$-mixing over $j$.  The distribution of the unobserved  effects is ancillary for the APE because the information matrix of the log-likelihood $\ell_{ij} = - .5 \log 2\pi - .5 \log \delta - .5 (Y_{ij} - X_{ij}'\beta - \pi_{ij})^2/\delta$ is orthogonal in $\pi_{ij}$ and $\delta$ at $\pi_{ij} = \pi_{ij}^0$ and $\delta = \delta^0$.  
%}
%}

\subsection{Bias correction and Inference}
Theorems  \ref{th:BothEffects}  and  \ref{th:DeltaLimit} establish that the estimators of the model parameter and APEs have a bias of the same order as their standard deviations in some models.  In this section, we  describe how to apply recent developments in nonlinear panel data  to correct the bias from the estimators. To simplify the notation we assume that there is no missing data.\footnote{ We refer to \cite{ARE} for a discussion on how to modify the corrections to deal with missing data.} We consider a generic estimator $\widehat \theta$ of the parameter $\theta$, which may correspond to the model parameter or an APE. In this notation, Theorems  \ref{th:BothEffects}  and  \ref{th:DeltaLimit}  show that $\widehat \theta$ can have a bias $\mB_{\infty} = \EE[ \mB(\beta^0,\phi_n^0)]$ with structure
$$
\mB(\beta,\phi_n) = \frac{B(\beta,\phi_n)}{J} + \frac{D(\beta,\phi_n)}{I}.
$$
The intuition behind this structure is that there are $J$ observations that are informative to estimate each $\alpha_i$ and $I$ observations that are informative to estimate each $\gamma_j$. 

\medskip

An analytical correction based on \cite{Hahn:2004p882} and \cite{FW16} can be formed as
$$
\widetilde \theta_{\rm ABC} = \widehat \theta - \widehat  \mB, \ \  \ \  \widehat  \mB =   \mB(\widehat \beta, \widehat \phi_n).
$$
A split-sample correction based on \cite{DJ2015} and \cite{FW16} can be formed as
$$
\widetilde \theta_{\rm SBC} = 3 \widehat \theta  - \bar \theta_{I,J/2}  - \bar \theta_{I/2,J}, 
$$
where $\bar{\theta}_{I,J/2}$ is the average of the estimators in the haft-panels $ \{(i,j) :  i = 1,\dots,I; j = 1, \dots, \lceil J/2 \rceil \}$ and $ \{(i,j) :  i = 1,\dots,I; j = \lfloor J/2 + 1\rfloor, \dots, J \}$, and $\bar{\theta}_{I/2,J}$ is the average of the estimators in the haft-panels $ \{(i,j) :  i = 1,\dots,\lceil I/2 \rceil; j = 1, \dots, J \}$ and $ \{(i,j) :  i = \lfloor I/2 + 1\rfloor,\dots,I; j = 1, \dots, J \}$, where $\lceil \cdot \rceil$ and $\lfloor \cdot \rfloor$ are the ceil and floor functions.  For network data where $I=J$ and the two dimensions of the data index the same entities,  \cite{Stata2017} proposed the leave-one-out correction
$$
\widetilde \theta_{\rm NBC} =  I \widehat \theta  - (I-1) \bar \theta_{I-1}, \ \ \bar \theta_{I-1} = I^{-1} \sum_{i=1}^I \widehat \theta_{-i}, 
$$
where $\widehat \theta_{-i}$ is the estimator in the subpanel $ \{(k,j) :  k = 1,\dots,I; j = 1, \dots, I, k \neq i, j \neq i  \}$, that is, the original panel leaving out the observations corresponding to the entity $i$ as either sender or receiver.

The discussion of bias correction so far is applicable very generally to network and panel models with two-way fixed effects.
We now specialize it to our nonlinear models with interactive fixed effects. 
%
%The corresponding Jackknife estimators for $\beta^0$
%are denoted $\widetilde \beta_{\rm SBC}$ and $\widetilde \beta_{\rm NBC}$, and those for $\delta^0$
%by $\widetilde \delta_{\rm SBC}$ and $\widetilde \delta_{\rm NBC}$.
%To justify those Jackknife corrected estimators  one requires an additional homogeneity
%assumption to guarantee that the data distribution for the half-samples and leave-one-out samples are representative for the full sample:
%
%\begin{assumption}[Unconditional homogeneity] \label{assumption: homogeneity}
%We either have\\
%(a) $(Y_{ij}, X_{ij}, \alpha_i, \gamma_j)$ is  identically distributed for all $i$ and $j$, \\
%or we have \\
%(b) $I=J$, and $(Y_{ij}, Y_{ji}, X_{ij}, X_{ji}, \alpha_i, \gamma_j)$ has the same distribution for every pair $(i,j)$.
%\end{assumption}
%
% 
%Notice that this assumption demands the unconditional distribution of the data to be homogenous across $i$ and $j$,
%while Assumption~\ref{ass:PanelA1}(i) was an assumption on the conditional distribution of the outcomes $Y_{ij}$.
%
For the analytic bias correction and for variance estimation we require consistent estimators for the quantities 
$\overline B_{\infty} $, $ \overline D_{\infty}$, $\overline{W}_{\infty}$, and $ \overline \Sigma_{\infty}$
defined in Theorem~\ref{th:BothEffects}. Let $\widehat B$, $ \widehat D$, $\widehat W$ and $\widehat \Sigma$
be the corresponding sample analogs, obtained by simply dropping expectations and plugging in the fixed
effect estimators for the true value of the parameters. For example, 
\begin{align*}
  \widehat W   &=  -  \frac {1} {n}  \sum_{(i,j) \in {\cal D}} 
              \partial_{z^2}  \widehat \ell_{ij}  \left( X_{ij} - \widehat \Xi_{ij}  \right) \left( X_{ij} - \widehat \Xi_{ij}  \right)' ,
\end{align*}
where $  \partial_{z^2}  \widehat \ell_{ij} =   \partial_{z^2}    \ell_{ij} \left( X_{ij}' \widehat \beta +  \widehat \alpha_i' \, \widehat \gamma_j \right)$,
and $\widehat \Xi_{ij} $ is the $d_x$-vector with 
elements   $ \widehat \Xi_{it,k}  =  \alpha^{\# \, \prime}_{i,k}  \widehat \gamma_j + \widehat \alpha^{ \prime}_i \gamma^{\#}_{t,k} $,
with $ \alpha^{\# \, \prime}_{i,k} $ and $\gamma^{\#}_{t,k} $ obtained as the solution to
\begin{align*}
  \left(\alpha^{\#}_{k} , \, \gamma^{\#}_{k} \right)
   &\in \argmin_{\alpha_{i,k},\gamma_{t,k}} \sum_{i,j}
  ( - \partial_{z^2} \widehat \ell_{ij} )
        \left(  \frac{ \partial_{z^2} \widehat \ell_{ij} X_{ij,k}}
                        {\partial_{z^2} \widehat \ell_{ij} }   
         - \alpha'_{i,k}  \widehat \gamma_j - \widehat \alpha^{ \prime}_i \gamma_{t,k} \right)^2.
\end{align*}  
Once those sample analogs are constructed, then the analytic bias correction of $\widehat \beta$
reads
$$\widetilde \beta_{\rm ABC} 
= \widehat \beta  
      - \frac{I} n \, \widehat W^{-1} \widehat B 
       - \frac{J} n \,  \widehat W^{-1} \widehat D.
$$
Analogously, we can construct sample analogs 
for $ \overline B^\delta_{\infty} $, $\overline D^\delta_{\infty}  $, $ \overline {(D_{\beta} \Delta)}_{\infty}$,
defined in Theorem~\ref{th:DeltaLimit}, in order to construct $  \widetilde \delta_{\rm ABC}$.
Also, let $  \widehat V^{\delta}$ be the sample analog of $\overline V_{\infty}^{\delta}$.

\begin{theorem} [Asymptotic Distribution of $\widetilde \beta_{\rm ABC} $ and $\widetilde \delta_{\rm ABC}$] \label{th:bc}
Under the conditions of Theorem \ref{th:BothEffects},
$$
      \sqrt{n} \left( \widetilde \beta_{\rm ABC} - \beta^0 
         \right)
          \;  \to_d \;
      {\cal N}( 0 ,
           \;\overline W_{\infty}^{-1}  \overline \Sigma_{\infty} \overline W_{\infty}^{-1}),
$$
$  \widehat W \,  \rightarrow_P \, \overline W_{\infty}$
and
$    \widehat \Sigma  \,  \rightarrow_P \,   \Sigma_{\infty}$.
If, in addition, the conditions of Theorem~\ref{th:DeltaLimit} hold, then
$$
      \sqrt{n} \left( \widetilde \delta_{\rm ABC} - \delta^0 
         \right)
          \; \to_d \;
      {\cal N}( 0 ,
           \;\overline V_{\infty}^{\delta}),
$$
and $
     \widehat V^{\delta} \, \rightarrow_P \,  \overline V_{\infty}^{\delta} .
$ 
 
\end{theorem}
Theorem~\ref{th:bc} shows that analytic bias correction can be used to obtain
estimators of $\beta^0$ and $\delta^0$ that are asymptotically unbiased. It also shows that the simple plug-in estimators
of the asymptotic variances are consistent, thus allowing to perform asymptotically valid hypothesis tests
and to construct asymptotically valid confidence intervals for $\beta^0$ and $\delta^0$.

Showing that the Jackknife corrected estimators $\widetilde \beta_{\rm JBC}$ and $\widetilde \delta_{\rm JBC}$ have the same asymptotic distribution as 
$\widetilde \beta_{\rm ABC}$ and $\widetilde \delta_{\rm ABC}$  requires
an additional homogeneity assumption, which guarantees that the
unconditional distribution of the data is stationary across $i$ and $j$. This assumption ensures that the terms $B$ and $D$ in the bias expansion of
$ \widehat \theta$ are the same as in the bias expansions of the half-panel estimates $\bar \theta_{I,J/2}$  and $\bar \theta_{I/2,J}$,
so that forming the Jackknife linear combination $\widetilde \theta_{\rm SBC}$  indeed cancels those bias terms.
In other words, the data distribution should not systematically differ across the subsamples used for the Jackknife correction  \citep{DJ2015, FW16}.
 
The derivation of the asymptotic distribution of the leave-one-out correction $\widetilde \theta_{\rm NBC}$  furthermore requires a third-order bias expansion (i.e., up to terms of order $1/I^2$), because in the expression of $\widetilde \theta_{\rm NBC}$  the  estimators $\widehat \theta$ and $\bar \theta_{I-1}$
are multiplied by the factors $I$ and $(I-1)$ that grow with the sample size.
%, implying that regularity of the second-order bias terms in the expansion of $\widehat \theta$ 
%and $ \bar \theta_{I-1}$  is required for the first-order unbiasedness of $\widetilde \theta_{\rm NBC}$.
We have not worked out those higher-order expansion here,
but we refer to \cite{SunDhaene2017} for an example of higher-order expansions in nonlinear panel models.

\section{Implementation Details}\label{sec:implementation}

\subsection{Computation of the Estimator}

We apply the following EM-type algorithm based on \cite{C14} to find the solution to the program \eqref{LobjMAX}: 

\begin{algorithm}[Likelihood Maximization]\label{alg:estimation} (i) Initialization: provide the initial values $\beta^{(0)}$, $\alpha^{(0)}$ and $\gamma^{(0)}$ for $\beta$, $\alpha$ and $\gamma$ (e.g., set all these initial values equal to zero). (ii) Iteration $k \geq 1$: given  $\beta^{(k-1)}$, $\alpha^{(k-1)}$ and $\gamma^{(k-1)}$, (a) compute the $I \times J$ matrix $\mu^{(k)}$ with elements
$$
\mu_{ij}^{(k)} = z_{ij}^{(k)} - \frac{\partial_z \ell_{ij}(z_{ij}^{(k)})}{\partial_{z^2} \ell_{ij}(z_{ij}^{(k)})}, \ \ z_{ij}^{(k)} = X_{ij}'\beta^{(k-1)} + \alpha_i^{(k-1)'}\gamma_{j}^{(k-1)} ;
$$ 
(b) update $\alpha$ and $\gamma$: solve the principal components program
$$
(\alpha^{(k)}, \gamma^{(k)}) \in \argmin_{{\rm vec}(a) \in \mathbb{R}^{I\times R}, {\rm vec}(g) \in  \mathbb{R}^{J\times R}} {\rm Tr}(\mu^{(k)} - a'g)(\mu^{(k)} - a'g)';
$$
%$$
%\alpha^{(k)} = \mathrm{eig}_R(\mu^{(k)'}\mu^{(k)}), \ \ \gamma^{(k)} = \mu^{(k)'} \alpha^{(k)},
%$$
%where $\mathrm{eig}_R(A)$ denotes a matrix formed with the $R$ eigenvectors of $A$ corresponding to the largest eigenvalues, merged by column; 
and (c) update $\beta$:
$$
\beta^{(k)} = \left[\tilde X^{(k)'} \tilde X^{(k)} \right]^{-1} \tilde X^{(k)'} {\rm vec}(\tilde \mu^{(k)}), 
$$
where $\tilde \mu^{(k)} = \mathcal{M}_{\alpha^{(k)}} \mu^{(k)}  \mathcal{M}_{\gamma^{(k)}}$,  $\tilde X^{(k)}$ is an $IJ \times d_x$ matrix with typical column $\tilde X_c^{(k)} = {\rm vec}( \mathcal{M}_{\alpha^{(k)}} \mathbb{X}_c  \mathcal{M}_{\gamma^{(k)}})$, ${\cal M}_{\alpha^{(k)}} := \mathbb{I} - \alpha^{(k)}(\alpha^{(k)'}\alpha^{(k)})^{\dag}\alpha^{(k)'}$, ${\cal M}_{\gamma^{(k)}} := \mathbb{I} - \gamma^{(k)}(\gamma^{(k)'}\gamma^{(k)})^{\dag}\gamma^{(k)'}$ and $\mathbb{X}_c$ is an $I \times J$ matrix with elements $X_{ij,c}$.
(iii) Convergence: repeat step (ii) until $\|\beta^{(k)} - \beta^{(k-1)} \|_{\infty}  \leq \epsilon$, where $\epsilon$ is a tolerance parameter (e.g., $\epsilon = 10^{-5}$).
\end{algorithm}

\cite{C14} analyzed the convergence guarantees for this algorithm. She showed that the algorithm  converges to a local maximum of the log-likelihood. Since the log-likelihood can have multiple local maxima, we recommend to run the algorithm for several initial values and choose the solution that yields the highest value of the log-likelihood.

\begin{remark}[Additive Effects] Separate additive effects in both dimensions can be treated as one known factor of ones with unknown loading and one known loading of ones with unknown factor. They can therefore be included by imposing the constraints that the second column of $\alpha^{(k)}$ and the first column of $\gamma^{(k)}$ are equal to vectors of ones in  part (b) of step (ii). Other known factors with unknown loadings or known loadings with unknown factors can be incorporated similarly by imposing constraints in part (b) of step (ii).
\end{remark}

\subsection{Estimating the Number of Factors}

The problem of estimating the number of factors $R$ has been extensively discussed for linear factor models without covariates,
see for example, \cite{BaiNg2002,hallin2007generalized,Onatski2010,alessi2010improved,AhnHorenstein2013}. 
These methods can be extended to linear models with covariates, provided that an appropriate preliminary 
estimator $\widetilde \beta$ of the regression parameters $\beta$ is available that does not require knowing $R$. In this case the existing methods are applied to the residuals $Y_{ij} - X_{ij}'  \widetilde \beta$. If there exists an upper bound for the number of factors, $R_{\max} \geq R$, then the preliminary estimator $\widetilde \beta$ is given by the least squares estimator with
$R_{\max}$ factors, see \cite{MoonWeidner2015}. These methods can also be extended to the nonlinear factor models that we consider. For example, the various information criteria in \cite{BaiNg2002} are all based on minimizing the sum of squared residuals plus a penalty function, and can be adapted to the likelihood problem in the spirit of classic model selection criteria (AIC, BIC, etc), see \cite{AB16} for an example of this approach.\footnote{\cite{kss12} proposed an alternative estimator of the number of factors in linear models specially adapted to i.i.d. errors.} It is less obvious, however, how to extend the eigenvalue ratio (ER) test of \cite{AhnHorenstein2013} to nonlinear models.  This method is  attractive because it does not depend on somewhat arbitrary functional form assumptions or tuning parameters. It only requires to specify $R_{\max}$, but there is no penalty function or any other tuning parameter. Assuming that there exists an upper bound $R_{\max} > R$, we propose adapting this method to nonlinear factor single-index models using the following algorithm:

\begin{algorithm}[Estimation of $R$]\label{alg:nfactors}
(1) Obtain preliminary estimates $\widetilde \beta$, $\widetilde \alpha$ and $\widetilde \gamma$ using Algorithm \ref{alg:estimation} with $R = R_{\max}$.  
%     by solving the above fixed-effect likelihood maximization, using $(R_{\max}+1)$ factors in the estimation, that is, we solve
%\begin{equation*}
%  (\widetilde \beta, \widetilde \alpha, \widetilde \gamma)
%  \, \in \,
%\argmax_{(\beta, \alpha, \gamma) \in  \mathbbm{R}^{d_x + (R_{\max}+1)(I+J)}} \;    \sum_{(i,j) \in {\mathcal{D}}}  \log f(Y_{ij}  \mid X_{ij}' \beta + \alpha_i' \gamma_j ) .
%\end{equation*}
(2) Compute preliminary estimates of the factor structure as the $I \times J$
matrix $\widetilde \pi$ with elements $\widetilde \pi_{ij} := \widetilde \alpha_i' \widetilde \gamma_j$. By construction, ${\rm rank}( \widetilde \pi) \leq R_{\max}$. (3)  Apply the eigenvalue ratio criterion of \cite{AhnHorenstein2013} to $\widetilde \pi$ in order to estimate $R$, that is,
     \begin{align*}
           \widehat R &= \max_{r \in \{1,\ldots, R_{\max} - 1\}}  {\rm EV}(r) ,
           &     
           {\rm EV}(r) &= \frac{ \lambda_r\left( \widetilde \pi \widetilde \pi' \right) } { \lambda_{r+1}\left( \widetilde \pi \widetilde \pi' \right) } ,
     \end{align*}
     where $\lambda_r\left( \widetilde \pi \widetilde \pi' \right)$ denotes the $r$'th largest eigenvalue of $\widetilde \pi \widetilde \pi' $.
\end{algorithm}

\begin{remark}[Additive Effects] When the specification includes factors with known loadings and/or loadings with known factors, $\widetilde \pi_{ij}$ is the estimator of the part of the factor structure that does not contain known factors and known loadings and $R_{\max}$ refers to the number of factors in this part. 
\end{remark}

This algorithm can be seen as a natural generalization of the \cite{AhnHorenstein2013} to single-index models. Indeed, if we applied it to the linear model $Y_{ij} = X_{ij}' \beta +  \alpha_i' \, \gamma_j + \varepsilon_{ij}$, 
with $\log f(Y_{ij}  \mid X_{ij}' \beta + \alpha_i' \gamma_j )$ replaced by $-(Y_{ij} - X_{ij}' \beta -  \alpha_i' \, \gamma_j)^2$, 
then $$\lambda_r\left( \widetilde \pi \widetilde \pi' \right) = \lambda_r\left[ \left(Y_{ij}  - X_{ij}' \widetilde \beta   \right) \left(Y_{ij}  - X_{ij}' \widetilde \beta   \right) ' \right],$$ which corresponds to the eigenvalue ratio criterion of \cite{AhnHorenstein2013}
applied to the residuals $Y_{ij}  - X_{ij}' \widetilde \beta $. Based on this coverage of the linear model, we conjecture that $\widehat R$ is a consistent estimator of $R$ under suitable conditions. To formalize this argument, a key step  is to establish the consistency of the preliminary estimator $\widetilde \beta $, extending the results of  \cite{MoonWeidner2015} from linear to nonlinear models, and the properties of the estimator of the factor structure $\widetilde \pi$. The main technical challenge is to characterize  $\widetilde \pi$, which is not even available for the linear model with covariates and $R>R_0$.
 We leave this analysis to future research. In the rest of the section we show that the method performs well in numerical simulations.

%We do not provide a formal consistency proof for this estimator here, but we show that it performs well in our Monte Carlo simulations below,
%and the known results for linear panel models with interactive fixed effects provide a strong intuition for why we expect $\widehat R$ to be consistent. Specifically,
%if the above algorithm is applied to a linear panel regression model $Y_{ij} = X_{ij}' \beta +  \alpha_i' \, \gamma_j + \varepsilon_{ij}$, 
%with $\log f(Y_{ij}  \mid X_{ij}' \beta + \alpha_i' \gamma_j )$ replaced by $-(Y_{ij} - X_{ij}' \beta -  \alpha_i' \, \gamma_j)^2$, 
%then one can easily show that $\lambda_r\left( \widetilde \pi \widetilde \pi' \right) = \lambda_r\left[ \left(Y_{ij}  - X_{ij}' \widetilde \beta   \right) \left(Y_{ij}  - X_{ij}' \widetilde \beta   \right) ' \right]$, that is, we are indeed just applying the eigenvalue ratio criterion of \cite{AhnHorenstein2013}
%to the residuals $Y_{ij}  - X_{ij}' \widetilde \beta $ in that case. The results on $\widetilde \beta $ for $R_{\max} \geq R$
%in  \cite{MoonWeidner2015} can then be used to show that $ \widehat R $ is indeed consistent for $R$ in that linear case.
%We expect the same initial consistency results for the preliminary estimator in (1) to hold for non-linear models, and therefore expect $\widehat R $
%to be consistent in non-linear models as well, but a proof of that result is left to future research.

 To show how $\widehat R$ performs in small samples, we generate samples from the Poisson model of Example \ref{example: poisson}  with additive effects where $z_{ij} = X_{ij} \beta + \alpha_{1i} + \gamma_{1j} + \alpha_{2i}' \gamma_{2j}$,   $X_{ij} \sim N(1,1/3)$, $\beta = 0$, $\alpha_{1i} \sim U(0,1)$, $\gamma_{1i}  \sim U(0,1)$,  $\alpha_{2i}$ is an $R_2$-dimensional standard normal vector with  independent components, $\gamma_{2i}$ is an $R_2$-dimensional standard normal vector with  independent components,  and $X_{ij}$, $\alpha_{1i'}$, $\gamma_{1j'}$, $\alpha_{2i''}$ and $\gamma_{2j''}$ are mutually independent for all $i,i',i'' = 1,\ldots,I$ and $j,j',j'' = 1,\ldots,J$. We generate $1,000$ datasets with $I=J\in\{50,75, 100,150\}$ and $R_2 \in \{1,2,3\}$, and apply Algorithm \ref{alg:nfactors} with $R_{\max} \in \{4, 5,6 \}$. Table \ref{table:nfactors} reports the average of $\widehat R_2$ across simulations and the proportion of simulations where $\widehat R_2 = R_2$.  Here, we find that $\widehat R_2$  has little bias and often yields the true $R_2$, specially for the larger sample sizes with $I \geq 75$. Interestingly, the performance of  $\widehat R_2$  improves as $R_{\max}$ gets closer to  $R_2$.  Given this sensitivity, we recommend computing  $\widehat R_2$ for several values of  $R_{\max}$.
  
\begin{table}\caption{Simulation Results for $\widehat R_2$ in Poisson Model}\label{table:nfactors}\vspace{.25cm}
%\caption{Simulation Results: Poisson with index $Z_{it}=X_{it}\beta+\alpha_{i}+\gamma_{t}+\lambda_{i}f_{t}$,
%with $\beta=0$, $\alpha_{i}\sim unif(0,1),\text{\ensuremath{\gamma_{t}\sim unif(0,1)},\ensuremath{\lambda_{i}\sim N(0,1)},\ensuremath{f_{t}}\ensuremath{\sim N(0,1)}}$;
%simulated 100 times}
\centering
\begin{tabular}{cccccccc}
\hline\hline 
$I=J$ & $R_{\max}$ & $\E[\widehat R_2]$ &  $\Pr(\widehat R_2 = R_2)$  & $\E[\widehat R_2]$ &  $\Pr(\widehat R_2 = R_2)$  & $\E[\widehat R_2]$ &  $\Pr(\widehat R_2 = R_2)$  \tabularnewline
\hline 
& & \multicolumn{2}{c}{ $R_2=1$} &  \multicolumn{2}{c}{ $R_2=2$} & \multicolumn{2}{c}{ $R_2=3$} \tabularnewline
50 &  4 & 1.05 & 0.96   & 1.94 & 0.84   & 2.80 & 0.88 \tabularnewline 
& 5 & 1.16 & 0.88  & 1.92 & 0.71  & 2.84 & 0.67 \tabularnewline
& 6 & 1.34 & 0.75  & 1.90 & 0.57  & 2.83 & 0.50 \medskip \tabularnewline
75 &  4 & 1.01 & 0.99   & 1.99 & 0.96   & 2.78 & 0.83 \tabularnewline 
& 5 & 1.01 & 0.99  & 1.97 & 0.91  & 2.99 & 0.83 \tabularnewline
& 6 & 1.03 & 0.97  & 1.93 & 0.83  & 3.11 & 0.72 \medskip \tabularnewline
100 &  4 & 1.06 & 0.96   & 2.01 & 0.98  & 2.98 & 0.99 \tabularnewline 
& 5 & 1.13 & 0.92  & 2.06 & 0.95  & 3.00 & 0.99 \tabularnewline
& 6 & 1.28 & 0.87  & 2.11 & 0.92  & 3.01 & 0.98 \medskip \tabularnewline
150 &  4 & 1.01 & 0.99   & 2.01 & 0.97   & 2.99 & 0.99 \tabularnewline  
& 5 & 1.04 & 0.98  & 2.09 & 0.90  & 2.98 & 0.96 \tabularnewline
& 6 & 1.09 & 0.96  & 2.15 & 0.91  & 2.99 & 0.94 \tabularnewline
\hline\hline 
\multicolumn{8}{l}{\footnotesize{Notes: 1,000 simulations. The design includes one covariate and additive effects. }}
\end{tabular}
\end{table}

\section{Application to Gravity Equation}\label{sec:empirics}

\subsection{Gravity Equation with Multiple Latent Factors}
%\protect\footnote{We thank Carlo Perroni for help discussions on the trade literature.}}

The gravity equation is a fundamental empirical relationship in international economics. We estimate a gravity equation of trade between countries using data from \cite{Helpman01052008}  on bilateral trade flows and other trade-related variables for 157 countries in 1986.\footnote{The original data set includes 158 countries. We exclude Congo because it did not export to any other country in 1986.} The data set contains a network of trade data where both $i$ and $j$ index countries as senders (exporters) and receivers (importers), such that  $I = J = 157$. The outcome $Y_{ij}$ is the volume of trade in thousands of constant 2000 US dollars from country $i$ to country $j$, and the covariates $X_{ij}$ include determinants of bilateral trade flows such as the logarithm of the distance in kilometers between country  $i$'s capital and country $j$'s capital and  indicators for common colonial ties, currency union, regional free trade area (FTA),  border,  legal system,  language, and religion.  Table \ref{table:ds} reports descriptive statistics of the variables used in the analysis. There are $157 \times 156 = 24,492$ observations corresponding to different pairs of countries. The observations with $i = j$ are missing because we do not observe trade flows from a country to itself.  The trade variable in the first row is an indicator of positive volume of trade. There are no  trade flows for  55\% of the country pairs.  

%The volume of trade variable exhibits severe overdispersion relative to the mean. 

\begin{table}[ht]\caption{Descriptive Statistics}\label{table:ds}
\centering
\begin{tabular}{lcc}
  \hline\hline
 & \text{Mean} & \text{Std. Dev.} \\ 
  \hline
  Trade & 0.45 & 0.50 \\ 
  Trade Volume & 84,542 & 1,082,219 \\ 
  Log Distance & 4.18 & 0.78 \\ 
  Legal & 0.37 & 0.48 \\ 
  Language & 0.29 & 0.45 \\ 
  Religion & 0.17 & 0.25 \\ 
  Border & 0.02 & 0.13 \\ 
  Currency & 0.01 & 0.09 \\ 
  FTA & 0.01 & 0.08 \\ 
  Colony & 0.01 & 0.10 \\ \hline
  \multicolumn{1}{l}{Country Pairs} &  \multicolumn{2}{c}{24,492} \\ 
   \hline\hline
   \multicolumn{3}{l}{\footnotesize{Source: \cite{Helpman01052008}}}
\end{tabular}
\end{table}

\medskip

We estimate a Poisson model with the following specification of the intensity
$$
\E[Y_{ij} \mid X_{ij}, \alpha_{1i}, \gamma_{1j}, \alpha_{2i}, \gamma_{2j} ] = \exp(X_{ij}'\beta + \alpha_{1i} + \gamma_{1j} + \alpha_{2i}'\gamma_{2j}), 
$$
where $\alpha_{2i}$ and $\gamma_{2i}$ are $R_2$-dimensional vectors of factors and factor loadings. This model is a special case of Example \ref{example: poisson} with $\alpha_i = (\alpha_{1i}, 1, \alpha_{2i}')'$,  $\gamma_j = (1,\gamma_{1j}, \gamma_{2j}')'$, and $R = 2 + R_2$.  We explicitly include additive importer and exporter effects to account for scale and multilateral resistance effects following \cite{EatonKortum2001} and \cite{AndersonWincoop2003}. Moreover, we also include interactive country effects to capture possible clustering and homophily induced by latent factors such as country trade partnerships, presence of multinationals or immigrant communities, or differences in natural resources or industrial composition.

 \medskip

\begin{table}
\caption{Parameters of Gravity Equation}\label{tab:Coefficients-of-Static}
\medskip
\begin{centering}
\begin{tabular}{lccccccc}
\hline\hline
 %& (1) & (2) & (3) & (4) & (5) & (6) & (7)\tabularnewline
 & $R_2=0$ & $R_2=1$ & $R_2=2$ & $R_2=3^*$ & $R_2=4$ & $R_2=5$ & $R_2=6$\tabularnewline
\hline
Log Distance & -0.64 & -0.63 & -0.71 & -0.69 & -0.77 & -0.90 & -1.01\tabularnewline
% & (0.04) & (0.05) & (0.05) & (0.07) & (0.08) & (0.10) & (0.11)\tabularnewline
 & (0.05) & (0.05) & (0.05) & (0.06) & (0.07) & (0.09) & (0.21)\tabularnewline
 & [0.07] & [0.05] & [0.06] & [0.06] & [0.08] & [0.09] & [0.22]\tabularnewline
Border & 0.71 & 0.41 & 0.32 & 0.36 & 0.38 & 0.36 & 0.27\tabularnewline
% & (0.09) & (0.10) & (0.11) & (0.11) & (0.13) & (0.13) & (0.15)\tabularnewline
 & (0.12) & (0.06) & (0.05) & (0.05) & (0.06) & (0.12) & (0.11)\tabularnewline
 & [0.16] & [0.07] & [0.06] & [0.06] & [0.06] & [0.12] & [0.11]\tabularnewline
Legal & 0.30 & 0.14 & 0.26 & 0.22 & 0.13 & 0.16 & 0.27\tabularnewline
% & (0.06) & (0.07) & (0.08) & (0.09) & (0.09) & (0.10) & (0.13)\tabularnewline
 & (0.04) & (0.04) & (0.04) & (0.04) & (0.04) & (0.06) & (0.11)\tabularnewline
 & [0.06] & [0.04] & [0.04] & [0.04] & [0.04] & [0.06] & [0.11]\tabularnewline
Language & -0.17 & -0.19 & -0.02 & 0.03 & -0.09 & -0.03 & 0.09\tabularnewline
% & (0.10) & (0.12) & (0.12) & (0.13) & (0.14) & (0.15) & (0.17)\tabularnewline
 & (0.07) & (0.07) & (0.06) & (0.06) & (0.07) & (0.11) & (0.22)\tabularnewline
 & [0.10] & [0.07] & [0.06] & [0.06] & [0.08] & [0.12] & [0.21]\tabularnewline
Colony & 0.36 & 0.58 & 0.39 & 0.45 & 0.63 & 0.61 & 0.55\tabularnewline
% & (0.13) & (0.15) & (0.15) & (0.17) & (0.19) & (0.19) & (0.22)\tabularnewline
 & (0.08) & (0.11) & (0.09) & (0.09) & (0.12) & (0.28) & (0.46)\tabularnewline
 & [0.12] & [0.14] & [0.12] & [0.12] & [0.14] & [0.28] & [0.46]\tabularnewline
Currency & 0.60 & 0.29 & 1.37 & 1.38 & 0.65 & 0.63 & 0.77\tabularnewline
% & (0.48) & (0.71) & (0.68) & (0.68) & (0.80) & (0.83) & (0.96)\tabularnewline
 & (0.27) & (0.31) & (0.39) & (0.33) & (1.08) & (1.93) & (2.05)\tabularnewline
 & [0.30] & [0.38] & [0.41] & [0.36] & [1.16] & [1.92] & [2.13]\tabularnewline
FTA & 0.25 & 0.15 & 0.17 & 0.13 & 0.25 & 0.31 & 0.26\tabularnewline
% & (0.11) & (0.12) & (0.13) & (0.14) & (0.16) & (0.17) & (0.18)\tabularnewline
 & (0.07) & (0.06) & (0.06) & (0.06) & (0.09) & (0.14) & (0.25)\tabularnewline
 & [0.09] & [0.07] & [0.07] & [0.07] & [0.09] & [0.14] & [0.26]\tabularnewline
Religion & -0.25 & 0.18 & 0.24 & 0.34 & 0.44 & 0.30 & 0.35\tabularnewline
% & (0.18) & (0.19) & (0.23) & (0.25) & (0.26) & (0.27) & (0.34)\tabularnewline
 & (0.12) & (0.11) & (0.14) & (0.13) & (0.13) & (0.27) & (0.34)\tabularnewline
 & [0.13] & [0.11] & [0.13] & [0.13] & [0.13] & [0.26] & [0.34]\tabularnewline
 \hline
Log-likelihood & -0.44 & 0.31 & 0.67 & 0.84 & 0.96 & 1.04 & 1.11 \\ 
%BIC & ??? & 0.10 & 0.39 & 0.48 & 0.53 & 0.55 & 0.54 \\ 
%exporter country effects & Yes & Yes & Yes & Yes & Yes & Yes & Yes
%\tabularnewline
%importer country effects & Yes & Yes & Yes & Yes & Yes & Yes & Yes\tabularnewline
\hline\hline
\multicolumn{8}{l}{\footnotesize{Notes: all the columns include importer and exporter additive effects. }}\\
\multicolumn{8}{l}{\footnotesize{Standard errors in parenthesis.  Standard errors robust to reciprocity in brackets.}}\\
\multicolumn{8}{l}{\footnotesize{$^*$ Number of factors selected with $R_{\max} =5$. Log-likelihood is multiplied by 100.}}
\end{tabular}
\par\end{centering}
%Notes: Data set obtained from \cite{helpman2008estimating}.  All
%columns are estimated using an unbalanced panel of 157 countries.
%Standard errors in parentheses.  The total number of observations
%is 24,492.
\end{table}

Table \ref{tab:Coefficients-of-Static} reports the estimates and standard errors of the parameter $\beta$.\footnote{We do not report estimates of APEs because in the specification of the Poisson model that we use  the parameters can be interpreted as elasticities. } 
% obtained using Algorithm \ref{alg:nr} in Section \ref{subsec:comp}. 
We consider specifications with different number of interactive effects, $R_2$, in addition to the additive effects .  The last row of the table reports the maximum value of the average log-likelihood, 
$L(\widehat \beta, \widehat \phi_n)/n $. We report two sets of standard errors corresponding to the dependence structures of cases (a) and (b) of Assumption \ref{ass:PanelA1}(i). The standard errors in brackets account for possible reciprocity in the data.  
% and the $BIC_{3}$ model selection criterion of \cite{BN02} that penalizes the number of estimated parameters,
%$$
%BIC(R_2) = \frac{1}{n} L(\widehat \beta, \widehat \phi_n) - [(R_2+2)(I+J)+d_x] \log\left( \frac{n}{I+J} \right) \frac{I+J}{n}
%$$
In this case, the method of Section \ref{sec:implementation}  selects $R_2 = 3$ factors when $R_{\max} = 4$ and $R_{\max} = 5$. We take  $R_2 = 3$ as our preferred specification, but we also note that, relative to the standard errors, the estimates are not very sensitive to the $R_2$ in the range of values  that we consider. One possible concern with the use of the Poisson model in the trade application is the excess zeros, i.e. the high probability of zero trade.\footnote{We thank an anonymous referee for raising this issue.} In this case, however, it does not seem to be a problem because the estimated model with $R_2=3$ predicts a probability of zero trade of $0.61$, which is higher than the observed probability of $0.55$.   

\medskip

%The standard errors are adjusted by the finite population correction
%parameter, using a population equal to the sample size. 
%In addition to importer and exporter country effects, we also include different
%number of unobserved factors to control for country gravity scale
%effects and other unobserved trade barrier factors. In Table \ref{tab:Coefficients-of-Static},
%column (1) reports estimates with only importer and exporter country
%effects (entering into the model additively), column (2) reports estimates
%for additional one unobserved factor, column (3) reports estimates
%for additional two unobserved factors, and column (4) reports estimates
%for additional three unobserved factors\footnote{Results for more than three unobserved factors are also available
%upon request.}. 
We find that the sign of most of the effects is robust to the inclusion of latent factors.  The only exceptions are the effects of common religion and language, which in the specification with only additive effects have counterintuitive negative signs  that turn positive in our preferred specification. 
%Table \ref{tab:Coefficients-of-Static} shows, under different specifications,
%estimates obtained with the Poisson method vary little on the magnitude
%of the coefficient of $ldist$, and the coefficients remain negative\footnote{For Poisson regression, the coefficient of log distance means elasticity.}.
Comparing across columns, we observe that the model without factors seems to exaggerate the role of common border, whereas it downplays the effect of distance and colonial links. For example, increasing by 10\% the distance  reduces by 6.9\%  the volume of trade and  sharing border  increases it by 36\% according to our preferred specification with  $R_2 = 3$, whereas the same effects are  6\% and 71\% according to the specification with $R_2=0$. Except for language, all the coefficients are individually significant at the 5\% level.  Overall, increasing the number of factors makes the estimates less precise due to the loss of degrees of  freedom. This observation showcases a trade-off in estimation between efficiency and robustness to richer dependence structures in the unobservables. Finally, accounting for reciprocity slightly increases the standard errors, but does not change the statistical significance of the estimates. 

%The effects of common legal system, currency union and  FTA remain  positive and statistically significant after the inclusion the factors, without any clear pattern of change. Overall, increasing the number of factors makes the estimates less precise due to the loss of degrees of  freedom. This observation showcases a trade-off in estimation between efficiency and robustness to richer dependence structures in the unobservables. 

%The effects of $legal$ and $fta$
%remain positive irrespective the number of unobserved factors controled;
%there are some decay in the magnitudes of the effects produced, although
%the differences are small. The effect of $colony$ is positive and
%robust to the inclusion or complete omission of unobserved factors.
%The coefficient of $currency$ increases from about $0.6$ to $1.38$,
%suggesting that excluding unobserved factors may understate the increase
%in the volume of trade when trading partners share a common currency.
%The coefficients of $language$ and $religion$ change from negative
%to positive as more unobserved factors are included, the changes in
%the sign of coefficients suggesting that excluding unobserved factors
%may understate the importance of having the same language and religion
%on the determination of trade flows. 
%

\subsection{Calibrated Monte Carlo Simulation}

We evaluate the finite-sample properties of our estimation and inference methods in a Monte Carlo simulation that mimics the trade application. The design is calibrated to the Poisson model with additive importer and exporter country effects and one factor. We analyze the performance of the estimator of $\beta$  in terms of bias, dispersion and inference accuracy.  To speed up computation,
we include only one covariate: the log distance. More specifically, we generate $Y_{ij}$  from a Poisson distribution with intensity $\exp(X_{ij} \widehat{\beta} +\widehat{\alpha}_{1i}+\widehat{\gamma}_{1j}+\widehat{\alpha}_{2i}\widehat{\gamma}_{2j})$ independently across $i$ and $j$, where $X_{ij}$ takes the values of log-distance in the trade data set, and $\widehat{\beta}$ and $\{\widehat{\alpha}_{1i}, \widehat{\alpha}_{2i},\widehat{\gamma}_{1i}, \widehat{\gamma}_{2i}\}_{j=1}^{157}$,
are equal to the estimates of the parameter, importer effects,
exporter effects, factors and factor loadings. We repeat
this procedure in $1,000$ simulations for four different sample sizes:
$I=50$, $I=75$, $I=100$ and $I=157$ (full sample in the
application). For each sample size and simulation, we draw a random
sample of $I$ countries both as importers and exporters without replacement,
so that the number of observations is $I\times(I-1)$. For each simulated
sample, we reestimate the model parameter and standard errors, and construct 95\% confidence interval for the model parameter. 

\medskip

\begin{table}
\caption{Results of Calibrated Simulations}\label{tab:Calibrated-Monte-Carlo} 
\medskip
%\begin{onehalfspace}
\begin{centering}
\begin{tabular}{crrrrrrrrrr}
\hline\hline 
% & \multicolumn{5}{c}{R=1}\tabularnewline
%\cmidrule{2-6} 
% & (1) & (2) & (3) & (4) & (5)\tabularnewline
\multicolumn{1}{c}{$I$} &  \multicolumn{1}{c}{Bias}  &  \multicolumn{1}{c}{SD} &  \multicolumn{1}{c}{RMSE}  & \multicolumn{1}{c}{SE/SD} & \multicolumn{1}{c}{p;95} &  \multicolumn{1}{c}{Bias}  &  \multicolumn{1}{c}{SD} &  \multicolumn{1}{c}{RMSE}  & \multicolumn{1}{c}{SE/SD} & \multicolumn{1}{c}{p;95}  \tabularnewline
\hline
& \multicolumn{5}{c}{$R_2=1$} & \multicolumn{5}{c}{$R_2=R_2^*$}\tabularnewline
50 & 6.08 & 14.90 & 16.08 & 1.13 & 96 &   6.67 & 16.99 & 18.24 & 1.06 & 95 \tabularnewline
75 & 4.93 & 8.04 & 9.42 & 1.15 & 95 &  6.62 & 8.79 & 11.00 & 1.12 & 93 \tabularnewline
100 & 1.38 & 6.09 & 6.24 & 1.14 & 97 &  3.88 & 6.45 & 7.52 & 1.12 & 94 \tabularnewline
157 & 0.59 & 3.51 & 3.56 & 1.15 & 97 &  1.82 & 3.88 & 4.27 & 1.07 & 95 \tabularnewline
 & \multicolumn{5}{c}{$R_2=2$} & \multicolumn{5}{c}{$R_2 = 3$} \tabularnewline
%\cmidrule{2-6} 
50 & 6.76 & 15.71 & 17.09 & 1.12 & 97 & 8.61 & 16.63 & 18.71 & 1.11 & 95 \tabularnewline
75 & 5.97 & 8.70 & 10.55 & 1.11 & 94 & 6.68 & 9.37 & 11.50& 1.07 & 91 \tabularnewline
 100 & 3.27 & 6.37 & 7.16 & 1.12 & 95 & 4.81 & 6.80 & 8.33 & 1.08 & 93 \tabularnewline
157  & 2.24 & 3.61 & 4.24 & 1.14 & 94 & 1.99 & 3.89 & 4.37 & 1.08 & 94 \tabularnewline
% & \multicolumn{5}{c}{$R_2 = 3$}\tabularnewline
%%%\cmidrule{2-6} 
%50 & 15.25 & 24.17 & 28.54 & 0.95 & 88\tabularnewline
%75 & 4.80 & 11.55 & 12.48 & 1.02 & 93\tabularnewline
% 100 & -0.05 &  10.17&  10.16&  1.07& 97\tabularnewline
%157 & -0.64 & 6.59 &  6.58& 1.02 & 94 \tabularnewline
\hline\hline
\multicolumn{11}{l}{\footnotesize{Notes: 1,000 simulations calibrated to trade data with additive  effects and $1$ factor.}} \\
\multicolumn{11}{l}{\footnotesize{$R_2=R_2^*$ estimates the number of factors with $R_{\max} = 4$.}} 
\end{tabular}
\par\end{centering}
%\end{onehalfspace}
%results for three factors; check N=75; 
\end{table}

Table \ref{tab:Calibrated-Monte-Carlo} reports the bias (Bias), standard deviation (SD), and root mean squared error (RMSE) of the estimator of the parameter $\beta$, together with the ratio of average standard error to the simulation standard
deviation (SE/SD), and the empirical coverage in percentage of a confidence interval
with 95\% nominal value (p;95). We estimate models with four different numbers of factors in addition to the additive effects, $R_2 \in \{1,2,3, R_2^*\}$, where $R_2^*$ is the number of factors selected by the method of Section \ref{sec:implementation} with $R_{\max} = 4$, which can vary across simulations. The results for the bias, SD and RMSE are reported in percentage of the true parameter value. We find that  the  bias is smaller than the standard deviation for every sample size. When we use the true number of factors $R_2=1$, the confidence intervals cover the parameter in more  than 95\% of the simulations. The excess coverage is due to the  overestimation of the dispersion of the estimators by the standard errors. 
%, and  decreases with the sample size. 
Selecting the number of factors does not introduce bias, but increases the dispersion of the estimator of the parameter. The additional variability yields slight undercoverage of the  confidence intervals for small sample sizes.  On the other hand, adding unnecessary factors to the specification increases the bias and dispersion of the estimator, but the confidence intervals continue having good coverage properties. This robustness to the inclusion of too many factors is  consistent with the  theoretical results of \cite{MoonWeidner2015} for linear factor models. Overall, the simulations show that the asymptotic theory of Section \ref{sec:app_panel} provides a good approximation to the finite-sample behavior of the estimator.

\bibliographystyle{apa}
\bibliography{refs}

\newpage
\appendix

%%%%%%%%%%%%%%%%%%%%%%%%%%%%%%%%%%%%%%%%%%%
\section{Proofs}
%%%%%%%%%%%%%%%%%%%%%%%%%%%%%%%%%%%%%%%%%%%

%%%%%%%%%%%%%%%%%%%%%%%%%%%%%%%%%%%%%%%%%%%
\subsection{Notation and Normalization}
%%%%%%%%%%%%%%%%%%%%%%%%%%%%%%%%%%%%%%%%%%%

Remember the log-likelihood defined in the main text, and also define the rescaled version,
\begin{align*} 
     L(\beta,\phi) 
     & :=
              \sum_{(i,j) \in {\mathcal{D}}}  \log f(Y_{ij}  \mid X_{ij}' \beta + \pi_{ij}  ) ,
 &  {\cal L}^*(\beta,\phi)     &  :=   n^{-1/2} \,  L(\beta,\phi).
\end{align*}
For the true value of the fixed effect parameters $\phi^0 = ({\rm vec}(\alpha)^{0 \prime}, {\rm vec}(\gamma^{0})')'$ we impose the normalization
$
     \sum_{i=1}^I  \alpha_i^0  \, \alpha_i^{0 \prime}
       = \sum_{j=1}^J  \gamma^0_j  \, \gamma_j^{0  \prime} ,
$
and define the restricted parameter set
\begin{align*}
     \Phi
     &:= \left\{ 
       \phi \in \mathbbm{R}^{d_\phi}
       % \mathbbm{R}^{(R_2+R_3) \times I + (R_1+R_3) \times J}
       \; : \;
      \sum_{i=1}^I  \alpha_i^0  \, \alpha_i'
       = \sum_{j=1}^J  \gamma_j  \, \gamma_j^{0  \prime}
     \right\} ,
\end{align*}
where $d_v := \dim v$ for any vector $v$.  
Notice  that $\phi^0 \in \Phi$.
The  maximum likelihood estimator that imposes the normalization $\phi \in \Phi$ reads
\begin{align}
 (\widehat \beta, \widehat \phi)
  \, &= \,
\argmax_{(\beta, \phi)   \in \mathbbm{R}^{d_\beta} \times \Phi} \;   L(\beta,\phi).
   \label{EstimatorsConstrained}
\end{align}

Imposing $\widehat \phi \in \Phi$  is an infeasible normalization, because 
the true value of the parameters appears in the definition of $\Phi$.
However,
all our final asymptotic results are on the estimators $\widehat \beta$ and $\widehat \delta$,
which are invariant to the chosen normalization for $\widehat \phi$, that is, those results on $\widehat \beta$ and $\widehat \delta$
also hold unchanged for any other normalization, and imposing $\widehat \phi \in \Phi$
is simply a matter of convenience for the following proofs. 
There is always a need for a normalization choice when estimating the factor loadings and factors in
interactive fixed effect models, because the model only depends on the product $ \alpha_i'  \gamma_j$,
which is unchanged under the transformation
\begin{align}
    \alpha_i &\mapsto  A' \alpha_i  
    &
     \gamma_j &\mapsto A^{-1}  \gamma_j ,
     \label{TRAFO}
\end{align} 
for some invertible $R \times R$ matrix $A$.
Notice that in the definition of $\Phi$  there are $R^2$ normalization constraints,
which is exactly enough to uniquely determine the $R^2$ continuous degrees of freedom 
of the matrix $A$. In addition, there is still a discrete sign change possible ($\alpha_i \mapsto - \alpha_i$
and $ \gamma_j \mapsto -  \gamma_j$), and we assume in the following that this discrete choice is specified somehow
(e.g. by imposing $\alpha_{11}>0$) to make the estimator $\widehat \phi$ unique. The details of this final discrete choice
do not matter, as long as the same sign normalization is imposed on $\widehat \phi$ and $\phi^0$.

Our normalization constraints in the definition of $\Phi$ are linear in $\phi$. It is this linearity which makes this particular normalization  attractive for our purposes.
In particular, instead 
of imposing this normalization directly we can also impose it via a quadratic penalty function
by defining the penalized  objective function
\begin{align}
      \mathcal{L}(\beta,\phi) & = n^{-1/2} 
     \left[  L(\beta,\phi)
     - \frac b 2  \phi' \, V \, V' \, \phi   \right],
     \label{DefLcal}
\end{align}
where $b>0$ is some constant, and $V$ is a $d_\phi \times R^2$ matrix, which depends
on $\alpha^0$ and $\gamma^0$, and is implicitly defined by
\begin{align*}
     V' \, \phi   &= {\rm vec}\left[  
        \sum_{i=1}^I   \alpha_i^0    \alpha_i'
       - \sum_{j=1}^J \gamma_j   \gamma_j^{0 \, \prime}
          \right] .
\end{align*}
Thus, the above penalty term can also be expressed as
\begin{align*}
     \phi' \, V \, V' \, \phi  
     &= \left\|    \sum_{i=1}^I   \alpha_i^0    \alpha_i'
       - \sum_{j=1}^J \gamma_j   \gamma_j^{0 \, \prime}  \right\|_F^2 ,
\end{align*}
where $\|.\|_F$ denotes the Frobenius norm.
The constrained estimator in  \eqref{EstimatorsConstrained} can then equivalently be obtained by solving
the unconstrained problem
\begin{align*}
 (\widehat \beta, \widehat \phi)
  \, &= \,
\argmax_{(\beta, \phi)   \in \mathbbm{R}^{d_\beta + d_\phi}} \;   \mathcal{L}(\beta,\phi),
\end{align*}
and we also define 
\begin{align*}
  \widehat \phi(\beta)
  \, &= \,
\argmax_{ \phi    \in \mathbbm{R}^{  d_\phi}} \;   \mathcal{L}(\beta,\phi),
&
 \widehat \phi(\beta)  &= ( {\rm vec}(\widehat \alpha(\beta))' , {\rm vec}(\widehat \gamma(\beta))')'.
\end{align*}
Finally, we introduce 
the index sets ${\bf I} := \{1,\ldots,I\}$ and ${\bf J}:=\{1,\ldots,J\}$.

%%%%%%%%%%%%%%%%%%%%%%%%%%%%%%%%%%%%%%%%%%%
\subsection{Consistency}
%%%%%%%%%%%%%%%%%%%%%%%%%%%%%%%%%%%%%%%%%%%

\begin{lemma}
   \label{lemma:consistency}
   Let Assumption~\ref{ass:PanelA1} be satisfied. Then,
   $\| \widehat \beta - \beta^0 \| =  {\cal O}_P(I^{-3/8})$
   and  
   \begin{align*}
    \frac 1 {\sqrt{n}} \left\|  \widehat \alpha(\beta)  \widehat \gamma(\beta)'
                                   - \alpha^0 \gamma^{0 \prime} \right\|_F  
              &= {\cal O}_P(I^{-3/8}  + \| \beta - \beta^0 \| ), 
          &
           \frac 1 {\sqrt{I}} \| \widehat \phi(\beta) - \phi^0 \|  
              &= {\cal O}_P(I^{-3/8} + \| \beta - \beta^0 \| )    ,
   \end{align*}                         
   uniformly over $\beta$ in a $\epsilon$-neighborhood around $\beta^0$, for some $\epsilon>0$. 
\end{lemma}

\begin{proof}[\bf Proof of Lemma~\ref{lemma:consistency}]
%Let $\partial_z \ell_{ij} = \partial_z \ell_{ij}(z^0_{ij})$, etc.
%This proof considers ${\cal D}_0 = {\cal D}$ for simplicity, but since we only allow for a vanishing fraction of missing datapoints
%${\cal D}_0 \setminus {\cal D},$ those cannot affect the consistency and convergence rates of the estimators.
For all $z_1, z_2 \in {\cal B}^0_{\varepsilon}$ a second order Taylor expansion of $ \ell_{ij}(z_1)$ around $z_2$ gives 
\begin{align}
   \ell_{ij}(z_1) - \ell_{ij}(z_2)
    &=  [\partial_z \ell_{ij}(z_1)] (z_1-z_2)
                                     - \ft 1 2 [\partial_{z^2} \ell_{ij}(\tilde z)] \,  (z_1-z_2)^2
    \nonumber \\
    &\geq [\partial_z \ell_{ij}(z_1)] (z_1-z_2)
                                     + \frac{b_{\min}} 2 \,  (z_1-z_2)^2
    \nonumber \\
    & =   \frac{b_{\min}} 2
         \left(z_1-z_2 + \frac 1 {b_{\min}} [\partial_z \ell_{ij}(z_1)] \right)^2
          -      \frac{1} {2 b_{\min}}
         [\partial_z \ell_{ij}(z_1)]^2,    
   \label{QuadraticBound}              
\end{align}
where $\tilde z \in [\min(z_1,z_2), \max(z_1,z_2)]$.
Let $e_{ij}  :=   \partial_z \ell_{ij}/b_{\min}$.
Using \eqref{QuadraticBound} we find that
\begin{align*}
    0 &\geq  \frac{1 }{\sqrt{IJ}} \left[ {\cal L}(\beta^0,\phi^0) -  {\cal L}(\widehat \beta,\widehat \phi) \right]
     =  \frac 1 {IJ}   \sum_{i,j \in {\mathcal{D}}} \left[ \ell_{ij}( z^0_{ij} ) - \ell_{ij}( \widehat z_{ij} ) \right]
    \nonumber \\
       &\geq   \frac {b_{\min}} {2 \, IJ} \sum_{i,j \in {\mathcal{D}}}
       \left[   (z^0_{ij} -  \widehat z_{ij} + e_{ij})^2 - e_{ij}^2 \right]
    \nonumber 
       =   \frac {b_{\min}} {2 \, IJ} \sum_{i=1}^I \sum_{j=1}^J
       \left[   (z^0_{ij} -  \widehat z_{ij} + e_{ij})^2 - e_{ij}^2 \right]
        + {\cal O}_P\left(  \frac{IJ - n} {IJ}  \right)
    \nonumber \\
      &=  \frac {b_{\min}} {2 \, IJ} \sum_{i=1}^I \sum_{j=1}^J
       \left\{   \left[ X'_{ij}  (\widehat \beta - \beta^0)
                          + \widehat \alpha_i^{\, \prime} \, \widehat \gamma_j
                          - \alpha^{0 \, \prime}_i \, \gamma^0_j
                          - e_{ij}  \right]^2 - e_{ij}^2 \right\}   + {\cal O}_P\left(  \frac{1} {IJ}  \right) .
\end{align*}
Note that the penalty term of the objective function does not enter here,
because it is zero when evaluated both at the estimator and at the true values of the parameters.
Let $e$ be the $I \times J$ matrix with entries $e_{ij}$.
Let $X_k$ be the $I \times J$ matrix with entries $X_{k,ij}$, $k=1,\ldots,d_\beta$.
Let $\beta \cdot X = \sum_k \beta_k X_k$.
In matrix notation, the above inequality reads
\begin{align*}
  &   \frac 1 {IJ}   {\rm Tr}(e' e)
   \\
             & \; \; \geq
          \frac 1 {IJ}     {\rm Tr} \left[   \left(  (\widehat \beta - \beta^0) \cdot X + \widehat \alpha \widehat \gamma'
                                   - \alpha^0 \gamma^{0 \prime}
                                   -e \right)  
                                   \left(  (\widehat \beta - \beta^0) \cdot X + \widehat \alpha \widehat \gamma'
                                   - \alpha^0 \gamma^{0 \prime}
                                   -e \right)'                       
                 \right]  + {\cal O}_P\left(  \frac{1} {IJ}  \right) .
\end{align*}
Analogous to the consistency proof for linear regression models with interactive fixed effects
in \cite{Bai:2009p3321} and \cite{MoonWeidner2017} we can conclude that
\begin{align}       
     \frac 1 {IJ}  {\rm Tr}(e' e)          
         &\geq  
          \frac 1 {IJ}  {\rm Tr} \left[ 
                  {\cal M}_{\alpha^0}
                                    \left(  (\widehat \beta - \beta^0) \cdot X -e \right)  
                     {\cal M}_{\widehat \gamma}                 
                                    \left(    (\widehat \beta - \beta^0) \cdot X -e \right)'  \right]     + {\cal O}_P\left(  \frac{1} {IJ}  \right)     
  \nonumber   \\                               
        &=  \frac 1 {IJ} \bigg[  
        {\rm Tr}(e' e) 
      +    {\rm Tr} \left[      {\cal M}_{\alpha^0}
                                    \left(  (\widehat \beta - \beta^0) \cdot X  \right)
                                     {\cal M}_{\widehat \gamma}   
                                      \left(  (\widehat \beta - \beta^0) \cdot X  \right)'   \right]   
       + 2 {\rm Tr} \left[   \left(  (\widehat \beta - \beta^0) \cdot X  \right)'
                                    e   \right]                                    
  \nonumber   \\    & \qquad      \qquad                                  
      + {\cal O}_P(  \| e\|^2 ) + {\cal O}_P(  \|\widehat \beta - \beta^0 \| \| e\| \max_k \| X_k \| )   \bigg] 
        + {\cal O}_P\left(  \frac{1} {IJ}  \right),                                                                 
    \label{ConsInequ}
\end{align} 
where we used that e.g.
\begin{align*}
     \left| {\rm Tr} \left(  X_k'
                  {\cal P}_{\alpha^0}
                                  e  \right) \right|
            &\leq    {\rm rank}\left(         X_k'  {\cal P}_{\alpha^0}  e \right) 
                  \left\|     X_k'     {\cal P}_{\alpha^0}   e  \right\|        
               \leq  \| X_k \| \| e\|  ,
    \\
      \left| {\rm Tr} \left(  e'
                  {\cal P}_{\alpha^0}
                                  e  \right) \right|
            &\leq    {\rm rank}\left(         e'  {\cal P}_{\alpha^0}  e \right) 
                  \left\|     e'     {\cal P}_{\alpha^0}   e  \right\|        
               \leq     \| e\|^2   .
\end{align*}
Lemma~D.6 in \cite{FW16} shows that under Assumption \ref{ass:PanelA1},
    $\| \partial_z \ell \| =  {\cal O}_P(I^{5/8} )$,
where $\partial_z \ell$ is the $I\times J$ matrix 
         with entries $\partial_z \ell_{ij}$. 
We thus also have $\|e\| = {\cal O}_P(I^{5/8} )$.
We furthermore have $\|X_k\|^2 \leq  \|X_k\|^2_{F} = \sum_{ij} X_{k,ij}^2 = {\cal O}_P(IJ)$,
so that $\|X_k\| = {\cal O}_P(\sqrt{IJ})$.
Hence, 
$ \| X_k \| \| e\|   =  {\cal O}_P(I^{13/8})$,
$  \| e\|^2 =    {\cal O}_P(I^{5/4})$, and
\begin{align*}
      {\rm Tr} \left(  X_k'  e   \right)  &= \frac 1 {b_{\min}} \sum_{ij} X_{ij}  \partial_z \ell_{ij} 
        =  {\cal O}_P( \sqrt{IJ} )  .
\end{align*}
Applying these results and the generalized collinearity assumption to \eqref{ConsInequ} gives
\begin{align*}
    0 &\geq  c \| \widehat \beta - \beta^0 \| +  {\cal O}_P( I^{-3/8}  \| \widehat \beta - \beta^0 \| ) 
       +  {\cal O}_P(I^{-3/4}  ) .
\end{align*}
This implies that $\| \widehat \beta - \beta^0 \| =  {\cal O}_P(I^{-3/8})$.

Define $e_{ij}(\beta) =  \partial_z \ell_{ij}(X'_{ij}  \beta + \alpha^0_i \gamma^{0 \, \prime}_j)/b_{\min}$.
Analogous to the above argument we find from
$ {\cal L}(\beta,\widehat \phi(\beta)) \geq  {\cal L}(\beta,\phi^0)$
that
\begin{align*}
    0 &\geq \sqrt{IJ} \left[ {\cal L}(\beta,\phi^0) -  {\cal L}(\beta,\widehat \phi(\beta)) \right]
=    \sum_{i,j} \left[ \ell_{ij}( X'_{ij}  \beta + \alpha^0_i \gamma^{0 \, \prime}_j ) - \ell_{ij}( 
     X'_{ij}  \beta + \widehat \alpha_i(\beta) \widehat \gamma'_j(\beta) ) \right]
    \nonumber \\
      &=  \frac {b_{\min}} {2 } \sum_{i,j}
       \left\{   \left[   \widehat \alpha_i(\beta) \widehat \gamma'_j(\beta)
                          - \alpha^0_i \gamma^{0 \, \prime}_j
                          - e_{ij}(\beta)   \right]^2 - [e_{ij}(\beta)]^2 \right\} .
\end{align*}
This implies that
\begin{align*}
    & {\rm Tr}(e(\beta)' e(\beta))
\geq
            {\rm Tr} \left[   \left(   \widehat \alpha(\beta) \widehat \gamma(\beta)'
                                   - \alpha^0 \gamma^{0 \prime}
                                   -e(\beta) \right)  
                                   \left(  \widehat \alpha(\beta) \widehat \gamma(\beta)'
                                   - \alpha^0 \gamma^{0 \prime}
                                   -e(\beta) \right)'                       
                 \right] 
      \\
         &=    {\rm Tr}(e(\beta)' e(\beta))                                              
           + \underbrace{ {\rm Tr} \left[   \left(   \widehat \alpha(\beta) \widehat \gamma(\beta)'
                                   - \alpha^0 \gamma^{0 \prime}
                                    \right)  
                                   \left(   \widehat \alpha(\beta) \widehat \gamma(\beta)'
                                   - \alpha^0 \gamma^{0 \prime}
                                   \right)'                       
                 \right] }_{=  \left\|  \widehat \alpha(\beta) \widehat \gamma(\beta)'
                                   - \alpha^0 \gamma^{0 \prime} \right\|^2_F}
                        + {\cal O}_P\left( \left\|  \widehat \alpha(\beta) \widehat \gamma(\beta)'
                                   - \alpha^0 \gamma^{0 \prime} \right\|_F
                        \|e(\beta)\|   
                                    \right)            .                      
\end{align*}
Since
$ \widehat \alpha(\beta) \widehat \gamma(\beta)'
                                   - \alpha^0 \gamma^{0 \prime}$ is at most of rank $2R$,  
 $\frac 1 {\sqrt{2R}} \left\|  \widehat \alpha(\beta) \widehat \gamma(\beta)'
                                   - \alpha^0 \gamma^{0 \prime} \right\|_F
                                   \leq \left\|  \widehat \alpha(\beta) \widehat \gamma(\beta)'
                                   - \alpha^0 \gamma^{0 \prime} \right\|
                                   \leq \left\|  \widehat \alpha(\beta) \widehat \gamma(\beta)'
                                   - \alpha^0 \gamma^{0 \prime} \right\|_F$,
                                i.e. the Frobenius and the    spectral norm are equivalent.                   
Since $e_{ij}(\beta) = e_{ij} + [X'_{ij}   (\beta-\beta^0)] \partial_{z^2} \ell_{ij} (X'_{ij}  \tilde \beta + \alpha^0_i \gamma^{0 \, \prime}_j)/b_{\min}$,
where $\tilde \beta$ lies between $\beta$ and $\beta^0$, 
we have
$\| e(\beta) \| \leq \|e\| 
   + {\cal O}_P(\sqrt{IJ} \| \beta - \beta^0 \| )$.                                                   
We thus find
\begin{align*}
     0 &\geq  \frac 1 {IJ} \left\|  \widehat \alpha(\beta) \widehat \gamma(\beta)'
                                   - \alpha^0 \gamma^{0 \prime} \right\|^2_F   
     +  {\cal O}_P\left[ 
     (I^{-3/8} + \| \beta - \beta^0 \| ) 
     \left\|  \widehat \alpha(\beta) \widehat \gamma(\beta)'
                                   - \alpha^0 \gamma^{0 \prime} \right\|_F
                                   /\sqrt{IJ} 
                                   \right]  .    
\end{align*}
From this we conclude that 
\begin{align}
     \frac 1 {\sqrt{IJ}} \left\|  \widehat \alpha(\beta) \widehat \gamma(\beta)'
                                   - \alpha^0 \gamma^{0 \prime} \right\|_F  &=  {\cal O}_P(I^{-3/8}  + \| \beta - \beta^0 \| ) .
                         \label{BoundAG1}          
\end{align} 
Next, using our normalization $\phi^0 \in \Phi$ and $\widehat \phi \in \Phi$,
\begin{align*}
     \alpha^{0 \prime}  \left[ \widehat \alpha(\beta) \widehat \gamma(\beta)'   - \alpha^0 \gamma^{0 \prime} \right] \gamma^0
     &=   \left[  \alpha^{0 \prime} \widehat \alpha(\beta)  \right]^2
        - \left[  \alpha^{0 \prime} \alpha^0  \right]^2 ,
\end{align*}
and therefore
\begin{align*}
  \left\|
     \left[  \frac 1 I \alpha^{0 \prime} \widehat \alpha(\beta)  \right]^2
        - \left[  \frac 1 I \alpha^{0 \prime} \alpha^0  \right]^2
    \right\|_F    
    &=  \frac 1 {I^2}  \left\|
    \alpha^{0 \prime}  \left[ \widehat \alpha(\beta) \widehat \gamma(\beta)'   - \alpha^0 \gamma^{0 \prime} \right] \gamma^0
    \right\|_F 
    \leq   \frac 1 {I^2}   \left\|  \alpha^{0} \right\|_F
    \left\| \widehat \alpha(\beta) \widehat \gamma(\beta)'   - \alpha^0 \gamma^{0 \prime} \right\|_F
      \left\| \gamma^0 \right\|
   \\   
    &=   \frac 1 {I^2}   {\cal O}(I^{1/2}) \, \sqrt{IJ} \,  {\cal O}_P(I^{-3/8}  + \| \beta - \beta^0 \| )  {\cal O}(J^{1/2}) 
     =  {\cal O}_P(I^{-3/8}  + \| \beta - \beta^0 \| ).
\end{align*}
Using the strong-factor assumption 
 $I^{-1}  \alpha^{0 \prime} \alpha^0    \to_P \Sigma_1 >0$ we thus have
\begin{align}
    \left[ I^{-1} \alpha^{0 \prime} \widehat \alpha(\beta) \right]^{-1}
     &=
     \left[  I^{-1}  \alpha^{0 \prime} \alpha^0  \right]^{-1}
     +  {\cal O}_P(I^{-3/8}  + \| \beta - \beta^0 \| ).
                           \label{BoundAG2}        
\end{align}
Again by the normalization $\widehat \phi \in \Phi$ we also have
\begin{align*}
     \left[ \widehat \alpha(\beta) \widehat \gamma(\beta)'   - \alpha^0 \gamma^{0 \prime} \right] \gamma^0
     &=     \widehat \alpha(\beta)     \alpha^{0 \prime} \widehat \alpha(\beta)  
        -    \alpha^0    \alpha^{0 \prime} \alpha^0   ,
\end{align*}
and therefore
\begin{align*}
      \widehat \alpha(\beta)  
      &=   \alpha^0      \left[  I^{-1}  \alpha^{0 \prime} \alpha^0  \right]  \left[ I^{-1}  \alpha^{0 \prime} \widehat \alpha(\beta)   \right]^{-1}
         +  I^{-1}  \left[ \widehat \alpha(\beta) \widehat \gamma(\beta)'   - \alpha^0 \gamma^{0 \prime} \right] \gamma^0
          \left[ I^{-1}   \alpha^{0 \prime} \widehat \alpha(\beta)   \right]^{-1} .
\end{align*}
Applying \eqref{BoundAG1} and \eqref{BoundAG2} thus gives
\begin{align*}
     I^{-1/2}  \left\| \widehat \alpha(\beta)  -  \alpha^0  \right\|_F
      &\leq    I^{-1/2}  \left\|  \alpha^0  \right\|_F
    \left\|  \mathbb{I}_R - \left[  I^{-1}  \alpha^{0 \prime} \alpha^0  \right]  \left[ I^{-1}  \alpha^{0 \prime} \widehat \alpha(\beta)   \right]^{-1} \right\|_F
    \\ & \quad
    + I^{-3/2}  \left\| \widehat \alpha(\beta) \widehat \gamma(\beta)'   - \alpha^0 \gamma^{0 \prime} \right\|_F
    \left\| \gamma^0 \right\|_F
        \left\|  \left[ I^{-1}   \alpha^{0 \prime} \widehat \alpha(\beta)   \right]^{-1}  \right\|_F
     \\
     &=   I^{-1/2} {\cal O}(I^{1/2})     {\cal O}_P(I^{-3/8}  + \| \beta - \beta^0 \| )
     + I^{-3/2}  \, \sqrt{IJ} \,   {\cal O}_P(I^{-3/8}  + \| \beta - \beta^0 \| )  {\cal O}(J^{1/2})  {\cal O}(1)
     \\
     &=  {\cal O}_P(I^{-3/8}  + \| \beta - \beta^0 \| ) .
\end{align*}
Analogously we conclude that 
$ J^{-1/2}   \|\widehat \gamma(\beta) - \gamma^0\| = {\cal O}_P(I^{-3/8}  + \| \beta - \beta^0 \| )$,
and therefore
$   \frac 1 {\sqrt{I}} \| \widehat \phi(\beta) - \phi^0 \|  
              = {\cal O}_P(I^{-3/8} + \| \beta - \beta^0 \| ) $.
\end{proof}

%%%%%%%%%%%%%%%%%%%%%%%%%%%%%%%%%%%%%%
\subsection{Inverse Expected Incidental Parameter Hessian}
%%%%%%%%%%%%%%%%%%%%%%%%%%%%%%%%%%%%%%

We define the expected incidental parameter Hessian for the log-likelihood with and without the penalty term as
\begin{align*}
     \overline{\cal H} &:= \E[ - \partial_{\phi \phi'} {\cal L}] 
     =  \overline{\cal H}^* +  \frac{b} {\sqrt{n}} \, VV',      &
   \overline{\cal H}^* &:= \E[ - \partial_{\phi \phi'} {\cal L}^*].
\end{align*}
Our definition of ${\cal L}^*(\beta,\phi)  =  n^{-1/2} \,  L(\beta,\phi)$
includes the factor $n^{-1/2}$, which makes sure that the eigenvalues of $\overline{\cal H}^*$ remain of order
one asymptotically as $I,J \rightarrow \infty$ at the same rate. Similarly, the factor $1/\sqrt{n}$ in the second term 
of  $\overline{\cal H}$ makes sure that the eigenvalues of $  \frac{b} {\sqrt{n}} \, VV'$ remain of order one asymptotically.
The Hessian matrix $\overline{\cal H}^* $ has $R^2$ zero eigenvalues 
corresponding to the $R^2$ flat directions in the log-likelihood described by the
transformations \eqref{TRAFO} that leave the likelihood unchanged. Correspondingly, the matrix $VV'$ is exactly of rank $R^2$,
making sure that $  \overline{\cal H} $ has no more zero eigenvalues and is invertible, as formally shown by Lemma~\ref{lemma:HessianAdditive}
below. Those considerations explain why we have chosen the penalty term
$\frac b 2  \phi' \, V \, V' \, \phi $ and the pre-factor $n^{-1/2}$ in our definition of  ${\cal L}(\beta,\phi) $
in \eqref{DefLcal} above.

Let $a={\rm vec}(\alpha)$ and $c={\rm vec}(\gamma)$, so that  $\phi = (a', c')'$.
Correspondingly we can decompose the Hessian matrix,
\begin{align*}
\overline{\cal H}^* =  
\left(\begin{array}{cc}   \E[ - \partial_{aa'} {\cal L}^*]  &  \E[ - \partial_{ac'} {\cal L}^*]   \\  \E[ - \partial_{ca'} {\cal L}^*]  &  \E[ - \partial_{cc'} {\cal L}^*] 
\end{array}\right) 
=:
\left(\begin{array}{cc}  \overline{\mathcal{H}}_{(\alpha\alpha)}^* & \overline{\mathcal{H}}_{(\alpha\gamma)}^*  \\ {[\overline{\mathcal{H}}_{(\alpha\gamma)}^*]}' & \overline{\mathcal{H}}_{(\gamma\gamma)}^* 
\end{array}\right) .
\end{align*}
Here, $\overline{\mathcal{H}}_{(\alpha\alpha)}^*$ is a block-diagonal $IR \times IR$ matrix
with $R \times R$ diagonal blocks,
and $ \overline{\mathcal{H}}_{(\gamma\gamma)}^* $ is a block-diagonal $JR \times JR$ matrix
with $R \times R$ diagonal blocks, that is
%which themselves are composed of $R \times R$ blocks of the form
\begin{align*}
     \overline{\mathcal{H}}_{(\alpha\alpha)}^*
     &= \text{diag}\left( 
     \left[ \frac 1 {\sqrt{n}}  \sum_{j \in {\cal D}_i} \E(- \partial_{z^2} \ell_{ij})    \gamma^0_j \gamma^{0 \prime}_j \right]_{i \in {\bf I}} \right),
     &
          \overline{\mathcal{H}}_{(\gamma \gamma)}^*
     &= \text{diag}\left(
         \left[  \frac 1 {\sqrt{n}}  \sum_{i \in {\cal D}_j} \E(- \partial_{z^2} \ell_{ij})    \alpha^0_j \alpha^{0 \prime}_j
         \right]_{j \in {\bf J}}   \right) .
\end{align*}
For any matrix $A$ with elements $A_{kl}$,   let $\|A\|_{\max} = \max_{k,l} |A_{kl}|$. Notice that $\|.\|_{\max}$ is not
sub-multiplicative, so it is not a matrix norm.

\begin{lemma}
   \label{lemma:HessianAdditive}
   Under Assumption~\ref{ass:PanelA1},
   \begin{equation*}
 %\left\| \overline{\mathcal{R}}_{IJ}  \right\|_{\max}  =
   \left\| \overline {\cal H}^{-1} -
   \diag \left(  \overline {\cal H}_{(\alpha \alpha)}^*, \overline {\cal H}_{(\gamma \gamma)}^*
    \right)^{-1}
    \right\|_{\max}
        =  \mathcal{O}\left( n^{-1/2} \right)  .
   \end{equation*}
\end{lemma}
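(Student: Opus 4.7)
The plan is to exploit the block structure of $\overline{\mathcal H}$. Under Assumption~\ref{ass:PanelA1}, with $w_{it}:=\E[-\partial_{z^2}\ell_{it}] \in [b_{\min},b_{\max}]$ and the strong-factor condition, the diagonal blocks $\overline{\mathcal H}^*_{(\alpha\alpha)}$, $\overline{\mathcal H}^*_{(\gamma\gamma)}$ are themselves diagonal, with entries of order $\sqrt{T/N}$ and $\sqrt{N/T}$ respectively (both of order $1$ since $N/T\to\kappa^2$), while the off-diagonal block $\overline{\mathcal H}^*_{(\alpha\gamma)}$ and the rank-one penalty $(b/\sqrt{NT})vv'$ have entries of order $\mathcal{O}(1/\sqrt{NT})$. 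Hence $\overline{\mathcal H}$ is ``block diagonally dominant'' in max-norm, and one expects $\overline{\mathcal H}^{-1}$ to differ from $\diag\bigl((\overline{\mathcal H}^*_{(\alpha\alpha)})^{-1},(\overline{\mathcal H}^*_{(\gamma\gamma)})^{-1}\bigr)$ only by entries of the same order $\mathcal{O}_P(1/\sqrt{NT})$.

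First I would apply the Schur complement (block inversion) formula
\[
\overline{\mathcal H}^{-1}_{(\alpha\alpha)} = \bigl(\overline{\mathcal H}_{(\alpha\alpha)} - \overline{\mathcal H}_{(\alpha\gamma)}\,\overline{\mathcal H}_{(\gamma\gamma)}^{-1}\,\overline{\mathcal H}'_{(\alpha\gamma)}\bigr)^{-1}, \qquad \overline{\mathcal H}^{-1}_{(\alpha\gamma)} = -\,\overline{\mathcal H}^{-1}_{(\alpha\alpha)}\,\overline{\mathcal H}_{(\alpha\gamma)}\,\overline{\mathcal H}_{(\gamma\gamma)}^{-1},
\]
and its $(\gamma\gamma)$ analogue. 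Since $\overline{\mathcal H}_{(\gamma\gamma)} = \overline{\mathcal H}^*_{(\gamma\gamma)} + (b/\sqrt{NT})\gamma^0\gamma^{0\prime}$ is a positive diagonal matrix plus a rank-one perturbation, Sherman--Morrison yields $\overline{\mathcal H}_{(\gamma\gamma)}^{-1} = (\overline{\mathcal H}^*_{(\gamma\gamma)})^{-1} + R_\gamma$ with $\|R_\gamma\|_{\max}=\mathcal{O}_P(1/\sqrt{NT})$, using that $(\overline{\mathcal H}^*_{(\gamma\gamma)})^{-1}\gamma^0$ has coordinates $\mathcal{O}_P(1)$ and that the Sherman--Morrison denominator is bounded away from zero by the strong factor condition; the same statement holds for $\overline{\mathcal H}_{(\alpha\alpha)}^{-1}$.

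The hard part will be to show that the Schur correction $C := \overline{\mathcal H}_{(\alpha\gamma)}\,\overline{\mathcal H}_{(\gamma\gamma)}^{-1}\,\overline{\mathcal H}'_{(\alpha\gamma)}$ satisfies $\|C\|_{\max} = \mathcal{O}_P(1/\sqrt{NT})$; a naive operator-norm bound is too loose because each factor has spectral norm of order $1$. Instead I would exploit the multiplicative factor form $\overline{\mathcal H}^*_{(\alpha\gamma)it} = \alpha_i^0\gamma_t^0 w_{it}/\sqrt{NT}$ together with the diagonality of $\overline{\mathcal H}^*_{(\gamma\gamma)}$ to reduce the leading piece to
\[
\bigl(\overline{\mathcal H}^*_{(\alpha\gamma)}(\overline{\mathcal H}^*_{(\gamma\gamma)})^{-1}\overline{\mathcal H}^{*\prime}_{(\alpha\gamma)}\bigr)_{ij} = \frac{\alpha_i^0\,\alpha_j^0}{\sqrt{NT}}\sum_{t=1}^T \frac{(\gamma_t^0)^2\, w_{it}\,w_{jt}}{\sum_{k=1}^N (\alpha_k^0)^2\, w_{kt}},
\]
which is $\mathcal{O}_P(1/\sqrt{NT})$ uniformly in $i,j$ because the inner ratio is $\mathcal{O}_P(1/N)$ and the sum over $t$ is $\mathcal{O}_P(T/N) = \mathcal{O}_P(1)$; cross terms from the penalty piece of $\overline{\mathcal H}_{(\alpha\gamma)}$ and from $R_\gamma$ are of the same order by direct calculation. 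Given this, $\overline{\mathcal H}_{(\alpha\alpha)} - C = \overline{\mathcal H}^*_{(\alpha\alpha)} + \mathcal{O}_P(1/\sqrt{NT})$ entrywise, and a final perturbation argument gives $\overline{\mathcal H}^{-1}_{(\alpha\alpha)} = (\overline{\mathcal H}^*_{(\alpha\alpha)})^{-1} + \mathcal{O}_P(1/\sqrt{NT})$ in max-norm; the off-diagonal block $[\overline{\mathcal H}^{-1}_{(\alpha\gamma)}]_{it}$ is then bounded directly by splitting $-\sum_{i',t'} [\overline{\mathcal H}^{-1}_{(\alpha\alpha)}]_{ii'}\,[\overline{\mathcal H}_{(\alpha\gamma)}]_{i't'}\,[\overline{\mathcal H}_{(\gamma\gamma)}^{-1}]_{t't}$ into diagonal/off-diagonal regimes (each contributing $\mathcal{O}_P(1/\sqrt{NT})$), and the $(\gamma\gamma)$ block follows symmetrically.
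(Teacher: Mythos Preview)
Your overall strategy --- block inversion plus Sherman--Morrison for the rank-one penalty --- matches the paper's, and your entrywise computation showing $\|C\|_{\max}=\mathcal{O}_P(1/\sqrt{NT})$ for the Schur correction $C=\overline{\cal H}_{(\alpha\gamma)}\overline{\cal H}_{(\gamma\gamma)}^{-1}\overline{\cal H}_{(\alpha\gamma)}'$ is correct. The gap is in what you call ``a final perturbation argument''. Knowing that each entry of $C$ is $\mathcal{O}_P(1/\sqrt{NT})$ does \emph{not} let you conclude that $(\overline{\cal H}_{(\alpha\alpha)}-C)^{-1}=(\overline{\cal H}^*_{(\alpha\alpha)})^{-1}+\mathcal{O}_P(1/\sqrt{NT})$ in max-norm: $C$ is a full $N\times N$ matrix, so its row sums satisfy $\|C\|_\infty=\mathcal{O}_P(N/\sqrt{NT})=\mathcal{O}_P(1)$, and hence $\|\overline{\cal H}_{(\alpha\alpha)}^{-1}C\|_\infty$ is of order one, not $o(1)$. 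A naive Neumann expansion $(\mathbbm{1}-\overline{\cal H}_{(\alpha\alpha)}^{-1}C)^{-1}\approx \mathbbm{1}+\overline{\cal H}_{(\alpha\alpha)}^{-1}C$ therefore has no justification, and the remainder cannot be bounded from the max-norm of $C$ alone.

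The paper closes exactly this gap with an extra ingredient you are missing. It first fixes $b\le b_{\min}(1+b_{\max}/b_{\min})^{-1}$ (showing separately that the choice of $b$ is irrelevant to the statement) and then proves, by an explicit calculation, that after conjugation by ${\rm diag}(|\alpha^0|)$ and ${\rm diag}(|\gamma^0|)$ one has the strict bound $\|{\rm diag}(|\alpha^0|)^{-1}\overline{\cal H}_{(\alpha\alpha)}^{-1}\overline{\cal H}_{(\alpha\gamma)}{\rm diag}(|\gamma^0|)\|_\infty<1-b/b_{\max}$, and symmetrically for the other factor. This yields $\|D\|_\infty<(1-b/b_{\max})^2<1$ for the conjugated version of $\overline{\cal H}_{(\alpha\alpha)}^{-1}C$, so the Neumann series $\sum_{n\ge 0}D^n$ converges geometrically in $\|\cdot\|_\infty$; pulling one factor of $\overline{\cal H}_{(\gamma\alpha)}$ to the right and bounding it in $\|\cdot\|_{\max}$ then delivers the $\mathcal{O}_P(1/\sqrt{NT})$ max-norm bound on $B:=(\mathbbm{1}-\overline{\cal H}_{(\alpha\alpha)}^{-1}C)^{-1}-\mathbbm{1}$. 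Without this $\|\cdot\|_\infty<1$ step (or something equivalent), your argument does not go through.
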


\begin{proof}[\bf Proof]
    We consider the case ${\cal D}={\cal D}_0$ in the following.
    We decompose
    \begin{align*}
\overline{\cal H}^* =   \underbrace{
\left(\begin{array}{cc}  \overline{\mathcal{H}}_{(\alpha\alpha)}^* & 0  \\0 & \overline{\mathcal{H}}_{(\gamma\gamma)}^* 
\end{array}\right) 
}_{=: \overline {\cal D}}
+
\underbrace{
\left(\begin{array}{cc} 0 & \overline{\mathcal{H}}_{(\alpha\gamma)}^*  \\ {[\overline{\mathcal{H}}_{(\alpha\gamma)}^*]}' & 0
\end{array}\right) 
}_{=:  \overline {\cal A}^*} ,
\end{align*}
and let $\overline {\cal A}  := \overline {\cal A}^* +     \frac{b} {\sqrt{n}} \, VV'$. Then, 
$\overline{\cal H}  = \overline {\cal D} + \overline {\cal A} $. 
The $IR \times JR$ matrix $ \overline{\mathcal{H}}_{(\alpha\gamma)}^*$ is
composed of $I \times J$ blocks of size $R \times R$ as follows
\begin{align*}
 \overline{\mathcal{H}}_{(\alpha\gamma)}^* & =
  \left[  \frac 1 {\sqrt{n}}   \E(- \partial_{z^2} \ell_{ij})    \gamma^0_j \alpha^{0 \prime}_i \right]_{i \in {\bf I}, j \in {\bf J}} ,
\end{align*}
and similarly we have blocks for the   $(I+J)R \times (I+J)R$ matrix $VV'$
\begin{align*}
VV' &=\left(\begin{array}{cc}
   \left[        \alpha^0_i \alpha^{0 \prime}_{i^*} \right]_{i,i^* \in {\bf I} } 
   &
  \left[  -      \gamma^0_j \alpha^{0 \prime}_i \right]_{i \in {\bf I}, j \in {\bf J}}
  \\ 
  \left[     -   \alpha^0_i \gamma^{0 \prime}_j \right]_{j \in {\bf J}, i \in {\bf I}}  &
     \left[        \gamma^0_j \gamma^{0 \prime}_{j^*} \right]_{j,j^* \in {\bf J} } 
\end{array}\right) 
 =: \left(\begin{array}{cc}
   \left[   VV' \right]_{(\alpha \alpha)} 
   &
  \left[   VV' \right]_{(\alpha \gamma)} 
  \\ 
    \left[   VV' \right]_{(\gamma \alpha)}   &
     \left[   VV' \right]_{(\gamma \gamma)}  
\end{array}\right) .
\end{align*}
Let $b^* :=\min\{ b_{\min}, b\}$. 
For symmetric matrices $A$ and $B$ we write $A \geq B$ if $A-B$ is positive semi-definite. 
We have
\begin{align*}
     \overline {\cal A} -   \frac{b-b^*} {\sqrt{n}} \, VV'
     -    \frac{b^*} {\sqrt{n}}
     \left(\begin{array}{cc}
   \left[   VV' \right]_{(\alpha \alpha)} 
   &
0
  \\ 
   0   &
     \left[   VV' \right]_{(\gamma \gamma)}  
\end{array}\right)
     &=   \left(\begin{array}{cc}
   0
   &
 \overline{\mathcal{H}}_{(\alpha\gamma)}^*   - \frac{b^*} {\sqrt{n}} \left[   VV' \right]_{(\alpha \gamma)} 
  \\ 
 \left[ \overline{\mathcal{H}}_{(\alpha\gamma)}^* \right]'    - \frac{b^*} {\sqrt{n}}   \left[   VV' \right]_{(\gamma \alpha)}   &
   0
\end{array}\right) ,
\end{align*}
and since $V'V\geq 0$ (implying also $  \left[   VV' \right]_{(\alpha \alpha)} \geq 0$ and $  \left[   VV' \right]_{(\gamma \gamma)}  \geq 0$)
we thus have
\begin{align*}
     \overline {\cal A}  
     &\geq   \left(\begin{array}{cc}
   0
   &
 \overline{\mathcal{H}}_{(\alpha\gamma)}^*   - \frac{b^*} {\sqrt{n}} \left[   VV' \right]_{(\alpha \gamma)} 
  \\ 
 \left[ \overline{\mathcal{H}}_{(\alpha\gamma)}^* \right]'    - \frac{b^*} {\sqrt{n}}   \left[   VV' \right]_{(\gamma \alpha)}   &
   0
\end{array}\right) .
\end{align*}
Using this and $  \E[ - \partial_{\phi \phi'} \ell_{ij}] \geq 0$ we obtain
\begin{align*}
     \overline{\cal H}
     &=  \overline {\cal D} + \overline {\cal A} 
   \\  
     &\geq  
      \overline {\cal D}
      +  \left(\begin{array}{cc}
   0
   &
 \overline{\mathcal{H}}_{(\alpha\gamma)}^*   - \frac{b^*} {\sqrt{n}} \left[   VV' \right]_{(\alpha \gamma)} 
  \\ 
 \left[ \overline{\mathcal{H}}_{(\alpha\gamma)}^* \right]'    - \frac{b^*} {\sqrt{n}}   \left[   VV' \right]_{(\gamma \alpha)}   &
   0
\end{array}\right) 
 - 
 \underbrace{
 n^{-1} \sum_{i=1}^I \sum_{j=1}^J  \E[ - \partial_{\phi \phi'} \ell_{ij}]   \frac{\E(- \partial_{z^2} \ell_{ij}) - b^*  }  { \E(- \partial_{z^2} \ell_{ij})}
 }_{\geq 0}
 \\
 &=
 b^*
 \left(\begin{array}{cc} \text{diag}\left(
         \left[  \frac 1 {\sqrt{n}}  \sum_{i=1}^I     \gamma^0_i \gamma^{0 \prime}_i
         \right]_{j \in {\bf J}}   \right)
          & 0  \\0 & \left[  \frac 1 {\sqrt{n}}  \sum_{j=1}^J     \alpha^0_j \alpha^{0 \prime}_j
         \right]_{j \in {\bf J}} 
\end{array}\right) 
\\
 &= b^*  \left(\begin{array}{cc}  
          n^{-1/2} \, \mathbb{I}_I \otimes \gamma^{0 \prime} \gamma^0
          & 0  \\0 &  n^{-1/2} \, \mathbb{I}_J \otimes \alpha^{0 \prime} \alpha^0
\end{array}\right) 
   \geq c \, \mathbb{I}_{(I+J)R} ,
\end{align*}
wpa1 (with probability approaching one), where existence of $c>0$ is guaranteed by our strong factor Assumptions~\ref{ass:PanelA1}(v).
The result of the last display implies  that
\begin{align}
   \overline{\cal H}^{-1} \leq c^{-1}  \, \mathbb{I}_{(I+J)R}.
     \label{UpperBoundHinverseCrude}
\end{align}
We have thus obtained a spectral bound for $   \overline{\cal H}^{-1}$. This turns out to be the key step in the proof. 
The remainder of the proof is just a relatively straightforward expansion of $   \overline{\cal H}^{-1}$.
Namely, using $\overline{\cal H}  = \overline {\cal D} + \overline {\cal A} $ we find that
\begin{align*}
     \overline{\cal H}^{-1}
     &= \overline {\cal D}^{-1}
      -  \overline {\cal D}^{-1} \, \overline {\cal A} \,  \overline {\cal D}^{-1}
      +  \left[ \overline {\cal D}^{-1}  \,  \overline{\cal H}      \,  \overline {\cal D}^{-1}
      -   2  \overline {\cal D}^{-1}
      +      \overline{\cal H}^{-1}   \right]
     \\   
     &= \overline {\cal D}^{-1}
      -  \overline {\cal D}^{-1} \, \overline {\cal A} \,  \overline {\cal D}^{-1}
      +  \overline {\cal D}^{-1} \left( \overline{\cal H} - \overline {\cal D} \right) \overline{\cal H}^{-1} \,  \left( \overline{\cal H} - \overline {\cal D} \right) \,  \overline {\cal D}^{-1}
    \\  
     &= \overline {\cal D}^{-1}
      -  \overline {\cal D}^{-1} \, \overline {\cal A} \,  \overline {\cal D}^{-1}
      +  \overline {\cal D}^{-1} \, \overline {\cal A}  \,  \overline{\cal H}^{-1} \, \overline {\cal A} \,  \overline {\cal D}^{-1}
    \\
    &\leq   \overline {\cal D}^{-1}
      -  \overline {\cal D}^{-1} \, \overline {\cal A} \,  \overline {\cal D}^{-1}
      +  c^{-1} \overline {\cal D}^{-1} \, \overline {\cal A}^2  \overline {\cal D}^{-1} ,
\end{align*}
and therefore
\begin{align*}
     \left\|  \overline{\cal H}^{-1} -  \overline {\cal D}^{-1} \right\|_{\max}
     \leq 
      \left\|   \overline {\cal D}^{-1} \, \overline {\cal A} \,  \overline {\cal D}^{-1} \right\|_{\max}
      +     c^{-1}    \left\| \overline {\cal D}^{-1} \, \overline {\cal A}^2  \overline {\cal D}^{-1}  \right\|_{\max} .
\end{align*}
From the expressions for $\overline {\cal D}$ and $\overline {\cal A} $ above 
one finds that $\overline {\cal D}$ is block-diagonal with entries of order one,
and $ \left\|   \overline {\cal A}  \right\|_{\max} = {\cal O}(n^{-1/2})$, which implies 
$ \left\|   \overline {\cal A}^2  \right\|_{\max} = {\cal O}((I+J) n^{-1})  = {\cal O}(n^{-1/2})$.
The RHS of the last display
is therefore indeed of order $n^{-1/2}$.
\end{proof}

%%%%%%%%%%%%%%%%%%%%%%%%%%%%%%%%%%%%%%%%%%%
\subsection{Local Concavity of the Objective Function}
%%%%%%%%%%%%%%%%%%%%%%%%%%%%%%%%%%%%%%%%%%%

The consistency results for $\widehat \beta$ and $\widehat \phi(\beta)$ in 
Lemma~\ref{lemma:consistency} 
 provide   initial convergence rates,
 implying that we only need to consider a shrinking neighborhood around $\beta^0$ and $\phi^0$
 for the remaining asymptotic analysis. 
The following lemma shows that the objective function
 ${\cal L}(\beta,\phi)$ is strictly concave in such a local neighborhood.
 Later in the proof
this strict concavity will allow 
us to apply the general expansion results in
\cite{FW16}.

   Analogously to the expected incidental parameter Hessian  $\overline {\cal H}$ at the true parameters
that was discussed above, we now introduce the following notation for incidental parameter Hessian (without expectations, and not necessarily at the true parameters),
\begin{align*}
 {\cal H}(\beta,\phi) := - \partial_{\phi \phi'} {\cal L}(\beta,\phi) =  
\left(\begin{array}{cc}  \mathcal{H}_{(\alpha\alpha)}^*(\beta,\phi) & \mathcal{H}_{(\alpha\gamma)}^*(\beta,\phi)  \\ {[\mathcal{H}_{(\alpha\gamma)}^*(\beta,\phi)]}' & \mathcal{H}_{(\gamma\gamma)}^*(\beta,\phi)
\end{array}\right) 
+  \frac{b} {\sqrt{n}} \, V  V' .
   %\label{ExpectedHessian}
\end{align*}

\begin{lemma}
   \label{lemma:concavity}
   Let Assumption~\ref{ass:PanelA1} be satisfied,
   and let $r_\beta = r_{\beta,n} = o_P(1)$
   and $r_\phi = r_{\phi,n} = o_P(n^{1/4})$.    
   Then, ${\cal H}(\beta,\phi)$ is positive definite 
   for all $\beta \in {\cal B}(r_\beta,\beta^0)$ and 
   $\phi \in {\cal B}(r_\phi,\phi^0)$, wpa1, where
   ${\cal B}(r_\beta,\beta^0)$  is an $r_\beta$-ball around $\beta^0$ and ${\cal B}(r_\phi,\phi^0)$
is $r_\phi$-ball around $\phi^0$, both under the  Euclidian norm.    
   This implies that
 ${\cal L}(\beta,\phi)$ is strictly concave
 in $\phi \in {\cal B}(r_\phi,\phi^0)$ wpa1,
 for all $\beta \in {\cal B}(r_\beta,\beta^0)$.
\end{lemma}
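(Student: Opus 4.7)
The plan is to use the decomposition ${\cal H}(\beta,\phi) = H(\beta,\phi) + F(\beta,\phi)$ introduced just before Lemma~\ref{LowerBoundHessian} and to establish $\lambda_{\min}(H) > \|F\|$ wpa1, uniformly over the admissible balls, so that $\lambda_{\min}({\cal H}) \geq \lambda_{\min}(H) - \|F\| > 0$.

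First I would bound $\|F\|$. Since the only nonzero block of $F$ is $F_{(\alpha\gamma)it} = -\partial_z\ell_{it}(z_{it})/\sqrt{NT}$, its spectral norm equals the spectral norm of the $N \times T$ score matrix at the perturbed index, divided by $\sqrt{NT}$. A Taylor expansion $\partial_z\ell_{it}(z_{it}) = \partial_z\ell_{it}(z^0_{it}) + \partial_{z^2}\ell_{it}(\tilde z_{it})(z_{it}-z^0_{it})$ splits this into two pieces. The first piece has operator norm $O_P(N^{5/8})$ by Lemma~D.6 of Fern\'andez-Val and Weidner~(2013); the remainder is handled through the Frobenius bound, using boundedness of $\partial_{z^2}\ell$ from Assumption~\ref{ass:PanelA1}(v) together with $\sum_{i,t}(z_{it}-z^0_{it})^2 \leq 2 r_\beta^2 \sum_{it}\|X_{it}\|^2 + 2\|\alpha\gamma' - \alpha^0\gamma^{0\prime}\|_F^2$. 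The second term is controlled via $\|\alpha\gamma' - \alpha^0\gamma^{0\prime}\|_F \leq \|\alpha\|\,\|\gamma-\gamma^0\| + \|\gamma^0\|\,\|\alpha-\alpha^0\| = O_P(\sqrt{N}) \cdot o_P(\sqrt{N}) = o_P(N)$, invoking $r_\phi = o_P(\sqrt{N})$ and the strong-factor assumption. Together this gives $\|F\| = O_P(N^{-3/8}) + o_P(1) = o_P(1)$.

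Second, I would lower-bound $\lambda_{\min}(H)$ via Lemma~\ref{LowerBoundHessian} with $b = b_{\min}/2$. The clean bound $\lambda_{\min}(H) \geq b \min(\|\alpha\|^2,\|\gamma\|^2)/\sqrt{NT}$ applies when $h_{it}(\beta,\alpha_i\gamma_t) \geq b$ for every $(i,t)$; by Assumption~\ref{ass:PanelA1}(v), this pointwise condition holds whenever $z_{it} \in {\cal B}^0_\varepsilon$. If it held uniformly, then $\|\alpha\|^2$ and $\|\gamma\|^2$ would both be bounded below by a positive multiple of $N$ wpa1 (from the strong-factor assumption and $r_\phi = o_P(\sqrt{N})$), yielding $\lambda_{\min}(H) \geq b\, c \sqrt{N/T}$, which tends to a strictly positive limit and dominates $\|F\| = o_P(1)$.

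The main obstacle is that the Euclidean constraint $r_\phi = o_P(\sqrt{N})$ permits individual $|\alpha_i - \alpha_i^0|$ or $|\gamma_t - \gamma_t^0|$ to be of order $\sqrt{N}$, so for some indices $z_{it}$ can leave ${\cal B}^0_\varepsilon$ and the pointwise condition may fail. To handle this I would use the general row-wise bound embedded in Lemma~\ref{LowerBoundHessian}: for each $i$, $g^{(1)}_i \geq (b/\sqrt{NT}) \, (\sum_{t \in G_i}\gamma_t^2 - \sum_{t \in B_i}\gamma_t^2)$, where $G_i$ collects indices with $z_{it} \in {\cal B}^0_\varepsilon$ and $B_i$ its complement. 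Since $\beta$ is close to $\beta^0$ and $X_{it}$ is bounded, any $t \in B_i$ forces $|\alpha_i\gamma_t - \alpha_i^0\gamma_t^0| > \varepsilon/4$, so $\sum_{t \in B_i}\gamma_t^2$ is controlled by the row-$i$ contribution to $\|\alpha\gamma' - \alpha^0\gamma^{0\prime}\|_F^2 = o_P(N^2)$. Upgrading this to uniform-in-$i$ dominance $\sum_{t \in G_i}\gamma_t^2 > \sum_{t \in B_i}\gamma_t^2$ is the delicate step; combined with its symmetric column-wise counterpart and the bound on $\|F\|$, it yields $\lambda_{\min}({\cal H}) > 0$ wpa1 and hence strict concavity of $\phi \mapsto {\cal L}(\beta,\phi)$ on the specified ball.
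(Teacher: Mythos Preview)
Your proposal follows exactly the paper's route: decompose ${\cal H}=H+F$, apply Weyl's inequality $\lambda_{\min}({\cal H})\geq\lambda_{\min}(H)-\|F\|$, lower-bound $\lambda_{\min}(H)$ via Lemma~\ref{LowerBoundHessian}, and show $\|F_{(\alpha\gamma)}\|=o_P(1)$ through the same Taylor expansion of $\partial_z\ell_{it}(z_{it})$ around $z^0_{it}$ (Lemma~D.6 of Fern\'andez-Val and Weidner~(2013) for the score at the true index, Frobenius bound on the remainder, and $\|\alpha\gamma'-\alpha^0\gamma^{0\prime}\|_F=O_P(\sqrt{N})\|\phi-\phi^0\|$).

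The one divergence is your ``obstacle.''  The paper simply invokes the clean second bound of Lemma~\ref{LowerBoundHessian} with $b=b_{\min}$ and takes $h_{it}(\beta,\alpha_i\gamma_t)\geq b_{\min}$ (and, in the $\|F\|$ argument, $|\partial_{z^2}\ell_{it}|\leq b_{\max}$) for granted; it never discusses whether individual $z_{it}$ remain in ${\cal B}^0_\varepsilon$ when $\|\phi-\phi^0\|$ is only controlled in Euclidean norm.  Your concern is legitimate under the stated assumptions, and your proposed fallback to the general row-wise bound of Lemma~\ref{LowerBoundHessian} is more scrupulous than the paper's own treatment on this point --- though, as you note, making that fallback uniform in $i$ (and symmetrically in $t$) is the step that would still need to be completed, and the same issue affects the $b_{\max}$ bound you use in the $\|F\|$ estimate.
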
 

\begin{proof}[\bf Proof]
 Let   $\ell_{ij}(\beta,\pi_{ij}) :=\ell_{ij}(z_{ij})$, where $\pi_{ij}=\alpha'_i \gamma_j$ and  $z_{ij}=X_{ij}' \beta + \alpha'_i \gamma_j$.
Then, 
\begin{align*}
     {\mathcal{H}}_{(\alpha\alpha)}^*(\beta,\phi)
     &= \text{diag}\left( 
     \left[ \frac 1 {\sqrt{n}}  \sum_{j \in {\cal D}_i}  [- \partial_{z^2} \ell_{ij}(\beta,\pi_{ij}) ]    \gamma^0_j \gamma^{0 \prime}_j \right]_{i \in {\bf I}} \right),
     \\
          {\mathcal{H}}_{(\gamma \gamma)}^*(\beta,\phi)
     &= \text{diag}\left(
         \left[  \frac 1 {\sqrt{n}}  \sum_{i \in {\cal D}_j}  [- \partial_{z^2} \ell_{ij}(\beta,\pi_{ij}) ]    \alpha^0_j \alpha^{0 \prime}_j
         \right]_{j \in {\bf J}}   \right) ,
    \\
 {\mathcal{H}}_{(\alpha\gamma)}^*(\beta,\phi) & =
  \left\{  \frac 1 {\sqrt{n}}   [- \partial_{z^2} \ell_{ij}(\beta,\pi_{ij}) ]     \gamma^0_j \alpha^{0 \prime}_i 
  +  \frac 1 {\sqrt{n}} [-\partial_{z} \ell_{ij}(z_{ij})] \, \mathbb{I}_R \right\}_{i \in {\bf I}, j \in {\bf J}} ,
\end{align*}
We  decompose the Hessian into the contribution from the first
and from the second derivative of the log-likelihood,  namely  ${\cal H}(\beta,\phi) = H(\beta,\phi) + F(\beta,\phi)$,
where 
\begin{align*}
   F(\beta,\phi)  &= \left(\begin{array}{cc}
                                         0_{N \times N}    &   F_{(\alpha \gamma)}(\beta,\phi) 
                                         \\   
                                         {[  F_{(\alpha \gamma)}(\beta,\phi) ]'} &  0_{T \times T}
                               \end{array}\right) ,
                               &
     F_{(\alpha \gamma)}(\beta,\phi)  & =
  \left\{     \frac 1 {\sqrt{n}} [-\partial_{z} \ell_{ij}(z_{ij})] \, \mathbb{I}_R \right\}_{i \in {\bf I}, j \in {\bf J}}             .               
\end{align*}
Notice that $H(\beta,\phi) $ has the same structure as $\overline{\mathcal H}$.
Analogously  to the bound \eqref{UpperBoundHinverseCrude} derived in the 
proof of Lemma~\ref{lemma:HessianAdditive} we can thus show that
there exists a constant $c>0$ such that wpa1 we have,
for $\phi \in {\cal B}(r_\phi,\phi^0)$ 
 and $\beta \in {\cal B}(r_\beta,\beta^0)$,
\begin{align*}
   H(\beta,\phi)  \geq c  \, \mathbb{I}_{(I+J)R}.
%    \label{UpperBoundHinverseCrude}
\end{align*}
The new terms that need to be accounted for here are the first derivative terms $ F(\beta,\phi) $, which are zero in expectation
at the true parameter and therefore did not show up in our discussion of $\overline {\mathcal{H}}$ above.
The goal in the following is to show that $\|  F(\beta,\phi) \| = o_P(1)$,
or equivalently $\|  F_{(\alpha \gamma)}(\beta,\phi) \| = o_P(1)$,
 within the shrinking neighborhood 
of the true parameters. Here, $\|.\|$ refers to the spectral norm.
 
For ease of notation we consider $R=1$ in the remainder of this proof. 
Then,     $F_{(\alpha \gamma)ij}(\beta,\phi) = - \frac 1 {\sqrt{n}} \partial_{\pi} \ell_{ij}(\beta,\alpha'_i \gamma_j) $.
    A Taylor expansion gives
    \begin{align*}
       \partial_{\pi} \ell_{ij}(\beta,\alpha'_i \gamma_j)
       &=  \partial_{\pi} \ell_{ij}(\beta^0,\alpha^0_i \gamma^{0 \, \prime}_j)
          + (\beta-\beta^0)' 
          \partial_{\beta \pi} \ell_{ij}(\tilde \beta_{ij},\tilde \pi_{ij})
         +  (\alpha'_i \gamma_j - \alpha^0_i \gamma^{0 \, \prime}_j)
         \partial_{\pi^2} \ell_{ij}(\tilde \beta_{ij},\tilde \pi_{ij}) .
    \end{align*}
    The spectral norm of the $I\times J$ matrix with
    entries $ \partial_{\beta_k \pi} \ell_{ij}(\tilde \beta_{ij},\tilde \pi_{ij})$ is bounded
    by the Frobenius norm of this matrix, which is of order $\sqrt{n}$, since we assume uniformly bounded moments
    for $ \partial_{\beta_k \pi} \ell_{ij}(\tilde \beta_{ij},\tilde \pi_{ij})$.
     The spectral norm of the $I\times J$ matrix with
    entries $(\alpha'_i \gamma_j - \alpha^0_i \gamma^{0 \, \prime}_j)
         \partial_{\pi^2} \ell_{ij}(\tilde \beta_{ij},\tilde \pi_{ij})$
    is also bounded by the Frobenius   norm of this matrix,
    which equals $\sqrt{\sum_{ij} (\alpha'_i \gamma_j - \alpha^0_i \gamma^{0 \, \prime}_j)^2
        [ \partial_{\pi^2} \ell_{ij}(\tilde \beta_{ij},\tilde \pi_{ij})]^2}$  
    and thus bounded by     $b_{\max} \sqrt{\sum_{ij} (\alpha'_i \gamma_j - \alpha^0_i \gamma^{0 \, \prime}_j)^2 }
     = b_{\max} \| \alpha  \gamma' - \alpha^0 \gamma^{0 \prime} \|_F$.    We thus find
    \begin{align*}
       \left\| F_{(\alpha \gamma)ij}(\beta,\phi) \right\|
       &\leq
        \frac 1 {\sqrt{n}} \left(  \|   \partial_{\pi} \ell_{ij}  \|
               + {\cal O}_P(\sqrt{n})   \| \beta-\beta^0 \|  
               + b_{\max} \| \alpha  \gamma' - \alpha^0 \gamma^{0 \prime} \|_F
            \right)
       \\
        &=    {\cal O}_P( \frac 1 {\sqrt{n}}   I^{5/8} )
           +     {\cal O}_P(r_\beta) 
           +    {\cal O}_P( r_\phi /\sqrt{I})   
       \\
         &= o_P(1),    
    \end{align*}
    for $\phi \in {\cal B}(r_\phi,\phi^0)$ 
 and $\beta \in {\cal B}(r_\beta,\beta^0)$,    
    where we also used that 
    $ \| \alpha  \gamma' - \alpha^0 \gamma^{0 \prime} \|_F
     = {\cal O}_P(\sqrt{I}) \| \phi - \phi^0 \|  $.
    
    Combining the result in the last display with \eqref{UpperBoundHinverseCrude}
    we find that there exists a constant $c>0$ such that wpa1 we have,
for $\phi \in {\cal B}(r_\phi,\phi^0)$ 
 and $\beta \in {\cal B}(r_\beta,\beta^0)$,
\begin{align*}
   {\cal H}(\beta,\phi)  \geq c   \, \mathbb{I}_{(I+J)R}.
\end{align*}
We have thus shown that   ${\cal L}(\beta,\phi)$ is indeed strictly concave
(or that  $-{\cal L}(\beta,\phi)$ is strictly convex)  within this shrinking neighborhood.
\end{proof}

%%%%%%%%%%%%%%%%%%%%%%%%%%%%%%%%%%%%%%%%%%%%%%%%%%%%%%%%%%%%%%%%%%
\subsection{Stochastic Expansion}
%%%%%%%%%%%%%%%%%%%%%%%%%%%%%%%%%%%%%%%%%%%%%%%%%%%%%%%%%%%%%%%%%% 

Once we have the consistency result of Lemma~\ref{lemma:consistency}
and the local strict concavity result of Lemma~\ref{lemma:concavity}, then the derivation of the stochastic expansion
of the fixed effect estimators $\widehat \beta$ and $\widehat \delta$ does not rely on
the specific single index and interactive fixed effect structure of our model.
Some of the conceptual issues indeed become more transparent when ignoring that structure.
Therefore,
in this subsection, let $\ell_{ij}(\beta,\alpha_i,\gamma_j) :=  \ell_{ij}(X_{ij}' \beta + \alpha'_i \gamma_j)$
and   $\Delta_{ij}(\beta, \alpha_i , \gamma_j) := \Delta_{ij}(\beta,\pi_{ij})$. 
Remember that our fixed effect estimators $\widehat \beta$
and $\widehat \gamma$ maximize
the objective function
$$   \mathcal{L}(\beta,\phi)
      =  n^{-1/2}
        \left[
            \sum_{(i,j) \in {\mathcal{D}}}  \ell_{ij}(\beta,\alpha_i,\gamma_j)
         +   \frac b 2  \phi' V V' \phi \right],$$
where $\phi=[(\alpha_i')_{i \in {\bf I} }, (\gamma_j')_{j \in {\bf J}} ]'$.
The APE is $\delta^0 =  \Delta(\beta^0,\phi^0) = \frac 1 n \sum_{(i,j) \in {\mathcal{D}}}    \Delta_{ij}(\beta^0, \alpha^0_i , \gamma^0_j)$,
and the corresponding plug-in estimator reads
$\widehat \delta =   \Delta (\widehat \beta, \widehat \phi)$.
For partial derivatives of $\ell_{ij}(\beta,\alpha_i,\gamma_j) $
and $\Delta (\widehat \beta, \widehat \phi)$ we use superscripts in the following,
 expectations are always conditional on $\phi$ and are indicated by a bar,
and arguments are omitted when evaluated at the true parameters. 
 For example,
$  \overline  \ell_{ij}^{\, \alpha_i \alpha_i}  $ is the $d_\alpha \times d_\alpha$ expected Hessian  matrix of $\ell_{ij}(\beta,\alpha_i,\gamma_j)$ with respect to $\alpha_i$ 
evaluated at the true parameters. This is the notation also used in \cite{ARE}, but here the $\alpha_i$ and $\gamma_j$ are vectors
of length $d_\alpha$ and $d_\gamma$, respectively. For our interactive fixed effect model we have $d_\alpha = d_\gamma =R$, but this is
not used in the rest of this subsection. The advantage of this generality is that, for example, the following formulas are also applicable 
to models where in addition to the interactive effects we include separate additive effects  in the single index.

It is convenient to make the log-likelihood   information-orthogonal between $\beta$ and
 the incidental parameters. This can be achieved by the transformation\footnote{
 This  transformation corresponds to the reparameterization
 $\alpha_i^* = \alpha_i -  \xi^{(\alpha)}_i  \beta$ 
   and $\gamma_j^* = \gamma_j  -  \xi^{(\gamma)}_j \beta$.
 The log-likelihood with respect to these parameters is
  $ \ell_{ij}(\beta,    \alpha_i^* +   \xi^{(\alpha)}_i  \beta ,     \gamma_j^* +  \xi^{(\gamma)}_j \beta ) =: \ell^*_{ij}(\beta,   \alpha_i^*  , \gamma_j^* )$,
 which gives  our definition of $\ell^*_{ij}$  after
 renaming $(\alpha_i^* ,  \gamma_j^*)$
  as $(\alpha_i, \gamma_j)$ again.
 }
\begin{align*}
    \ell^*_{ij}(\beta,\alpha_i,\gamma_j)  &:=
      \ell_{ij}(\beta,  \alpha_i   +  \xi^{(\alpha)}_i  \beta , \gamma_j  +  \xi^{(\gamma)}_j \beta) ,
\\
    \Delta^*_{ij}(\beta,\alpha_i,\gamma_j)  &:=
      \Delta_{ij}(\beta,  \alpha_i   +  \xi^{(\alpha)}_i  \beta , \gamma_j  +  \xi^{(\gamma)}_j \beta) ,
\end{align*} 
where the $d_\alpha \times d_{\beta}$ matrices $\xi^{(\alpha)}_i$, and
the $d_\gamma \times d_{\beta} $ matrices $\xi^{(\gamma)}_j $ are a solution to the system of equations
\begin{align*}
    \sum_{j \in {\cal D}_i} \left[ \overline \ell_{ij}^{\,  \alpha_i \beta}  
                      +   \overline  \ell_{ij}^{\, \alpha_i \alpha_i}    \xi^{(\alpha)}_i 
                      +   \overline  \ell_{ij}^{\, \alpha'_i \gamma_j}   \xi^{(\gamma)}_j \right] &= 0 ,
          \qquad i=1,\ldots,I ,\notag
     \\          
   \sum_{i \in {\cal D}_j } \left[ \overline \ell_{ij}^{\,  \gamma_j \beta}  
                      +     \overline  \ell_{ij}^{\, \gamma_j \alpha_i}   \xi^{(\alpha)}_i  
                      +     \overline  \ell_{ij}^{\, \gamma_j \gamma_j}  \xi^{(\gamma)}_j\right] &= 0 ,
          \qquad j=1,\ldots,J .
%     \label{eq:orth}     
\end{align*}
Analogously, let the $d_\alpha$-vectors $\psi^{(\alpha)}_i$ and the $d_\gamma$-vectors $\psi^{(\gamma)}_j$ be solutions
to the system of equations
\begin{align*}
    \sum_{j \in {\cal D}_i} \left[ \overline \Delta_{ij}^{\, \alpha_i}  
                      +    \overline  \ell_{ij}^{\, \alpha_i \alpha_i}   \psi^{(\alpha)}_i 
                      +     \overline  \ell_{ij}^{\, \alpha'_i \gamma_j}   \psi^{(\gamma)}_j   \right] &= 0 ,
          \qquad i=1,\ldots,I ,\notag
     \\          
   \sum_{i \in {\cal D}_j } \left[ \overline \Delta_{ij}^{\, \gamma_j}  
                      +   \overline  \ell_{ij}^{\, \gamma_j \alpha_i}    \psi^{(\alpha)}_i
                      +   \overline  \ell_{ij}^{\, \gamma_j \gamma_j}  \psi^{(\gamma)}_j   \right] &= 0 ,
          \qquad j=1,\ldots,J .
%     \label{eq:orth}     
\end{align*}
Finally, let
\begin{align*}
 \overline W &= - \, \frac 1 {\sqrt{n}} \,
                \left( \overline {\cal L}^{\beta \beta}
                     +  \overline {\cal L}^{\beta \phi} \; \overline {\cal H}^{-1} \;
              \overline {\cal L}^{\phi \beta} \right) 
            =   - \, \frac 1 {\sqrt{n}} \,
            \overline {\cal L}^{\, * \, \beta \beta}
            = \frac 1 n    \sum_{(i,j) \in {\mathcal{D}}}      \overline \ell_{ij}^{\, * \, \beta \beta} .
\end{align*}
The $d_\beta \times d_\beta$ matrix $\overline W_{\infty}$ defined in Assumption \eqref{th:BothEffects}
is simply the probability limit of $ \overline W$, that is, 
 $ \overline W_{\infty}   =  \EE \;   \overline W$ in main text notation.

\begin{theorem}[Stochastic Expansion for $\widehat \beta$ and $\widehat \delta$]
     \label{th:GeneralExpansion}
     Let Assumption~\ref{ass:PanelA1} be satisfied.
    We then have
    \begin{align*}
        \sqrt{n} \left( \widehat \beta - \beta^0 \right)  &= \overline W_{\infty}^{-1}
       \,  U + o_P(1) ,
    \end{align*}
    where the $d_\beta$-vector  $U$ has elements
    \begin{align*}
         U_k & :=   \frac 1 {\sqrt{n}}    \sum_{(i,j) \in {\cal D}}
         \left\{   \ell_{ij}^{\, * \, \beta_k} 
            -  \Ep \left[ \left(   \ell_{ij}^{\, * \, \beta_k \alpha_i}  \right)'
             \left( \sum_{h \in {\cal D}_i} \overline \ell_{ih}^{\, \alpha_i \alpha_i} \right)^{-1} \ell_{ij}^{\alpha_i}  \right]
             -  \Ep  \left[   \left(   \ell_{ij}^{\, * \, \beta_k \gamma_j}  \right)'
             \left( \sum_{h \in {\cal D}_j} \overline \ell_{hj}^{\, \gamma_j \gamma_j} \right)^{-1} \ell_{ij}^{\gamma_j}  \right]
             \right.
        \\
    & \qquad \qquad \qquad 
    + \frac 1 2 \,   \Ep \left[ 
   \left( \ell_{ij}^{\alpha_i} \right)'
       \left( \sum_{h \in {\cal D}_i} \overline \ell_{ih}^{\, \alpha_i \alpha_i} \right)^{-1} 
 \left(  \sum_{h \in {\cal D}_i}  \overline \ell_{ih}^{\, * \, \beta_k \alpha_i \alpha_i}  \right)
             \left( \sum_{h \in {\cal D}_i} \overline \ell_{ih}^{\, \alpha_i \alpha_i} \right)^{-1} \ell_{ij}^{\alpha_i}  \right] 
          \\
    &\qquad \qquad \qquad   \left.
    + \frac 1 2 \,   \Ep \left[ 
   \left( \ell_{ij}^{\gamma_j} \right)'
       \left( \sum_{h \in {\cal D}_j} \overline \ell_{hj}^{\, \gamma_j \gamma_j} \right)^{-1} 
 \left(  \sum_{h \in {\cal D}_j}  \overline \ell_{hj}^{\, * \, \beta_k \gamma_j \gamma_j}  \right)
             \left( \sum_{h \in {\cal D}_j} \overline \ell_{hj}^{\, \gamma_j \gamma_j} \right)^{-1} \ell_{ij}^{\gamma_j}  \right]
             \right\}   .           
    \end{align*}
   Furthermore, if also Assumption~\ref{ass:PanelA2} holds, then
    \begin{align*}
    \widehat \delta - \delta^0 
      &= \left( \overline {\Delta}^{\,*\,\beta} \right)'
      (\widehat \beta - \beta^0) 
     + \frac 1 n    \sum_{(i,j) \in {\cal D}}
        \left\{
         \psi^{(\alpha) \prime}_i
          \ell_{ij}^{\, * \, \alpha_i} 
       +  \psi^{(\gamma) \prime}_j
          \ell_{ij}^{\, * \, \gamma_j}     
          \phantom{  \left( \sum_{h \in {\cal D}_i} \overline \ell_{ih}^{\, \alpha_i \alpha_i} \right)^{-1} }
          \right.
   \\ & \quad       
  - 
        \Ep \left[  
             \left(  \Delta_{ij}^{\alpha_i}  +  \ell_{ij}^{\, \alpha_i \alpha_i}  \psi^{(\alpha)}_i
                                   +  \ell_{ij}^{\, \alpha'_i \gamma_j}  \psi^{(\gamma)}_j    \right)'
             \left( \sum_{h \in {\cal D}_i} \overline \ell_{ih}^{\, \alpha_i \alpha_i} \right)^{-1} 
              \ell_{ij}^{\alpha_i}  
         \right]
    \\ & \quad                                     
             -  \Ep  \left[ 
               \left(  \Delta_{ij}^{\gamma_j}  
               +  \ell_{ij}^{\, \gamma_j \alpha_i}  \psi^{(\alpha)}_i
             +  \ell_{ij}^{\, \gamma_j \gamma_j}  \psi^{(\gamma)}_j 
                               \right)' 
             \left( \sum_{h \in {\cal D}_j} \overline \ell_{hj}^{\, \gamma_j \gamma_j} \right)^{-1} 
              \ell_{ij}^{\gamma_j}  
             \right]
    \\ & \quad   
    + \frac 1 2 \,   \Ep \left[ 
   \left( \ell_{ij}^{\alpha_i} \right)'
       \left( \sum_{h \in {\cal D}_i} \overline \ell_{ih}^{\, \alpha_i \alpha_i} \right)^{-1} 
 \left(  \sum_{h \in {\cal D}_i}  \overline \Delta_{ih}^{\, \# \,\alpha_i \alpha_i}  \right)
             \left( \sum_{h \in {\cal D}_i} \overline \ell_{ih}^{\, \alpha_i \alpha_i} \right)^{-1} \ell_{ij}^{\alpha_i}  \right] 
    \\ & \quad     \left.
    + \frac 1 2 \,   \Ep \left[ 
   \left( \ell_{ij}^{\gamma_j} \right)'
       \left( \sum_{h \in {\cal D}_j} \overline \ell_{hj}^{\, \gamma_j \gamma_j} \right)^{-1} 
 \left(  \sum_{h \in {\cal D}_j}  \overline \Delta_{hj}^{\,\# \,   \gamma_j \gamma_j}  \right)
             \left( \sum_{h \in {\cal D}_j} \overline \ell_{hj}^{\, \gamma_j \gamma_j} \right)^{-1} \ell_{ij}^{\gamma_j}  \right]
             \right\}         
               + o_P\left( 1/ \sqrt{n} \right) ,
    \end{align*}
    where the $d_\alpha \times d_\alpha$
    matrices $\overline \Delta_{ij}^{\, \# \,\alpha_i \alpha_i}$ 
     and the $d_\gamma \times d_\gamma$
    matrices $\overline \Delta_{ij}^{\,\# \,   \gamma_j \gamma_j} $ are given by
    \begin{align*}
           \overline \Delta_{ij}^{\, \# \,\alpha_i \alpha_i}
           &= \overline \Delta_{ij}^{\, \alpha_i \alpha_i}
          + \sum_{g=1}^{d_\alpha} \overline \ell_{ij}^{\, \alpha_i \alpha_i \alpha_{ig}}  \psi^{(\alpha)}_{ig}
           + \sum_{g=1}^{d_\gamma} \overline \ell_{ij}^{\, \alpha_i \alpha_i \gamma_{jg}}  \psi^{(\gamma)}_{jg}
           ,
         \\  
           \overline \Delta_{ij}^{\, \# \,\gamma_j \gamma_j}
           &= \overline \Delta_{ij}^{\, \gamma_j \gamma_j}
          + \sum_{g=1}^{d_\alpha} \overline \ell_{ij}^{\, \gamma_j \gamma_j \alpha_{ig}}  \psi^{(\alpha)}_{ig}
           + \sum_{g=1}^{d_\gamma} \overline \ell_{ij}^{\, \gamma_j \gamma_j \gamma_{jg}}  \psi^{(\gamma)}_{jg} .
    \end{align*}

\end{theorem}

\begin{proof}[\bf Proof]
  \# \underline{Expansion of $\widehat \beta$.}
    Our assumptions
    together with   results of
   Lemma~\ref{lemma:consistency}, \ref{lemma:HessianAdditive} and Lemma~\ref{lemma:concavity}
    guarantee that the conditions of
    Theorem~B.1 and Corollary~B.2 in \cite{FW16} 
    are satisfied, so that by applying that corollary we have
    \begin{align*}
         \sqrt{n} (\widehat \beta - \beta^0) = \overline W_{\infty}^{-1} U+ o_P(1) ,
    \end{align*}
    where $U= U^{(0)} + U^{(1)}$, with
   \begin{align*}
      U^{(0)} &=   
                     {\cal L}^\beta
                   +  \overline {\cal L}^{\beta \phi} \,  \overline {\cal H}^{-1} {\cal L}^\phi  
                   = {\cal L}^{* \, \beta} 
                   = \frac 1 {n^{1/2}}    \sum_{(i,j) \in {\mathcal{D}}}      \overline \ell_{ij}^{\, * \, \beta}  ,
   \nonumber \\
      U^{(1)} &=  
     \widetilde{\cal L}^{\beta \phi} \, \overline {\cal H}^{-1} {\cal L}^\phi 
        -   \overline {\cal L}^{\beta \phi} \,
                      \overline {\cal H}^{-1} \, \widetilde {\cal H} \,
                      \overline {\cal H}^{-1} \, {\cal L}^\phi  
    +  
             \frac 1 2 \,    \sum_{g=1}^{d_\phi}
           \left( \overline {\cal L}^{\beta \phi \phi_g}
               + \overline {\cal L}^{\beta \phi } \, \overline {\cal H}^{-1}
            \overline {\cal L}^{\phi \phi \phi_g} \right)
              [\overline {\cal H}^{-1} {\cal L}^\phi ]_g
              \overline {\cal H}^{-1} {\cal L}^\phi  
          \\
          &=     
           \widetilde{\cal L}^{* \, \beta \phi} \; \overline {\cal H}^{-1} {\cal L}^\phi 
    +  
             \frac 1 2 \,    \sum_{g=1}^{d_\phi}
          \overline {\cal L}^{\, * \, \beta \phi \phi_g} \,
              [\overline {\cal H}^{-1} {\cal L}^\phi ]_g \,
              \overline {\cal H}^{-1} {\cal L}^\phi   .
   \end{align*}
   Here, tilde symbols indicate deviations from expectation, for example, 
   $  \widetilde{\cal L}^{\beta \phi} =  {\cal L}^{\beta \phi} -   \overline{\cal L}^{\beta \phi}$,
   with $ \overline{\cal L}^{\beta \phi} = \Ep {\cal L}^{\beta \phi}$.
   Analogous to the proof of Theorem C.1 in \cite{FW16}, and also using the above Lemma~\ref{lemma:HessianAdditive}   again,  one can then show that the terms in $U^{(1)}$ only contribute asymptotic bias, namely
   \begin{align*}
      \widetilde{\cal L}^{* \, \beta \phi} \; \overline {\cal H}^{-1} {\cal L}^\phi 
      &=    \Ep \left[ \widetilde{\cal L}^{* \, \beta \phi} \; \overline {\cal H}^{-1} {\cal L}^\phi  \right] + o_P(1) 
   \\   
        &=  \Ep \left[ \widetilde{\cal L}^{* \, \beta \alpha} \; \left(   \overline {\cal H}_{(\alpha \alpha)}^* \right)^{-1} {\cal L}^\alpha  \right]
       + \Ep \left[ \widetilde{\cal L}^{* \, \beta \gamma} \; \left(   \overline {\cal H}_{(\gamma \gamma)}^* \right)^{-1} {\cal L}^\gamma  \right] + o_P(1) ,  
    \\  
      \frac 1 2 \,    \sum_{g=1}^{d_\phi}
          \overline {\cal L}^{\, * \, \beta \phi \phi_g} \,
              [\overline {\cal H}^{-1} {\cal L}^\phi ]_g \,
              \overline {\cal H}^{-1} {\cal L}^\phi 
       &=      \Ep \left[
         \frac 1 2 \,    \sum_{g=1}^{d_\phi}
          \overline {\cal L}^{\, * \, \beta \phi \phi_g} \,
              [\overline {\cal H}^{-1} {\cal L}^\phi ]_g \,
              \overline {\cal H}^{-1} {\cal L}^\phi 
              \right] + o_P(1)
          \\
          &=
             \Ep \left[
         \frac 1 2 \,    \sum_{g=1}^{I d_\alpha}
          \overline {\cal L}^{\, * \, \beta \alpha \alpha_g} \,
           \left[  \left(   \overline {\cal H}_{(\alpha \alpha)}^* \right)^{-1}   {\cal L}^\alpha \right]_g \,
            \left(   \overline {\cal H}_{(\alpha \alpha)}^* \right)^{-1}   {\cal L}^\alpha 
              \right] 
              \\
           & \quad
           +    \Ep \left[
         \frac 1 2 \,    \sum_{g=1}^{J d_\gamma}
          \overline {\cal L}^{\, * \, \beta \gamma \gamma_g} \,
           \left[  \left(   \overline {\cal H}_{(\gamma \gamma)}^* \right)^{-1}   {\cal L}^\gamma \right]_g \,
            \left(   \overline {\cal H}_{(\gamma \gamma)}^* \right)^{-1}   {\cal L}^\gamma 
              \right]   
              + o_P(1) . 
   \end{align*}
   In component notation we can now rewrite the above terms as follows (remember that we define the Hessian
   matrix $ \overline {\cal H}$ with a negative sign)
   \begin{align*}
            {\cal L}^\beta &  =    \frac 1 {\sqrt{n}}    \sum_{(i,j) \in {\cal D}}  \ell_{ij}^{\, * \, \beta_k} 
         \\   
       \Ep \left[ \widetilde{\cal L}^{* \, \beta \alpha} \; \left(   \overline {\cal H}_{(\alpha \alpha)}^* \right)^{-1} {\cal L}^\alpha  \right]
        &=
         -  \frac 1 {\sqrt{n}}    \sum_{(i,j) \in {\cal D}}
             \Ep \left[ \left(  \ell_{ij}^{\, * \, \beta_k \alpha_i}  \right)'
             \left( \sum_{h \in {\cal D}_i} \overline \ell_{ih}^{\, \alpha_i \alpha_i} \right)^{-1} \ell_{ij}^{\alpha_i}  \right] ,
      \\ 
       \Ep \left[ \widetilde{\cal L}^{* \, \beta \gamma} \; \left(   \overline {\cal H}_{(\gamma \gamma)}^* \right)^{-1} {\cal L}^\gamma  \right] 
     &= -  \frac 1 {\sqrt{n}}    \sum_{(i,j) \in {\cal D}}
      \Ep  \left[   \left(  \ell_{ij}^{\, * \, \beta_k \gamma_j}  \right)'
             \left( \sum_{h \in {\cal D}_j} \overline \ell_{hj}^{\, \gamma_j \gamma_j} \right)^{-1} \ell_{ij}^{\gamma_j}  \right] ,
 \end{align*}
 and            
   \begin{align*}
 & \Ep \left[
         \frac 1 2 \,    \sum_{g=1}^{I d_\alpha}
          \overline {\cal L}^{\, * \, \beta \alpha \alpha_g} \,
           \left[  \left(   \overline {\cal H}_{(\alpha \alpha)}^* \right)^{-1}   {\cal L}^\alpha \right]_g \,
            \left(   \overline {\cal H}_{(\alpha \alpha)}^* \right)^{-1}   {\cal L}^\alpha 
              \right] 
  \\ & \qquad
   =  \frac 1 2 \,   \frac 1 {\sqrt{n}}    \sum_{(i,j) \in {\cal D}}   \Ep \left[ 
   \left( \ell_{ij}^{\alpha_i} \right)'
       \left( \sum_{h \in {\cal D}_i} \overline \ell_{ih}^{\, \alpha_i \alpha_i} \right)^{-1} 
 \left(  \sum_{h \in {\cal D}_i}  \overline \ell_{ih}^{\, * \, \beta_k \alpha_i \alpha_i}  \right)
             \left( \sum_{h \in {\cal D}_i} \overline \ell_{ih}^{\, \alpha_i \alpha_i} \right)^{-1} \ell_{ij}^{\alpha_i}  \right] 
          \\
 &  \Ep \left[
         \frac 1 2 \,    \sum_{g=1}^{J d_\gamma}
          \overline {\cal L}^{\, * \, \beta \gamma \gamma_g} \,
           \left[  \left(   \overline {\cal H}_{(\gamma \gamma)}^* \right)^{-1}   {\cal L}^\gamma \right]_g \,
            \left(   \overline {\cal H}_{(\gamma \gamma)}^* \right)^{-1}   {\cal L}^\gamma 
              \right] 
              \\ & \qquad
 =  \frac 1 2 \,   \frac 1 {\sqrt{n}}    \sum_{(i,j) \in {\cal D}}     
       \Ep \left[ 
   \left( \ell_{ij}^{\gamma_j} \right)'
       \left( \sum_{h \in {\cal D}_j} \overline \ell_{hj}^{\, \gamma_j \gamma_j} \right)^{-1} 
 \left(  \sum_{h \in {\cal D}_j}  \overline \ell_{hj}^{\, * \, \beta_k \gamma_j \gamma_j}  \right)
             \left( \sum_{h \in {\cal D}_j} \overline \ell_{hj}^{\, \gamma_j \gamma_j} \right)^{-1} \ell_{ij}^{\gamma_j}  \right] .      
    \end{align*}
    Combining the above gives the expansion for $\widehat \beta - \beta^0$ in the theorem.
    
    \bigskip

   \# \underline{Expansion of $\widehat \delta$.}
   Again, our assumptions and lemmas guarantee that the conditions of
    Theorem~B.4 in \cite{FW16} are satisfied, so that by applying that theorem we have
\begin{align*}
    \widehat \delta - \delta 
%   &= {\color{red}  \delta - \delta^0 +}
   &=  \left(  \overline {\Delta}^\beta
      +
       \overline {\cal L}^{\beta \phi}
     \;   \overline {\cal H}^{-1} \;   \overline \Delta^\phi 
    \right)' (\widehat \beta - \beta^0) 
     + U^{(0)}_\Delta
     + U^{(1)}_\Delta  + o_P\left( 1/ \sqrt{n} \right)
     \\
     &= \left( \overline {\Delta}^{\,*\,\beta} \right)'
      (\widehat \beta - \beta^0) 
     + U^{(0)}_\Delta
     + U^{(1)}_\Delta  + o_P\left( 1/ \sqrt{n} \right) ,
   \end{align*}
   where
   \begin{align*}   
       U^{(0)}_\Delta &=  {\cal L}^{\phi \, \prime} \,    \overline {\cal H}^{-1} \, \overline \Delta^\phi ,
     \\ 
       U^{(1)}_\Delta
        &=   {\cal L}^{\phi \, \prime} \,    \overline {\cal H}^{-1} \, \widetilde \Delta^\phi
           -  {\cal L}^{\phi \, \prime} \,
            \overline {\cal H}^{-1}  \, \widetilde {\cal H} \,
              \overline {\cal H}^{-1}  \, \overline \Delta^\phi 
    + \ft 1 2 \,   {\cal L}^{\phi \, \prime} \overline {\cal H}^{-1}
  \left[ \overline \Delta^{\phi \phi} + 
         \sum_{g=1}^{d_\phi}
                   \overline {\cal L}^{{\phi \phi \phi_g}}
              \left(   \overline {\cal H}^{-1}   \,  \overline \Delta^\phi  \right)_g \right]
          \overline {\cal H}^{-1}  {\cal L}^\phi.
\end{align*}
Again, following the logic in the proof of  Theorem C.1 in \cite{FW16} one finds that
$  U^{(1)}_\Delta$ only contributes asymptotic bias, namely
\begin{align*}
       {\cal L}^{\phi \, \prime} \,    \overline {\cal H}^{-1} \, \widetilde \Delta^\phi
           -  {\cal L}^{\phi \, \prime} \,
            \overline {\cal H}^{-1}  \, \widetilde {\cal H} \,
              \overline {\cal H}^{-1}  \, \overline \Delta^\phi 
        &= \E\left[       
        {\cal L}^{\phi \, \prime} \,    \overline {\cal H}^{-1}
         \left(  \widetilde \Delta^\phi
           -   \widetilde {\cal H} \,
              \overline {\cal H}^{-1}  \, \overline \Delta^\phi  \right)
              \right] +  o_P\left( 1/ \sqrt{n} \right)
     \\
      &=   \E\left\{       
        {\cal L}^{\alpha \, \prime} \,      \left(   \overline {\cal H}_{(\alpha \alpha)}^* \right)^{-1} 
         \left[  \widetilde \Delta^\alpha
           -  \left(  \widetilde {\cal H} \,
              \overline {\cal H}^{-1}  \, \overline \Delta^\phi \right)_{(\alpha)} \right]   
              \right\} 
     \\ & \quad
     +  \E\left\{       
        {\cal L}^{\gamma \, \prime} \,      \left(   \overline {\cal H}_{(\gamma \gamma)}^* \right)^{-1} 
         \left[  \widetilde \Delta^\gamma
           -  \left(  \widetilde {\cal H} \,
              \overline {\cal H}^{-1}  \, \overline \Delta^\phi \right)_{(\gamma)} \right]   
              \right\}          
              +  o_P\left( 1/ \sqrt{n} \right)     ,
\end{align*}
and
\begin{align*}
   &  \ft 1 2 \,   {\cal L}^{\phi \, \prime} \overline {\cal H}^{-1}
  \left[ \overline \Delta^{\phi \phi} + 
         \sum_{g=1}^{d_\phi}
                   \overline {\cal L}^{{\phi \phi \phi_g}}
              \left(   \overline {\cal H}^{-1}   \,  \overline \Delta^\phi  \right)_g \right]
          \overline {\cal H}^{-1}  {\cal L}^\phi
   \\ 
     &=    \E\left\{     \ft 1 2 \,   {\cal L}^{\phi \, \prime} \overline {\cal H}^{-1}
  \left[ \overline \Delta^{\phi \phi} + 
         \sum_{g=1}^{d_\phi}
                   \overline {\cal L}^{{\phi \phi \phi_g}}
              \left(   \overline {\cal H}^{-1}   \,  \overline \Delta^\phi  \right)_g \right]
          \overline {\cal H}^{-1}  {\cal L}^\phi
            \right\}      +  o_P\left( 1/ \sqrt{n} \right)
   \\
    &=     \E\left\{     \ft 1 2 \,   {\cal L}^{\alpha \, \prime}    \left(   \overline {\cal H}_{(\alpha \alpha)}^* \right)^{-1}
  \left[ \overline \Delta^{\alpha \alpha} + 
         \sum_{g=1}^{d_\phi}
                   \overline {\cal L}^{{\alpha \alpha \phi_g}}
              \left(   \overline {\cal H}^{-1}   \,  \overline \Delta^\phi  \right)_g \right]
          \left(   \overline {\cal H}_{(\alpha \alpha)}^* \right)^{-1}  {\cal L}^\alpha
            \right\}     
      \\ & \quad
      +   \E\left\{     \ft 1 2 \,   {\cal L}^{\gamma \, \prime}    \left(   \overline {\cal H}_{(\gamma \gamma)}^* \right)^{-1}
  \left[ \overline \Delta^{\gamma \gamma} + 
         \sum_{g=1}^{d_\phi}
                   \overline {\cal L}^{{\gamma \gamma \phi_g}}
              \left(   \overline {\cal H}^{-1}   \,  \overline \Delta^\phi  \right)_g \right]
          \left(   \overline {\cal H}_{(\gamma \gamma)}^* \right)^{-1}  {\cal L}^\gamma
            \right\}        
       +  o_P\left( 1/ \sqrt{n} \right)     .
\end{align*}
   In component notation we can now rewrite the above terms as follows (again, remember that we define the Hessian
   matrix $ \overline {\cal H}$ with a negative sign)
 \begin{align*}
      & \E\left\{       
        {\cal L}^{\alpha \, \prime} \,      \left(   \overline {\cal H}_{(\alpha \alpha)}^* \right)^{-1} 
         \left[  \widetilde \Delta^\alpha
           -  \left(  \widetilde {\cal H} \,
              \overline {\cal H}^{-1}  \, \overline \Delta^\phi \right)_{(\alpha)} \right]   
              \right\} 
   \\ & \qquad
    =   - 
        \Ep \left[  
             \left(  \Delta_{ij}^{\alpha_i}  +  \ell_{ij}^{\, \alpha_i \alpha_i}  \psi^{(\alpha)}_i
                                   +  \ell_{ij}^{\, \alpha'_i \gamma_j}  \psi^{(\gamma)}_j    \right)'
             \left( \sum_{h \in {\cal D}_i} \overline \ell_{ih}^{\, \alpha_i \alpha_i} \right)^{-1} 
              \ell_{ij}^{\alpha_i}  
         \right] ,
    \\
     & \E\left\{       
        {\cal L}^{\gamma \, \prime} \,      \left(   \overline {\cal H}_{(\gamma \gamma)}^* \right)^{-1} 
         \left[   \Delta^\gamma
           -  \left(  \widetilde {\cal H} \,
              \overline {\cal H}^{-1}  \, \overline \Delta^\phi \right)_{(\gamma)} \right]   
              \right\}  
       \\ & \qquad          
      =   -  \Ep  \left[ 
               \left(  \Delta_{ij}^{\gamma_j}  
               +  \ell_{ij}^{\, \gamma_j \alpha_i}  \psi^{(\alpha)}_i
             +  \ell_{ij}^{\, \gamma_j \gamma_j}  \psi^{(\gamma)}_j 
                               \right)' 
             \left( \sum_{h \in {\cal D}_j} \overline \ell_{hj}^{\, \gamma_j \gamma_j} \right)^{-1} 
              \ell_{ij}^{\gamma_j}  
             \right] ,
 \end{align*}
  \begin{align*}
      &  \E\left\{     \ft 1 2 \,   {\cal L}^{\alpha \, \prime}    \left(   \overline {\cal H}_{(\alpha \alpha)}^* \right)^{-1}
  \left[ \overline \Delta^{\alpha \alpha} + 
         \sum_{g=1}^{d_\phi}
                   \overline {\cal L}^{{\alpha \alpha \phi_g}}
              \left(   \overline {\cal H}^{-1}   \,  \overline \Delta^\phi  \right)_g \right]
          \left(   \overline {\cal H}_{(\alpha \alpha)}^* \right)^{-1}  {\cal L}^\alpha
            \right\}     
   \\ & \qquad
   =  \frac 1 2 \,   \Ep \left[ 
   \left( \ell_{ij}^{\alpha_i} \right)'
       \left( \sum_{h \in {\cal D}_i} \overline \ell_{ih}^{\, \alpha_i \alpha_i} \right)^{-1} 
 \left(  \sum_{h \in {\cal D}_i}  \overline \Delta_{ih}^{\, \# \,\alpha_i \alpha_i}  \right)
             \left( \sum_{h \in {\cal D}_i} \overline \ell_{ih}^{\, \alpha_i \alpha_i} \right)^{-1} \ell_{ij}^{\alpha_i}  \right] ,
     \\
    &   \E\left\{     \ft 1 2 \,   {\cal L}^{\gamma \, \prime}    \left(   \overline {\cal H}_{(\gamma \gamma)}^* \right)^{-1}
  \left[ \overline \Delta^{\gamma \gamma} + 
         \sum_{g=1}^{d_\phi}
                   \overline {\cal L}^{{\gamma \gamma \phi_g}}
              \left(   \overline {\cal H}^{-1}   \,  \overline \Delta^\phi  \right)_g \right]
          \left(   \overline {\cal H}_{(\gamma \gamma)}^* \right)^{-1}  {\cal L}^\gamma
            \right\}             
    \\ & \qquad
      =    \frac 1 2 \,   \Ep \left[ 
   \left( \ell_{ij}^{\gamma_j} \right)'
       \left( \sum_{h \in {\cal D}_j} \overline \ell_{hj}^{\, \gamma_j \gamma_j} \right)^{-1} 
 \left(  \sum_{h \in {\cal D}_j}  \overline \Delta_{hj}^{\,\# \,   \gamma_j \gamma_j}  \right)
             \left( \sum_{h \in {\cal D}_j} \overline \ell_{hj}^{\, \gamma_j \gamma_j} \right)^{-1} \ell_{ij}^{\gamma_j}  \right]
                .
 \end{align*}  
  Combining the above gives the expansion for $\widehat \delta - \delta^0$ in the theorem.          
\end{proof}

%%%%%%%%%%%%%%%%%%%%%%%%%%%%%%%%%%%%%%
\subsection{Proof of Main Text Theorems} %Theorem~\ref{th:BothEffects} and~\ref{th:DeltaLimit}}
%%%%%%%%%%%%%%%%%%%%%%%%%%%%%%%%%%%%%%

\begin{proof}[\bf Proof of Theorem~\ref{th:BothEffects}]
According to Theorem~\ref{th:GeneralExpansion} we have
$\sqrt{n} \left( \widehat \beta - \beta^0 \right)   = \overline W_{\infty}^{-1} \,  U + o_P(1)$.
The first term in $U$ is   $\frac 1 {\sqrt{n}}    \sum_{(i,j) \in {\cal D}}     \ell_{ij}^{\, * \, \beta}$,
where in main text notation
we have $\ell_{ij}^{\, * \, \beta}  = \partial_z \ell_{ij} \tilde X_{ij}$.
Assumption~\ref{ass:PanelA1}(i) guarantees that $ \ell_{ij}^{\, * \, \beta}$ has mean zero
(a linear combination of scores evaluated at the true parameters) and is either independent across all $(i,j)$,
or only correlated within pairs $(i,j)$ and $(j,i)$. 
This term  therefore only contributes  variance, no  bias, to the limiting distribution of $\widehat \beta$.
Applying the Lindeberg-Levy CLT and the Cramer-Wold device we find
\begin{align*}
     \frac 1 {\sqrt{n}}    \sum_{(i,j) \in {\cal D}}     \ell_{ij}^{\, * \, \beta}
     &\rightarrow_d
     {\cal N}\left( 0,   \overline \Sigma_{\infty}   \right) ,
\end{align*}
where for the fully independent case (a) in Assumption~\ref{ass:PanelA1}(i),\footnote{
Here, we also used the Bartlett identity
$ \Ep \left(  \ell_{ij}^{\, * \, \beta} \right)   \left(  \ell_{ij}^{\, * \, \beta} \right)' =  \Ep \left(  - \ell_{ij}^{\, * \, \beta \beta} \right)  $.
}
\begin{align*}
      \overline \Sigma_{\infty}  
      &=   \plim_{I,J \to \infty}  \frac 1 {n}    \sum_{(i,j) \in {\cal D}}  \Ep \left(  \ell_{ij}^{\, * \, \beta} \right)   \left(  \ell_{ij}^{\, * \, \beta} \right)'
       =  \plim_{I,J \to \infty}  \frac 1 {n}    \sum_{(i,j) \in {\cal D}}  \Ep \left(  - \ell_{ij}^{\, * \, \beta \beta} \right)   
       =   \overline W_{\infty} .
\end{align*}
Thus, in case (a) the asymptotic variance of  $\widehat \beta$ simplifies to 
$W_{\infty}^{-1}  \overline \Sigma_{\infty} \overline W_{\infty}^{-1} = \overline W_{\infty}^{-1} $.
For case~(b) of Assumption~\ref{ass:PanelA1}(i) we have
\begin{align*}
      \overline \Sigma_{\infty}  
      &=   \plim_{I,J \to \infty}  \frac 1 {n}    \sum_{(i,j) \in {\cal D}} 
      \left[  \Ep \left(  \ell_{ij}^{\, * \, \beta} \right)   \left(  \ell_{ij}^{\, * \, \beta} \right)'
          +  \Ep \left(  \ell_{ij}^{\, * \, \beta} \right)   \left(  \ell_{ji}^{\, * \, \beta} \right)' \right]
    \\      
       &=  \plim_{I,J \to \infty}   \frac {1} {n}  \sum_{(i,j) \in {\cal D}}  \E \left\{ \left(
              \partial_{z}  \ell_{ij} \tilde X_{ij}  +  \partial_{z}  \ell_{ji} \tilde X_{ji} \right)    \partial_{z}  \ell_{ij}  \tilde X'_{ij} \right\}  ,
\end{align*}
where we use that $\ell_{ij}^{\, * \, \beta}  = \partial_z \ell_{ij} \tilde X_{ij}$. This is the formula for 
$ \overline \Sigma_{\infty}$ given in Theorem~\ref{th:GeneralExpansion}, and this formula covers both case (a)
and case (b), because independence across pairs $(i,j) \leftrightarrow (j,i)$ is of course a special case of dependence
across those pairs.

All the remaining terms in $U$ contribute asymptotic bias but no variance. We consider case~(a)
of Assumption~\ref{ass:PanelA1}(i) in the following, but one can easily verify that the additional bias terms
stemming from correlation across pairs $(i,j) \leftrightarrow (j,i)$ are asymptotically negligible, so that the same 
asymptotic bias expressions are obtained in case (b).

Using  $ \ell_{ij}^{\, * \, \beta_k \alpha_i} =  \gamma^0_j  \partial_{z^2} \ell_{ij} \tilde X_{ij,k}$ and
$ \overline \ell_{ih}^{\, \alpha_i \alpha_i} =  \gamma^0_j \gamma_j^{0 \prime}  \partial_{z^2} \overline \ell_{ij} $
and $\ell_{ij}^{\alpha_i} =   \gamma^0_j  \partial_{z} \ell_{ij}  $
we obtain
\begin{align*}
     \Ep \left[ \left(  \ell_{ij}^{\, * \, \beta_k \alpha_i}  \right)'
             \left( \sum_{h \in {\cal D}_i} \overline \ell_{ih}^{\, \alpha_i \alpha_i} \right)^{-1} \ell_{ij}^{\alpha_i}  \right]
      &=    \gamma^{0 \prime}_j  
             \left( \sum_{h \in {\cal D}_i} \gamma^0_h \gamma_h^{0 \prime}  \partial_{z^2} \ell_{ih} \right)^{-1}   \gamma^0_j 
           \,   \Ep \left( \partial_{z} \ell_{ij} \partial_{z^2} \ell_{ij}  \tilde X_{ij,k}  \right) ,
\end{align*}
and also using $\overline \ell_{ih}^{\, * \, \beta_k \alpha_i \alpha_i} = \gamma^0_j \gamma_j^{0 \prime}  
\Ep\left( \partial_{z^3} \ell_{ij} \tilde X_{ij,k} \right)$  and  the Bartlett identity $ \Ep  \ell_{ij}^{\alpha_i}
   \left( \ell_{ij}^{\alpha_i} \right)' = -  \overline \ell_{ij}^{\, \alpha_i \alpha_i}$,
\begin{align*}
  & \sum_{(i,j) \in {\cal D}}  \Ep \left[ 
   \left( \ell_{ij}^{\alpha_i} \right)'
       \left( \sum_{h \in {\cal D}_i} \overline \ell_{ih}^{\, \alpha_i \alpha_i} \right)^{-1} 
 \left(  \sum_{h \in {\cal D}_i}  \overline \ell_{ih}^{\, * \, \beta_k \alpha_i \alpha_i}  \right)
             \left( \sum_{h \in {\cal D}_i} \overline \ell_{ih}^{\, \alpha_i \alpha_i} \right)^{-1} \ell_{ij}^{\alpha_i}  \right] 
    \\
        &=      
        -   \sum_{i=1}^I  {\rm Tr} \left[  
       \left( \sum_{j \in {\cal D}_i} \overline \ell_{ij}^{\, \alpha_i \alpha_i} \right)^{-1} 
 \left(  \sum_{j \in {\cal D}_i}  \overline \ell_{ij}^{\, * \, \beta_k \alpha_i \alpha_i}  \right)
           \right] 
        =  - \sum_{(i,j) \in {\cal D}}    {\rm Tr} \left[   
       \left( \sum_{h \in {\cal D}_i} \overline \ell_{ih}^{\, \alpha_i \alpha_i} \right)^{-1} 
  \overline \ell_{ij}^{\, * \, \beta_k \alpha_i \alpha_i}             \right]   
  \\ &=
   - \sum_{(i,j) \in {\cal D}} 
     \gamma^{0 \prime}_j  
             \left( \sum_{h \in {\cal D}_i} \gamma^0_h \gamma_h^{0 \prime}  \partial_{z^2} \overline \ell_{ih} \right)^{-1}   \gamma^0_j 
           \,   \Ep \left(  \partial_{z^3}  \ell_{ij}   \tilde X_{ij} \right) ,
\end{align*}
and therefore
\begin{align*}
     & \frac 1 {\sqrt{n}}    \sum_{(i,j) \in {\cal D}}
          \left\{ -    \Ep \left[ \left(  \ell_{ij}^{\, * \, \beta_k \alpha_i}  \right)'
             \left( \sum_{h \in {\cal D}_i} \overline \ell_{ih}^{\, \alpha_i \alpha_i} \right)^{-1} \ell_{ij}^{\alpha_i}  \right]
             \right.
    \\ & \qquad \qquad  \quad
       \left.        
            +  \frac 1 2 \,   \Ep \left[ 
   \left( \ell_{ij}^{\alpha_i} \right)'
       \left( \sum_{h \in {\cal D}_i} \overline \ell_{ih}^{\, \alpha_i \alpha_i} \right)^{-1} 
 \left(  \sum_{h \in {\cal D}_i}  \overline \ell_{ih}^{\, * \, \beta_k \alpha_i \alpha_i}  \right)
             \left( \sum_{h \in {\cal D}_i} \overline \ell_{ih}^{\, \alpha_i \alpha_i} \right)^{-1} \ell_{ij}^{\alpha_i}  \right] 
             \right\}
  \\
  &= 
    -  \frac 1 {\sqrt{n}}   \sum_{(i,j) \in {\cal D}} 
     \gamma^{0 \prime}_j  
             \left( \sum_{h \in {\cal D}_i} \gamma^0_h \gamma_h^{0 \prime}  \partial_{z^2} \overline \ell_{ih} \right)^{-1}   \gamma^0_j 
           \,   \Ep \left(  \partial_{z} \ell_{ij} \partial_{z^2} \ell_{ij} \tilde X_{ij,k} + \frac 1 2 \partial_{z^3}  \ell_{ij}   \tilde X_{ij} \right)           
 \\
  &=  \sqrt{n} \; \frac{I} n         
\underbrace{
  \left[ -  \frac 1 I   \sum_{i=1}^I  \frac 1 {|{\cal D}_i|}  \sum_{j \in {\cal D}_i}
     \gamma^{0 \prime}_j  
             \left(\frac 1 {|{\cal D}_i|} \sum_{h \in {\cal D}_i} \gamma^0_h \gamma_h^{0 \prime}  \partial_{z^2} \overline \ell_{ih} \right)^{-1}   \gamma^0_j 
           \,   \Ep \left( \partial_{z} \ell_{ij} \partial_{z^2} \ell_{ij} \tilde X_{ij,k} + \frac 1 2 \partial_{z^3}  \ell_{ij}   \tilde X_{ij} \right)  
           \right]
       }_{\rightarrow_P  \overline B_\infty}    
       .
\end{align*}
Analogously we obtain
    \begin{align*}
      & \frac 1 {\sqrt{n}}    \sum_{(i,j) \in {\cal D}}
         \left\{    -  \Ep  \left[   \left(  \ell_{ij}^{\, * \, \beta_k \gamma_j}  \right)'
             \left( \sum_{h \in {\cal D}_j} \overline \ell_{hj}^{\, \gamma_j \gamma_j} \right)^{-1} \ell_{ij}^{\gamma_j}  \right]
             \right.
        \\
    \\ & \qquad \qquad  \quad
    + \left. \frac 1 2 \,   \Ep \left[ 
   \left( \ell_{ij}^{\gamma_j} \right)'
       \left( \sum_{h \in {\cal D}_j} \overline \ell_{hj}^{\, \gamma_j \gamma_j} \right)^{-1} 
 \left(  \sum_{h \in {\cal D}_j}  \overline \ell_{hj}^{\, * \, \beta_k \gamma_j \gamma_j}  \right)
             \left( \sum_{h \in {\cal D}_j} \overline \ell_{hj}^{\, \gamma_j \gamma_j} \right)^{-1} \ell_{ij}^{\gamma_j}  \right]
             \right\}   
     \\
   &=  \sqrt{n} \; \frac{J} n         
\underbrace{
  \left[ -  \frac 1 J   \sum_{j=1}^J  \frac 1 {|{\cal D}_j|}  \sum_{i \in {\cal D}_j}
     \alpha^{0 \prime}_i  
             \left(\frac 1 {|{\cal D}_j|} \sum_{h \in {\cal D}_j} \alpha^0_h \alpha_h^{0 \prime}  \partial_{z^2} \overline \ell_{hj} \right)^{-1}   \alpha^0_i 
           \,   \Ep \left( \partial_{z} \ell_{ij} \partial_{z^2} \ell_{ij} \tilde X_{ij,k} + \frac 1 2 \partial_{z^3}  \ell_{ij}   \tilde X_{ij} \right)  
           \right]
       }_{\rightarrow_P  \overline D_\infty}.                  
    \end{align*}
Combining the above gives the statement of the theorem.
\end{proof}

\begin{proof}[\bf Proof of Theorem~\ref{th:DeltaLimit}]
   Analogous to the proof of Theorem~\ref{th:BothEffects} we need to 
   translate the stochastic expansion of $\widehat \delta$ in Theorem~\ref{th:GeneralExpansion} into the notation used in the main text.
   We have
   $ \left( \overline {\Delta}^{\,*\,\beta} \right)' \rightarrow_P    \overline {(D_{\beta} \Delta)}_{\infty} $
   and
   $ \Psi_{ij} 
    = - \psi^{(\alpha) \prime}_i
          \gamma^0_j 
       -  \psi^{(\gamma) \prime}_j
          \alpha_i^0  $,
   and therefore   find for the variance terms that       
   \begin{align*}
        \underbrace{  \left( \overline {\Delta}^{\,*\,\beta} \right)'  \overline W_{\infty}^{-1}  \ell_{ij}^{\, * \, \beta}
        }_{= \overline {(D_{\beta} \Delta)}_{\infty} \overline W_\infty^{-1} \partial_{z} \ell_{ij} \tilde X_{ij}
     }
          +
          \underbrace{  \psi^{(\alpha) \prime}_i
          \ell_{ij}^{\, * \, \alpha_i} 
       +  \psi^{(\gamma) \prime}_j
          \ell_{ij}^{\, * \, \gamma_j}  
          }_{= -    \Psi_{ij}  
          \partial_{z} \ell_{ij}}
          &= \Gamma_{ij} .
   \end{align*}
   Analogous to the proof of Theorem~\ref{th:BothEffects} one
   can show for the bias terms that
    \begin{align*}
     & \frac 1 I    \sum_{(i,j) \in {\cal D}} \left\{
  - 
        \Ep \left[  
             \left(  \Delta_{ij}^{\alpha_i}  +  \ell_{ij}^{\, \alpha_i \alpha_i}  \psi^{(\alpha)}_i
                                   +  \ell_{ij}^{\, \alpha'_i \gamma_j}  \psi^{(\gamma)}_j    \right)'
             \left( \sum_{h \in {\cal D}_i} \overline \ell_{ih}^{\, \alpha_i \alpha_i} \right)^{-1} 
              \ell_{ij}^{\alpha_i}  
         \right]
         \right.
    \\ & \qquad  \qquad  \left.
    + \frac 1 2 \,   \Ep \left[ 
   \left( \ell_{ij}^{\alpha_i} \right)'
       \left( \sum_{h \in {\cal D}_i} \overline \ell_{ih}^{\, \alpha_i \alpha_i} \right)^{-1} 
 \left(  \sum_{h \in {\cal D}_i}  \overline \Delta_{ih}^{\, \# \,\alpha_i \alpha_i}  \right)
             \left( \sum_{h \in {\cal D}_i} \overline \ell_{ih}^{\, \alpha_i \alpha_i} \right)^{-1} \ell_{ij}^{\alpha_i}  \right] 
             \right\}         
             \rightarrow_P \overline B^\delta_{\infty} ,
    \end{align*}
  and
    \begin{align*}
     & \frac 1 J   \sum_{(i,j) \in {\cal D}} \left\{
             -  \Ep  \left[ 
               \left(  \Delta_{ij}^{\gamma_j}  
               +  \ell_{ij}^{\, \gamma_j \alpha_i}  \psi^{(\alpha)}_i
             +  \ell_{ij}^{\, \gamma_j \gamma_j}  \psi^{(\gamma)}_j 
                               \right)' 
             \left( \sum_{h \in {\cal D}_j} \overline \ell_{hj}^{\, \gamma_j \gamma_j} \right)^{-1} 
              \ell_{ij}^{\gamma_j}  
             \right]
         \right.
    \\ & \qquad  \qquad    \left.
    + \frac 1 2 \,   \Ep \left[ 
   \left( \ell_{ij}^{\gamma_j} \right)'
       \left( \sum_{h \in {\cal D}_j} \overline \ell_{hj}^{\, \gamma_j \gamma_j} \right)^{-1} 
 \left(  \sum_{h \in {\cal D}_j}  \overline \Delta_{hj}^{\,\# \,   \gamma_j \gamma_j}  \right)
             \left( \sum_{h \in {\cal D}_j} \overline \ell_{hj}^{\, \gamma_j \gamma_j} \right)^{-1} \ell_{ij}^{\gamma_j}  \right]
             \right\}         
               \rightarrow_P    \overline D^\delta_{\infty} .
    \end{align*}     
  Using the above and the expansion in Theorem~\ref{th:GeneralExpansion} gives the statement of Theorem~\ref{th:DeltaLimit}.
\end{proof}

\begin{proof}[\bf Proof of Theorem~\ref{th:bc}]
Under the conditions of Theorem \ref{th:BothEffects}, 
$\widehat B \,  \rightarrow_P \,  \overline B_{\infty} $, 
$\widehat D \,  \rightarrow_P \,  \overline D_{\infty}$, 
$  \widehat W \,  \rightarrow_P \, \overline W_{\infty}$, 
and $ \widehat \Sigma \,  \rightarrow_P \,  \overline \Sigma_{\infty}$.
If, in addition, the conditions of Theorem~\ref{th:DeltaLimit} hold, then also
$   \widehat V^{\delta} \, \rightarrow_P \,  \overline V_{\infty}^{\delta} $,
and the sample analogs of
$ \overline B^\delta_{\infty} $, 
$\overline D^\delta_{\infty}  $, 
$ \overline {(D_{\beta} \Delta)}_{\infty}$
are also consistent.
 These results follow from an identical argument to the proof of Lemma S.1 and Theorem 4.3    
     in the supplementary material of \cite{FW16}, which are based on a repeated application of  the weak law of large numbers and Slutsky's theorem.

 Once we have established the consistency of the  estimators of the bias terms, the asymptotic distributions of the analytical corrections $\widetilde \beta_{\rm ABC} $ and $\widetilde \delta_{\rm ABC} $ follow as corollaries of Theorems \ref{th:BothEffects} and \ref{th:DeltaLimit}, respectively. For example,
\begin{multline*}
\sqrt{n} \left( \widetilde \beta_{\rm ABC} - \beta^0 
         \right) = \sqrt{n} \left( \widehat \beta  - \frac{I} n \, \widehat W^{-1} \widehat B 
       - \frac{J} n \,  \widehat W^{-1} \widehat D - \beta^0 
                \right) \\ 
         =\sqrt{n} \left( \widehat \beta  - \beta^0 
      - \frac{I} n \,  W^{-1}  B 
       - \frac{J} n \,  W^{-1}  D 
         \right)  -  \frac{I} {\sqrt{n}} \, \left( \widehat W^{-1} \widehat B -   W^{-1}  B \right)
       - \frac{J} {\sqrt{n}} \, \left(  \widehat W^{-1} \widehat D -  W^{-1}  D \right) \\
        \;  \to_d \;
      {\cal N}( 0 ,   \;\overline W_{\infty}^{-1}  \overline \Sigma_{\infty} \overline W_{\infty}^{-1}),
\end{multline*}
by  Slutsky's theorem.
%
%  For the analytic correction $\widetilde \beta_{\rm ABC} 
%= \widehat \beta  
%      - \frac{I} n \, \widehat W^{-1} \widehat B 
%       - \frac{J} n \,  \widehat W^{-1} \widehat D
%$  
%we obtain
%$\sqrt{n} \left( \widetilde \beta_{\rm ABC} - \beta^0 
%         \right)
%          \;  \to_d \;
%      {\cal N}( 0 ,   \;\overline W_{\infty}^{-1}  \overline \Sigma_{\infty} \overline W_{\infty}^{-1})$ 
%by applying Slutsky's theorem, Theorem~\ref{th:BothEffects},
%and  the above stated consistency result for $\widehat W$, $ \widehat B $ and $\widehat D$.
%Analogously we obtain the result for $\widetilde \delta_{\rm ABC}$
%  by using    Theorem~\ref{th:DeltaLimit} as well.
%      
% \# Once the order of the bias terms is shown to be $1/J$ and $1/I$ asymptotically (as we have done in Theorem~\ref{th:BothEffects}
% and Theorem~\ref{th:DeltaLimit}), then the validity of Jackknife bias correction follows immediately,
% see \cite{Hahn:2004p882} and \cite{DJ2015} for one-way effects, and
% \cite{FW16} for two-way effects.
\end{proof}

\end{document}